\newtheorem{theorem}{Theorem}[section]
\newtheorem{lemma}[theorem]{Lemma}
\newtheorem{proposition}[theorem]{Proposition}
\newtheorem{corollary}[theorem]{Corollary}
\theoremstyle{definition}
\newtheorem{definition}[theorem]{Definition} 
\newtheorem{remark}[theorem]{Remark}
\newtheorem{example}[theorem]{Example}
\newcommand{\w}{{\rm w}}
\newcommand{\dd}{{\rm d}}
\newcommand{\C}{\mathcal{C}}
\newcommand{\inn}{\mathrm{in}}
\newcommand{\F}{\mathbb{F}}
\newcommand{\xx}{\mathbf{x}}
\newcommand{\yy}{\mathbf{y}}
\newcommand{\aaa}{\mathbf{a}}
\newcommand{\bb}{\mathbf{b}}
\newcommand{\ii}{\mathbf{i}}
\newcommand{\jj}{\mathbf{j}}
\newcommand{\kk}{\mathbf{k}}
\newcommand{\cc}{\mathbf{c}}
\newcommand{\rr}{\mathbf{r}}
\newcommand{\uu}{\mathbf{u}}
\newcommand{\vv}{\mathbf{v}}
\newcommand{\ones}{\mathbf{1}}
\newcommand{\J}{\mathcal{J}}
\newcommand{\coef}{{\rm Coeff}}
\DeclareMathOperator{\ini}{in} 
\DeclarePairedDelimiter\abs{\lvert}{\rvert}%
\DeclarePairedDelimiter\norm{\lVert}{\rVert}%
\let\oldabs\abs
\def\abs{\@ifstar{\oldabs}{\oldabs*}}
\let\oldnorm\norm
\def\norm{\@ifstar{\oldnorm}{\oldnorm*}}
\newcommand{\umberto}[1]{{\color{RoyalPurple} \sf $\star\star$ Umberto: [#1]}}
\newcommand{\rmv}[1]{}
\newcommand{\rodrigo}[1]{{\color{Magenta} \sf $\star\star$ Rodrigo: [#1]}}
\title{Duals of multiplicity codes}
\author[1]{Eduardo Camps Moreno}
\author[2]{Adri{\'a}n Fidalgo-D{\'i}az}
\author[1]{Hiram H. L{\'o}pez}
\author[2]{Umberto Mart{\'i}nez-Pe\~{n}as}
\author[2]{Diego Ruano}
\author[2]{Rodrigo San-Jos{\'e}}
\affil[1]{Department of Mathematics\\Virginia Tech, VA, USA\footnote{Emails: \{eduardoc, hhlopez\}@vt.edu}}
\affil[2]{IMUVa-Mathematics Research Institute\\University of Valladolid, Spain\footnote{Emails: \{adrian.fidalgo22, umberto.martinez, diego.ruano, rodrigo.san-jose\}@uva.es}}
\date{}
\begin{document}
\maketitle
\begin{abstract}
Multivariate multiplicity codes have been recently explored because of their importance for list decoding and local decoding.  Given a multivariate multiplicity code, in this paper, we compute its dimension using Gr{\"o}bner basis tools, its dual in terms of indicator functions, and explicitly describe a parity-check matrix. In contrast with Reed--Muller, Reed--Solomon, univariate multiplicity, and other evaluation codes, the dual of a multivariate multiplicity code is not equivalent or isometric to a multiplicity code (i.e., this code family is not closed under duality). We use our explicit description to provide a lower bound on the minimum distance for the dual of a multiplicity code.

\textbf{Keywords:} Evaluation codes, footprint bound, multiplicity codes, polynomial codes, polynomial ideal codes, Reed--Muller codes, Reed--Solomon codes, Schwartz-Zippel bound.

\end{abstract}

\section{Introduction} \label{sec intro}

Reed-Solomon and Reed--Muller codes are well-known families of linear error-correcting codes obtained by evaluating polynomials over a finite field. They are the most widely used families of algebraic codes, and many generalizations have been proposed over the years, such as folded Reed-Solomon codes \cite{guruswamiFoldedRSDef} or polynomial ideal codes \cite{guruswamiPolynomialIdealDef}. An interesting generalization is given by multiplicity codes, which were introduced in \cite{koppartyMultiplicityDef} and generalize both Reed-Solomon and Reed--Muller codes by also evaluating the derivatives of the polynomials. The first motivation to study these codes was that they provide a family that can be locally decodable efficiently with an asymptotic rate of $1$ \cite{koppartyMultiplicityDef}. Moreover, they can also be list decoded, achieving the optimal rate versus list-decodability tradeoff \cite{guruswami2013linear, koppartyListDecodingMultiplicity}. 

Duals and parity-check matrices of any family of codes are ubiquitous in many applications of coding theory: they are essential for syndrome-based decoding algorithms \cite{sakataplusvoting,duursmaMajority,fengraoMajority} and are required to construct secret sharing schemes \cite{chen2007secure} or quantum codes \cite{kkks,ballQECandtheirGeometries}, among others. Furthermore, in the case of Reed--Muller codes (multiplicity $ 1 $), their duals are used to estimate their weight enumerator and in one of the proofs that they achieve the Shannon capacity on different channels \cite[Sec. 4.3]{abbe2023reed}. 

More recently, the authors of~\cite{woottersListDecodingPolynomialIdeal} proved that the dual of a univariate multiplicity code is of the same type, modulo an isometry. 
Even though \cite{woottersListDecodingPolynomialIdeal} is currently withdrawn from the arXiv server, we keep the reference because we learned about univariate multiplicity codes from there. 

In this work, we explicitly compute the duals (and parity-check matrices) of general (multivariate) multiplicity codes. For multiplicity $ 1 $ (i.e., Reed--Solomon and Reed--Muller codes) the duals belong to the same family (modulo an isometry), and thus their parity-check matrices can be easily described. This holds even for more general families of (multiplicity $ 1 $) multivariate evaluation codes \cite{lopez2021dual}. For higher multiplicities, similar results can be obtained for the case of $1$ variable (i.e., univariate multiplicity codes, see \cite{woottersListDecodingPolynomialIdeal}). Surprisingly, we show that this ``closedness under duality'' does no longer hold for multiplicity $ > 1 $ and more than $ 1 $ variable (see, e.g., Example \ref{ex:dualnoesmult} and Theorem \ref{th dual multiplicity code}), revealing an unexpected behaviour of duals of multiplicity codes, compared to previous evaluation codes.

We use our explicit description of duals of multiplicity codes to lower bound their minimum distance (see Proposition~\ref{prop lower bound dist mult SZ dual}).

The organization is as follows. In Section \ref{sec preliminaries}, we give preliminaries on Hasse derivatives and multiplicities.  In Section \ref{sec hermite}, we construct Hermite interpolation polynomials, whose coefficients will be used to express duals. In Section \ref{sec dimension}, we compute the dimensions of multiplicity codes, which will be used for duality. In Section \ref{sec dual}, we explicitly compute duals of multiplicity codes. For the convenience of the reader, we added Subsections \ref{subsec m=1 dual} and \ref{subsec duals m=r=2}, where we explicitly compute duals for the univariate case (Proposition \ref{p:dualm=1}) and the case of $ 2 $ variables and multiplicity $ 2 $ (Corollary \ref{cor dual multi m=2} and Proposition \ref{p:basedualm=2}), respectively. Then in Subsection \ref{subsec duals general}, we explicitly compute duals in general (Theorems \ref{th dual multiplicity code} and \ref{t:dualcomoeval}). 
At the end of Subsection \ref{subsec duals general}, we give a lower bound on the minimum distance of duals of multivariate multiplicity codes (Proposition \ref{prop lower bound dist mult SZ dual}).

\section{Preliminaries} \label{sec preliminaries}
In this section, we introduce notation and the main relevant results from the literature.

\textbf{Basic notation.} Throughout this manuscript, $ \F $ denotes an arbitrary field and $ \F_q $ the finite field of size $ q $, which is a power of a prime. We denote by $ \F[\xx] = \F[x_1, \ldots, x_m] $ the ring of polynomials in the $ m $ variables $ x_1, \ldots, x_m $ with coefficients in $ \F $, and we denote by $ \F[\xx]_{< k} $ and $ \F[\xx]_{\leq k} $ the sets of polynomials in $ \F[\xx] $ of degree less than $ k $ and at most $ k $, respectively. We set $ \mathbb{N} = \{ 0,1,2, \ldots \} $, and for $ \ii = (i_1, \ldots, i_m) \in \mathbb{N}^m $, we denote $ \xx^\ii = x_1^{i_1} \cdots x_m^{i_m} $ and $ |\ii| = i_1 + \cdots + i_m $. Note that $\deg(\xx^\ii) = |\ii|$. For a monomial ordering $ \prec $, we denote by $ \inn_{\prec}(f) $ (or just $ \inn(f) $) the initial or leading monomial of $ f \in \F[\xx] $. We denote by $ \coef(f,\xx^\ii) \in \F $ the coefficient of $ f \in \F[\xx] $ in the monomial $ \xx^\ii $. We denote by $ \langle A \rangle $ the ideal generated by $ A $ in a ring, and $ \langle A \rangle_\F $ the vector space over $ \F $ generated by $ A $. For integers $ m \leq n $, we denote $ [m,n] = \{ m,m+1, \ldots, n\} $ and $ [n] = [1,n] $. 

For a code $ \C \subseteq \F^{tn} $, we will consider the $ t $-folded Hamming metric in $ \F^{tn} $, i.e., the Hamming metric of length $ n $ over the alphabet $ \F^t $. More precisely, we define
$$ \w (\cc_1, \ldots, \cc_n) = |\{ i \in [n] : \cc_i \neq \mathbf{0} \}|, $$
for $ \cc_1, \ldots , \cc_n \in \F^t $. We then define the folded distance as $ \dd(\cc,\cc^\prime) = \w(\cc - \cc^\prime) $, for $ \cc,\cc^\prime \in \F^{tn} $, and similarly for the minimum folded distance $ \dd(\C) $ of a code $ \C \subseteq \F^{tn} $.

For a code $ \C \subseteq \F^{tn} $, we define the classical Euclidean dual as \[ \C^\perp = \{ \cc \in \F^{tn} : \cc \cdot \cc^\prime = 0, \textrm{ for all } \cc^\prime \in \C \},\]
where
$\cc \cdot \cc^\prime = (c_1,\ldots,c_{tn}) \cdot (c^\prime_1, \ldots, c^\prime_{tn}) = \sum_{i=1}^{tn} c_ic^\prime_i $.

\subsection{Multiplicity codes}
In this work, we consider Hasse derivatives to count multiplicities. 

\begin{definition}[\textbf{Hasse derivative} \cite{hasse}] \label{def Hasse derivative}
Take $ f(\xx) \in \F[\xx] $. Given another family of independent variables $ \yy = (y_1, \ldots, y_m) $, the polynomial $ f(\xx + \yy) \in \F[\xx,\yy] $ can be written uniquely as
$$ f(\xx + \yy) = \sum_{\ii \in \mathbb{N}^m} f^{(\ii)}(\xx) \yy^\ii, $$
for some polynomials $ f^{(\ii)}(\xx) \in \F[\xx] $, for $ \ii \in \mathbb{N}^m $. For $ \ii \in \mathbb{N}^m $, we define the $ \ii $th Hasse derivative of $ f(\xx) $ as the polynomial $ f^{(\ii)}(\xx) \in \F[\xx] $.
\end{definition}

Using the Taylor expansion of a polynomial, it is immediate to see that the classic $ \ii $th derivative of $ f(\xx) $ is given in terms of the Hasse derivative by
$$ \frac{\partial^{|\ii|}}{\partial x_1^{i_1} \cdots \partial x_m^{i_m}} f(\xx) = \ii ! f^{(\ii)}(\xx), $$
where $ \ii! = i_1 ! \cdots i_m! $. Therefore, in positive characteristic, classical derivatives can be recovered from Hasse derivatives, but not vice versa. Hence, Hasse derivatives contain (strictly) more information about the polynomial (in characteristic $ 0 $, they contain the same information). If $ \ii \in [0,1]^m $, then both derivatives coincide over any field since $ \ii ! = 1 $. We now present the Leibniz rule for Hasse derivatives.

\begin{proposition} [\textbf{Leibniz rule}] \label{prop leibniz}
For $ f,g \in \F[\xx] $, we have
$$ (fg)^{(\ii)}(\xx) = \sum_{\jj + \kk = \ii} f^{(\jj)}(\xx) g^{(\kk)}(\xx) . $$
\end{proposition}
\begin{proof}
By the definition of Hasse derivatives,
\begin{equation*}
\begin{split}
(fg)(\xx+\yy) & = \left( \sum_{\jj \in \mathbb{N}^m} f^{(\jj)}(\xx) \yy^\jj \right) \left( \sum_{\kk \in \mathbb{N}^m} g^{(\kk)}(\xx) \yy^\kk \right) 
  = \sum_{\ii \in \mathbb{N}^m} \left( \sum_{\jj + \kk = \ii} f^{(\jj)}(\xx) g^{(\kk)}(\xx) \right) \yy^\ii .
\end{split}
\end{equation*}
Thus, we obtain the result.
\end{proof}

Multiplicity codes were introduced in \cite[Def. 3.4]{koppartyMultiplicityDef}. We will consider them for evaluation points in a grid or Cartesian product $ S = S_1 \times \cdots \times S_m $, where every $ S_i \subseteq \F $ is finite. In this work, all codes are linear (over $ \F $) unless otherwise stated.

\begin{definition} [\textbf{Multiplicity codes} \cite{koppartyMultiplicityDef}]
Take $ n = |S| $ and $ t = \binom{m+r-1}{m} $. We define the multiplicity code on $ S $ of multiplicity $ r $ and degree $ k $ as
$$ \mathcal{M}(S,r,k) = \left\lbrace \left( \left( f^{(\ii)}(\aaa) \right)_{|\ii| < r} \right)_{\aaa \in S} : f \in \F[\xx]_{<k} \right\rbrace \subseteq (\F^t)^n . $$
\end{definition}
For easy notation, we will sometimes omit the inner parentheses. The parameter $r$ bounds the derivatives, and $k$ bounds the degrees of the evaluating polynomials. When the number of variables is $m=1$ and $k < | S_1 |$, then $k$ is the dimension of the multiplicity code. But in general, $k$ does not represent the dimension.
\begin{example}
Assume $\F_q = \{a_1, \ldots, a_q \}$. The multiplicity code $\mathcal{M}(\F_q,1,k)$ is the classical Reed-Solomon code
\[\mathcal{M}(\F_q,1,k) = \{\left(f\left(a_1\right), \ldots, f\left(a_q\right) \right) : f \in \F_q[x]_{<k} \} \subseteq \F_q^q . \]
We can also see that
\[\mathcal{M}(\F_q,2,k) = \left\{\left((f\left(a_1\right),f^{(1)}\left(a_1\right)), \ldots, (f\left(a_q\right),f^{(1)}\left(a_q\right)) \right) : f \in \F_q[x]_{<k} \right\} \subseteq (\F_q^2)^q . \]
\end{example}

A method to lower bound the minimum distance of multiplicity codes over grids is the Schwartz-Zippel bound with multiplicities given in \cite[Lemma 8]{extensions}.

\begin{definition} [\textbf{Multiplicity}] \label{def multi codes}
We define the multiplicity of $ \aaa \in \F^m $ in $ f \in \F[\xx] $, denoted $ m(f,\aaa) $, as the largest $ r \in \mathbb{N} $ such that $ f^{(\ii)}(\aaa) = 0 $, for all $ \ii \in \mathbb{N}^m $ such that $ |\ii| < r $. 
\end{definition}

\begin{lemma} [\textbf{Schwartz-Zippel} \cite{extensions}] \label{lemma SZ bound}
If $ s = |S_1| = \cdots = |S_m| $, then for any $ f \in \F[\xx] $,
$$ \sum_{\aaa \in S} m(f,\aaa) \leq \deg(f) s^{m-1}. $$
\end{lemma}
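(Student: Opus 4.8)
The statement to prove is: if $s = |S_1| = \cdots = |S_m|$, then for any $f \in \F[\xx]$,
$$\sum_{\aaa \in S} m(f,\aaa) \leq \deg(f)\, s^{m-1}.$$

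The plan is to argue by induction on the number of variables $m$. For the base case $m = 1$, I would use the standard fact that a univariate polynomial $f \in \F[x]$ of degree $d$, having a zero of multiplicity $m(f,a)$ at each $a \in S_1$, satisfies $\prod_{a \in S_1}(x - a)^{m(f,a)} \mid f$ (this follows because the factors $(x-a)^{m(f,a)}$ are pairwise coprime, and for Hasse derivatives the condition $f^{(i)}(a) = 0$ for all $i < r$ is equivalent to $(x-a)^r \mid f$). Comparing degrees gives $\sum_{a \in S_1} m(f,a) \le \deg(f) = \deg(f)\, s^0$, as desired.

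For the inductive step, suppose the result holds for $m-1$ variables. Write $S = S' \times S_m$ with $S' = S_1 \times \cdots \times S_{m-1}$, and expand $f$ as a polynomial in $x_m$ with coefficients in $\F[x_1,\dots,x_{m-1}]$, say $f = \sum_{j=0}^{D} g_j(x_1,\dots,x_{m-1})\, x_m^j$ where $D = \deg_{x_m}(f)$ and $g_D \neq 0$. The key point is to relate the multiplicity of $f$ at a point $\aaa = (\aaa', a_m)$ to the behavior of the $g_j$. Here I would distinguish two kinds of points $\aaa' \in S'$: those where $g_D(\aaa') \neq 0$, and those where $g_D(\aaa') = 0$. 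For the ``good'' slices where $g_D(\aaa') \ne 0$, restricting $f$ to $x_i = a_i'$ for $i < m$ yields a nonzero univariate polynomial in $x_m$ of degree exactly $D$, and a Leibniz-rule / substitution argument (Proposition \ref{prop leibniz}) shows $m(f,(\aaa',a_m))$ is at most the multiplicity of that univariate restriction at $a_m$; summing over $a_m \in S_m$ via the base case bounds $\sum_{a_m} m(f,(\aaa',a_m)) \le D$, and there are at most $s^{m-1}$ such $\aaa'$. For the ``bad'' slices where $g_D(\aaa') = 0$, one instead bounds $m(f,(\aaa',a_m)) \le m(g_D,\aaa') + (\text{something})$; more carefully, one factors out the largest power of $x_m - a_m$ and the largest power $e$ such that the $(x_m)$-degree-$D$ coefficient vanishes, reducing to smaller data and invoking the induction hypothesis on $g_D$ to control $\sum_{\aaa' \in S'} m(g_D, \aaa') \le \deg(g_D)\, s^{m-2}$.

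The main obstacle is bookkeeping the interaction between the two directions of differentiation: a mixed Hasse derivative $f^{(\ii)}$ with $\ii = (\ii', i_m)$ mixes derivatives in the $x'$-variables and in $x_m$, so the naive slicing argument must be set up so that the multiplicity condition at $\aaa$ genuinely controls both the restriction to the $x_m$-line and (on bad slices) the multiplicity of the leading coefficient $g_D$ at $\aaa'$. The cleanest way to handle this, which I would adopt, is to not slice but instead do a double count: write $N = \sum_{\aaa \in S} m(f,\aaa)$, pick for each $\aaa$ an index achieving the multiplicity, and push the analysis through the two regimes above so that the total splits as $N \le D \cdot s^{m-1} + \big(\deg(f) - D\big)\cdot s^{m-1}$; wait—more precisely, the bad-slice contribution should come out bounded by $\deg(g_D)\cdot s^{m-1} \le (\deg f - D)\, s^{m-1}$ after an extra factor-of-$s$ from summing over $a_m$, and combining gives $N \le D\, s^{m-1} + (\deg f - D)\, s^{m-1} = \deg(f)\, s^{m-1}$. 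Getting the precise constants right in the bad-slice estimate — in particular verifying that the ``extra'' multiplicity picked up on a bad slice is exactly controlled by $m(g_D,\aaa')$ and summing the $s$ from the $x_m$ direction — is the delicate part, and is where I would spend the most care.
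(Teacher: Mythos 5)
A preliminary remark: the paper does not prove this lemma at all --- it is quoted directly from \cite{extensions} (Lemma 8 there) --- so the relevant benchmark is that standard argument, whose induction-on-$m$ structure your outline does mirror. Your base case and your treatment of the ``good'' slices (those $\aaa'$ with $g_D(\aaa')\neq 0$) are fine: multiplicity does not decrease when you substitute $\xx'=\aaa'$, the restriction $f(\aaa',x_m)$ is then a nonzero univariate polynomial of degree exactly $D$, and the univariate case bounds that slice's contribution by $D$.

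The genuine gap is the bad slices, which is where the entire difficulty of the lemma lives, and there your sketch contains no argument. What you need, for \emph{every} slice, is the inequality $\sum_{a_m\in S_m} m(f,(\aaa',a_m))\le D+s\cdot m(g_D,\aaa')$; summing it over $\aaa'$ and applying the induction hypothesis to $g_D$ (which has $m-1$ variables and degree $\le \deg(f)-D$) gives $\deg(f)s^{m-1}$. The mechanism you gesture at --- ``factor out the largest power of $x_m-a_m$ and the largest power $e$ such that the degree-$D$ coefficient vanishes'' --- cannot deliver this: the vanishing of $g_D$ at $\aaa'$ is a condition in the $\xx'$-variables, unrelated to powers of $x_m-a_m$, and worse, on a bad slice the restriction $f(\aaa',x_m)$ may be identically zero (take $f=x_1x_2$ and $\aaa'=0$), so no manipulation carried out inside the slice can bound its multiplicity sum. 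The fix used in \cite{extensions} is to differentiate \emph{before} restricting: choose $\jj$ with $|\jj|=m(g_D,\aaa')$ and $g_D^{(\jj)}(\aaa')\neq 0$, set $h=f^{(\jj,\,0)}$ (Hasse derivative only in the $\xx'$-variables), and use two auxiliary facts, namely that taking the $\kk$th Hasse derivative lowers the multiplicity at any point by at most $|\kk|$, and that substituting $\xx'=\aaa'$ does not lower it. Then $h(\aaa',x_m)$ is a nonzero univariate polynomial of degree exactly $D$ (its coefficient of $x_m^D$ is $g_D^{(\jj)}(\aaa')\neq 0$) with $m\bigl(h(\aaa',\cdot),a_m\bigr)\ge m(f,(\aaa',a_m))-m(g_D,\aaa')$ for every $a_m$, and the univariate case gives exactly the per-slice inequality above. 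Without this derivative step (and without stating the two auxiliary multiplicity facts it relies on), your inductive step is not proved --- indeed you yourself flag precisely this estimate as the part left unverified --- so the proposal as written is an outline of the right strategy rather than a proof.
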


As a consequence, we obtain the following result, which is essentially \cite[Lemma 3.5]{koppartyMultiplicityDef}.

\begin{proposition}[\hspace{1pt}\cite{koppartyMultiplicityDef}] \label{prop lower bound dist mult SZ}
If $ s = |S_1| = \ldots = |S_m| $ and $ n = s^m = |S| $, then
$$ \dd(\mathcal{M}(S,r,k)) \geq s^m - \frac{(k-1)s^{m-1}}{r} = \left( 1 - \frac{k-1}{rs} \right) n . $$ 
\end{proposition}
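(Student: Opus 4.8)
The plan is to pick an arbitrary nonzero codeword of $\mathcal{M}(S,r,k)$ and bound its folded weight from below, using the Schwartz--Zippel bound with multiplicities. Since the evaluation map $f \mapsto \left(\left(f^{(\ii)}(\aaa)\right)_{|\ii|<r}\right)_{\aaa \in S}$ is $\F$-linear, every nonzero codeword is the image of some polynomial $f \in \F[\xx]_{<k}$, and such an $f$ is necessarily nonzero as a polynomial (the zero polynomial maps to the zero codeword), so $\deg(f)$ is well defined and satisfies $\deg(f) \leq k-1$.

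First I would translate the weight into a statement about multiplicities. For a fixed $\aaa \in S$, the corresponding block of coordinates $\left(f^{(\ii)}(\aaa)\right)_{|\ii|<r} \in \F^t$ is the zero vector if and only if $f^{(\ii)}(\aaa) = 0$ for all $\ii$ with $|\ii| < r$, which by Definition \ref{def multi codes} is exactly the condition $m(f,\aaa) \geq r$. Hence, writing $N = |\{ \aaa \in S : m(f,\aaa) \geq r \}|$, the folded weight of the codeword equals $\w = n - N$.

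Next I would bound $N$. Each point $\aaa$ counted by $N$ contributes at least $r$ to the sum $\sum_{\aaa \in S} m(f,\aaa)$, so $rN \leq \sum_{\aaa \in S} m(f,\aaa)$. Applying Lemma \ref{lemma SZ bound} (valid since $s = |S_1| = \cdots = |S_m|$) and then $\deg(f) \leq k-1$ gives $rN \leq \deg(f)\, s^{m-1} \leq (k-1)s^{m-1}$, hence $N \leq (k-1)s^{m-1}/r$. Combining with the previous step, the folded weight of the codeword is at least $n - (k-1)s^{m-1}/r = s^m - (k-1)s^{m-1}/r = \left(1 - \tfrac{k-1}{rs}\right) n$, where we used $n = s^m$. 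As this bound holds for every nonzero codeword and the code is linear, it bounds $\dd(\mathcal{M}(S,r,k))$ from below, which is the claim.

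There is essentially no obstacle here: the result is a direct corollary of Lemma \ref{lemma SZ bound}. The only points needing (minor) care are the equivalence between ``the $\aaa$-block of the codeword vanishes'' and ``$m(f,\aaa) \geq r$'', and the harmless remark that a nonzero codeword comes from a nonzero polynomial, so that its degree is defined and at most $k-1$.
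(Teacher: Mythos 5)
Your proof is correct and follows exactly the route the paper intends: the paper presents Proposition \ref{prop lower bound dist mult SZ} as a direct consequence of the Schwartz--Zippel bound with multiplicities (Lemma \ref{lemma SZ bound}), and your argument---nonzero codeword comes from a nonzero $f$ of degree at most $k-1$, a vanishing $\aaa$-block means $m(f,\aaa)\geq r$, and counting such points via $rN\leq \deg(f)s^{m-1}$---is precisely that derivation. Nothing further is needed.
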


%
%

\subsection{General multiplicity codes}
In order to compute duals of multiplicity codes, we will consider general multiplicity codes for decreasing or down sets of Hasse derivatives, as in \cite{geil2019}.

\begin{definition}[\hspace{1pt}\cite{geil2019}]
The set $ \J \subseteq \mathbb{N}^m $ is decreasing if whenever $ \ii \in \J $ and $ \jj \in \mathbb{N}^m $ are such that $ \jj \leq \ii $, it holds that $ \jj \in \J $, where $ \leq $ is the coordinatewise partial order in $ \mathbb{N}^m $. We will typically consider two families of decreasing sets: The classical multiplicity set
$ \J_r = \{ \ii \in \mathbb{N}^m : |\ii| < r \}$
for $ r \in \mathbb{N} $, and the coordinatewise or box multiplicity set
$\J_\rr = \{ \ii \in \mathbb{N}^m : \ii \leq \rr \}$
for $ \rr \in \mathbb{N}^m$. Figure~\ref{25.04.24} shows an example of a multiplicity set and a box multiplicity set.
\end{definition}

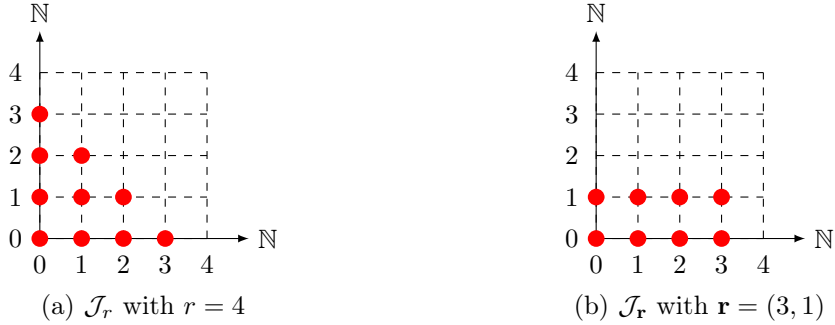
\begin{figure}[h!]
\begin{minipage}[t]{0.49\textwidth}
\begin{center}
\begin{tikzpicture}[scale=0.55]
\draw [-latex] (0,0) -- (5,0)node[right] {$\mathbb{N}$};
\draw [-latex] (0,0) -- (0,5)node[above] {$\mathbb{N}$};
\foreach \x in {0,...,4}{
\draw [dashed] (0,\x) -- (4,\x);
\draw [dashed] (\x,0) -- (\x,4);
\fill [](\x,-0.2)node[below]{$\x$};
\fill [](-0.2,\x)node[left]{$\x$};}

\foreach \x in {0,...,3}{
\foreach \y in {0}{
\fill [color=red](\x,\y) {circle(.2cm)};}}

\foreach \x in {0,...,2}{
\foreach \y in {1}{
\fill [color=red](\x,\y) {circle(.2cm)};}}

\foreach \x in {0,1}{
\foreach \y in {2}{
\fill [color=red](\x,\y) {circle(.2cm)};}}

\foreach \x in {0}{
\foreach \y in {3}{
\fill [color=red](\x,\y) {circle(.2cm)};}}
\end{tikzpicture}
\vskip 0cm
(a) $\J_r$ with $r=4$
\end{center}
\end{minipage}
\begin{minipage}[t]{0.49\textwidth}
\begin{center}
\begin{tikzpicture}[scale=0.55]
\draw [-latex] (0,0) -- (5,0)node[right] {$\mathbb{N}$};
\draw [-latex] (0,0) -- (0,5)node[above] {$\mathbb{N}$};
\foreach \x in {0,...,4}{
\draw [dashed] (0,\x) -- (4,\x);
\draw [dashed] (\x,0) -- (\x,4);
\fill [](\x,-0.2)node[below]{$\x$};
\fill [](-0.2,\x)node[left]{$\x$};}

\foreach \x in {0,...,3}{
\foreach \y in {0,1}{
\fill [color=red](\x,\y) {circle(.2cm)};}}
\end{tikzpicture}
\vskip 0cm
(b) $\J_\rr$ with $\rr = (3,1)$
\end{center}
\end{minipage}
\caption{
(a) The multiplicity set $\J_r$ with $r=4$.
(b) The box multiplicity set $\J_\rr$ with $\rr = (3,1)$.}
\label{25.04.24}
\vskip 0. cm
\end{figure}

We now define general multiplicity codes, an extension of Definition \ref{def multi codes} that has already been considered in \cite[Def. 4.16]{geil2019}.

\begin{definition}[\textbf{General multiplicity codes} \cite{geil2019}] \label{def multi codes general}
Given a decreasing set $ \J \subseteq \mathbb{N}^m $, we define the multiplicity code on $ S $ with multiplicity set $ \J $ and degree $ k $ as
$$ \mathcal{M}(S,\J,k) = \left\lbrace \left( \left( f^{(\ii)}(\aaa) \right)_{\ii \in \J} \right)_{\aaa \in S} : f \in \F[\xx]_{<k} \right\rbrace \subseteq \left( \F^{|\J|} \right)^{|S|} . $$
\end{definition}
For easy notation, we will sometimes omit the inner parentheses. Clearly, $ \mathcal{M}(S,r,k) = \mathcal{M}(S,\J_r,k) $, hence Definition \ref{def multi codes general} generalizes Definition \ref{def multi codes}. Furthermore, if $ \rr = (r,\ldots, r) $ and $ \ones = (1, \ldots, 1) $, then $ \mathcal{M}(S,r,k) $ is a punctured code of $ \mathcal{M}(S,\J_{\rr- \ones},k) $. However, the dual of the latter is easier to compute in the multivariate case (note they coincide in one variable). This will be our approach in Subsections \ref{subsec duals m=r=2} and \ref{subsec duals general}. In the following, it will be very useful to remember at every moment that
$$ \J_{\rr - \ones} = [0,r-1]^m. $$
We will write $ \J_{\rr - \ones} $ or $ [0,r-1]^m $ interchangeably in the remainder of the manuscript.

\subsection{Isometries for the folded metric}

The linear isometries for the folded Hamming metric are known since the folded Hamming metric is a particular case of the sum-rank metric, see \cite[Th. 2]{sr-hamming}. 

\begin{proposition} [\hspace{1pt}\cite{sr-hamming}] \label{prop isometries}
Let $ \phi : \F^{tn} \longrightarrow \F^{tn} $ be a linear vector space isomorphism. Then $ \phi $ is an isometry for the $ t $-folded Hamming metric (i.e., $ \w(\phi(\cc)) = \w(\cc) $ for all $ \cc \in \F^{tn} $) if, and only if, there exist invertible matrices $ A_1, \ldots, A_n \in {\rm GL}_t(\F) $ and a permutation $ \sigma : [n] \longrightarrow [n] $ such that, for all $ \cc_1, \ldots, \cc_n \in \F^t $,
$$ \phi(\cc_1, \ldots, \cc_n) = \left( \cc_{\sigma(1)} A_1, \ldots, \cc_{\sigma(n)}A_n \right). $$
\end{proposition}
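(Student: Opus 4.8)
The plan is to prove the two implications separately; the ``if'' direction is a one-line verification, and essentially all the work lies in the ``only if'' direction, which I would handle by showing that the block decomposition of $\F^{tn} = (\F^t)^n$ is intrinsic to the folded weight $\w$.

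For the easy direction, suppose $\phi(\cc_1,\dots,\cc_n) = (\cc_{\sigma(1)}A_1,\dots,\cc_{\sigma(n)}A_n)$. Since each $A_j \in {\rm GL}_t(\F)$, the $j$-th block $\cc_{\sigma(j)}A_j$ of $\phi(\cc)$ vanishes exactly when $\cc_{\sigma(j)} = \mathbf{0}$; as $\sigma$ is a bijection of $[n]$, counting nonzero blocks gives $\w(\phi(\cc)) = |\{j : \cc_{\sigma(j)} \neq \mathbf{0}\}| = |\{i : \cc_i \neq \mathbf{0}\}| = \w(\cc)$.

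For the converse, write $\F^{tn} = V_1 \oplus \cdots \oplus V_n$, where $V_i \cong \F^t$ is the subspace of vectors supported on the $i$-th block. The key observation is that the $V_i$ are recoverable from $\w$ alone: the weight-one vectors are exactly the elements of $\bigcup_i (V_i \setminus \{\mathbf{0}\})$, and two such vectors $\uu,\vv$ lie in a common block if and only if $\w(\uu+\vv) \leq 1$ (if $\uu \in V_i$, $\vv \in V_j$ with $i \neq j$, then $\uu + \vv$ is nonzero in both blocks $i$ and $j$, so $\w(\uu+\vv) = 2$; while if $\uu,\vv \in V_i$, then $\uu + \vv \in V_i$, so $\w(\uu + \vv) \leq 1$, including the case $\uu+\vv = \mathbf{0}$). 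Now $\phi$ is a weight-preserving linear bijection, so it fixes $\mathbf{0}$, permutes the weight-one vectors, and preserves the relation ``$\w(\uu+\vv) \leq 1$''; hence it permutes the classes of this relation, yielding a permutation $\tau$ of $[n]$ with $\phi(V_i \setminus \{\mathbf{0}\}) = V_{\tau(i)} \setminus \{\mathbf{0}\}$ for each $i$. Since $V_i$ is spanned over $\F$ by its nonzero vectors, linearity upgrades this to $\phi(V_i) = V_{\tau(i)}$, so $\phi|_{V_i}\colon V_i \to V_{\tau(i)}$ is a linear isomorphism; under the identifications $V_i \cong V_{\tau(i)} \cong \F^t$ it is right multiplication by some $B_i \in {\rm GL}_t(\F)$. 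Writing an arbitrary $\cc = \sum_i \cc_i$ with $\cc_i$ in block $i$ and using additivity of $\phi$, the $j$-th block of $\phi(\cc)$ is $\cc_{\tau^{-1}(j)} B_{\tau^{-1}(j)}$; setting $\sigma = \tau^{-1}$ and $A_j = B_{\sigma(j)}$ gives the stated form, with the $A_j$ invertible as required.

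The main obstacle is the rigidity step in the third paragraph — proving that the partition into blocks is forced by the metric, i.e., that $\w$ ``sees'' each $V_i$ through the relation $\w(\uu + \vv) \leq 1$. Once that is in place, linearity makes the remainder routine. The only points needing a little care are the degenerate cases $t = 0$ or $n = 0$ and the subcase $\uu + \vv = \mathbf{0}$ of the relation, none of which causes genuine difficulty.
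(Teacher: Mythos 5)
Your proof is correct. Note that the paper itself does not prove this proposition: it simply cites \cite{sr-hamming}, observing that the folded Hamming metric is the special case of the sum-rank metric in which each block is a $1 \times t$ matrix, so the classification of isometries follows from the sum-rank isometry theorem there. Your argument is therefore a genuinely different (and self-contained, elementary) route. The ``if'' direction is the same routine verification in both cases. For the ``only if'' direction, the crux of your approach is the rigidity step: recovering the block decomposition $\F^{tn} = V_1 \oplus \cdots \oplus V_n$ from the weight function alone, via the observation that the weight-one vectors are $\bigcup_i (V_i \setminus \{\mathbf{0}\})$ and that two weight-one vectors $\uu,\vv$ share a block if and only if $\w(\uu+\vv) \leq 1$. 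This is sound: the relation coincides with ``same block'' on weight-one vectors (including the case $\uu + \vv = \mathbf{0}$, which you handle), it is preserved by $\phi$ and by $\phi^{-1}$ (the latter is automatically an isometry since $\phi$ is a weight-preserving bijection), so $\phi$ permutes the classes, and linearity plus a dimension count upgrades $\phi(V_i \setminus \{\mathbf{0}\}) = V_{\tau(i)} \setminus \{\mathbf{0}\}$ to $\phi(V_i) = V_{\tau(i)}$, giving the matrices $A_j$ and the permutation $\sigma = \tau^{-1}$ exactly as stated. What the paper's citation buys is brevity and generality (the sum-rank result covers blocks of arbitrary matrix shape); what your argument buys is a short, transparent, field-independent proof that keeps the paper self-contained for this special case.
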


\begin{definition}
The codes $ \C , \C^\prime \subseteq \F^{tn} $ are equivalent if there exists a linear isometry $ \phi : \F^{tn} \longrightarrow \F^{tn} $ for the folded metric such that $ \C^\prime = \phi (\C) $. We also say that $ \phi $ is an equivalence.
\end{definition}

In particular, two codes are equivalent if, and only if, so are their duals. More concretely, we have the following. The proof is straightforward.

\begin{proposition} \label{prop duals of equivalent}
Given invertible matrices $ A_1, \ldots, A_n \in {\rm GL}_t(\F) $ and a permutation $ \sigma : [n] \longrightarrow [n] $, define the equivalences $ \phi, \psi : \F^{tn} \longrightarrow \F^{tn} $ by
\begin{equation*}
\begin{split}
\phi(\cc_1, \ldots, \cc_n) & = \left( \cc_{\sigma(1)} A_1, \ldots, \cc_{\sigma(n)}A_n \right), \\
\psi(\cc_1, \ldots, \cc_n) & = \left( \cc_{\sigma(1)} (A_1^\intercal)^{-1}, \ldots, \cc_{\sigma(n)} (A_n^\intercal)^{-1} \right),
\end{split}
\end{equation*}
for $ \cc_1, \ldots, \cc_n \in \F^t $. Then for any linear code $ \C \subseteq \F^{tn} $, 
$$ \phi(\C)^\perp = \psi (\C^\perp) . $$
\end{proposition}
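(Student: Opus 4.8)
The plan is to establish the inclusion $ \psi(\C^\perp) \subseteq \phi(\C)^\perp $ by a direct computation of the folded inner product, and then promote it to an equality by a dimension count. First I would fix $ \cc \in \C $ and $ \cc^\prime \in \C^\perp $ and write them in blocks as $ \cc = (\cc_1, \ldots, \cc_n) $ and $ \cc^\prime = (\cc^\prime_1, \ldots, \cc^\prime_n) $ with $ \cc_i, \cc^\prime_i \in \F^t $, so that the inner product on $ \F^{tn} $ splits blockwise: $ \cc \cdot \cc^\prime = \sum_{i=1}^n \cc_i \cdot \cc^\prime_i $, where $ \cc_i \cdot \cc^\prime_i $ denotes the standard bilinear form on $ \F^t $.

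The key step is a blockwise cancellation. Regarding the $ \cc_i $ and $ \cc^\prime_i $ as row vectors, for each $ i \in [n] $ one has
\[ (\cc_{\sigma(i)} A_i) \cdot (\cc^\prime_{\sigma(i)} (A_i^\intercal)^{-1}) = \cc_{\sigma(i)} A_i \left( (A_i^\intercal)^{-1} \right)^{\!\intercal} (\cc^\prime_{\sigma(i)})^\intercal = \cc_{\sigma(i)} A_i A_i^{-1} (\cc^\prime_{\sigma(i)})^\intercal = \cc_{\sigma(i)} \cdot \cc^\prime_{\sigma(i)}, \]
where I used the identity $ \left( (A_i^\intercal)^{-1} \right)^\intercal = A_i^{-1} $. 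Summing over $ i $ and reindexing the sum via the permutation $ \sigma $ gives
\[ \phi(\cc) \cdot \psi(\cc^\prime) = \sum_{i=1}^n \cc_{\sigma(i)} \cdot \cc^\prime_{\sigma(i)} = \sum_{j=1}^n \cc_j \cdot \cc^\prime_j = \cc \cdot \cc^\prime = 0, \]
the last equality because $ \cc^\prime \in \C^\perp $. Since this holds for every $ \cc \in \C $, we get $ \psi(\cc^\prime) \in \phi(\C)^\perp $, hence $ \psi(\C^\perp) \subseteq \phi(\C)^\perp $.

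To conclude, I would observe that $ \phi $ and $ \psi $ are vector space isomorphisms of $ \F^{tn} $ (immediate, since each $ A_i $ is invertible and $ \sigma $ is a bijection; alternatively, they are isometries by Proposition \ref{prop isometries}), so they preserve dimension. Hence $ \dim \psi(\C^\perp) = \dim \C^\perp = tn - \dim \C = tn - \dim \phi(\C) = \dim \phi(\C)^\perp $, and an inclusion of finite-dimensional vector spaces of equal dimension is an equality, giving $ \psi(\C^\perp) = \phi(\C)^\perp $.

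There is no real obstacle here: the only points requiring a little care are the transpose-of-inverse identity $ (A^\intercal)^{-1} = (A^{-1})^\intercal $ (which is exactly what makes $ A_i $ and $ (A_i^\intercal)^{-1} $ cancel in the pairing) and bookkeeping which matrix and which block are paired after applying $ \sigma $. Everything else is routine linear algebra, which is why the statement can fairly be called straightforward.
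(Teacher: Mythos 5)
Your proof is correct, and it is precisely the routine verification the paper has in mind when it omits the argument as ``straightforward'': the blockwise cancellation $A_i\bigl((A_i^\intercal)^{-1}\bigr)^\intercal = A_i A_i^{-1} = I$ together with reindexing by $\sigma$ gives $\psi(\C^\perp)\subseteq\phi(\C)^\perp$, and the dimension count (valid since the standard form on $\F^{tn}$ is nondegenerate and $\phi,\psi$ are isomorphisms) upgrades this to equality. Nothing further is needed.
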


\begin{remark} \label{remark generalized codes}
Any family $ \Gamma $ of codes $ \C \subseteq \F^{tn} $ can be generalized to include all equivalent codes $ \overline{\Gamma} = \{ \phi(\C) : \C \in \Gamma, \phi \textrm{ is an equivalence} \} $, as is done with generalized Reed--Solomon codes. However, by Proposition \ref{prop duals of equivalent}, if we are able to compute the family $ \Gamma^\perp $ of codes that are dual to some code in $ \Gamma $, then we have also computed the family $ \overline{\Gamma}^\perp $ of codes that are dual to some code in $ \overline{\Gamma} $. In Subsection \ref{subsec duals general}, we compute the duals of multiplicity codes as in Definition \ref{def multi codes}, and by this remark, we have computed the duals of ``generalized multiplicity codes'', i.e., codes equivalent to a multiplicity code. 
\end{remark}

\section{Hermite interpolation} \label{sec hermite}
In this section, we study Hermitian interpolator polynomials, whose coefficients will be used to compute the duals. We will see that a Hermitian interpolation basis for the set $ \J_{\rr-\ones} $ is formed by polynomials with separated variables. Such a Hermitian interpolation basis is easier than for the set $ \J_{r} $.

Let $ T = \{ a_1, \ldots, a_s \} $ be a set of (distinct) elements of $ \F $.

\subsection{Lagrange interpolation}\label{25.04.30}
Define the map
$ev\colon \F[x] \to \F^{s},
f \mapsto ev(f) = \left( f(a_1),\ldots,f(a_s)\right).$
For any $s$ elements $b_1, \ldots, b_s$ in $\F$, the classic interpolation asks for a polynomial $f \in \F[x]$ such that $ \deg(f) < s $ and
\begin{equation}\label{25.04.25}
ev(f) = (b_1,\ldots,b_s).
\end{equation}
Lagrange interpolation easily solves this problem. Indeed, let 
$G(x) = \prod_{a\in T}(x-a)$ be the vanishing polynomial of $T$. For $i \in [s]$, the polynomial $\displaystyle h_i(x) = \frac{G(x)}{G^{(1)}(a_i) (x-a_i)}$ satisfies $ \deg(h_i) = s-1 $ and $\displaystyle h_i(a_j)=\delta_{i,j}$, the Kronecker delta function. Thus, \[f(x) = \sum_{i=1}^s h_i(x)b_i =
\left(h_1(x),\ldots,h_s(x)\right) \cdot
\left(b_1,\ldots,b_s\right)\]
satisfies Eq.~(\ref{25.04.25}) and $ \deg(f) < s $. We will call the polynomials $h_1, \ldots, h_s \in \F[x]$ the {\bf indicator functions} of $T$.

\subsection{Single-variable Hermite interpolation}
{\bf Order 1.} Let $b_1, b_1^{\prime}, \ldots, b_s, b_s^{\prime}$ be any $2s$ elements in $\F$. Define the map
\begin{align*}
ev^{(1)}\colon \F[x] &\to \F^{2s}\\
f &\mapsto 
ev^{(1)}(f) = \left( f(a_1),f^{(1)}(a_1),\ldots,f(a_s),f^{(1)}(a_s) \right).
\end{align*}
For easy notation, we are omitting the inner parentheses. The goal of the Hermite interpolation of order 1 is to find a polynomial $f \in \F[x]$ such that $ \deg(f) \leq 2s -1 $ and
\begin{equation}\label{25.04.26}
ev^{(1)}(f) = (b_1, b_1^{\prime}, \ldots,b_s, b_s^{\prime}).
\end{equation}
The classic interpolation presented in subsection~\ref{25.04.30} relies on $h_1, \ldots, h_s$. To find $f$ that satisfies Eq.~(\ref{25.04.26}), we now look for $2s$ polynomials $h_{1,0},h_{1,1},\ldots,
h_{s,0},h_{s,1}$ of degree $ \leq 2s-1 $ whose images under the map $ev^{(1)}$ are the standard vectors.
\begin{align*}
ev^{(1)}(h_{1,0})&=\left(1,0,\ldots,0,0\right)\\
ev^{(1)}(h_{1,1})&=\left(0,1,\ldots,0,0\right)\\
\vdots\\
ev^{(1)}(h_{s,0})&=\left(0,0,\ldots,1,0\right)\\
ev^{(1)}(h_{s,1})&=\left(0,0,\ldots,0,1\right).
\end{align*}
In other words, $h_{i,0}(a_j)$ is the Kronecker delta, but its derivative is $0$ on every element of $T$. The polynomial $h_{i,1}(a_j)$ is $0$ on every element of $T$, but its derivative is the Kronecker delta.

The following results shows how to find $h_{i,1}(x)$ and $h_{i,0}(x)$.
\begin{proposition} [\textbf{Single-variable Hermite interpolation of order 1}]\label{25.05.01}
Let $\displaystyle G(x)$ be the vanishing polynomial and $h_1, \ldots, h_s$ the indicator functions of $T$. There exists a unique polynomial $f \in \F[x]_{\leq 2s-1}$ such that
\[ev^{(1)}(f)=(b_1, b_1^{\prime}, \ldots, b_s, b_s^{\prime}).\]
Moreover,
$f = 
\left(h_{1,0},h_{1,1},\ldots,
h_{s,0},h_{s,1}\right) \cdot
\left(b_1, b_1^{\prime}, \ldots, b_s, b_s^{\prime}\right)$,
where 
\[
\begin{pmatrix}
h_{i,0}(x)\\
h_{i,1}(x)
\end{pmatrix}
=
\begin{pmatrix}
1 & -h_{i}^{2(1)}(a_i)\\
0 & 1
\end{pmatrix}
\begin{pmatrix}
1\\
x-a_i
\end{pmatrix}
h_i^2(x),\]
for $i \in [s]$, and $h_{i}^{2(1)}(a_i)$ is the evaluation of the derivative of $h_{i}^2(x)$ at $x=a_i$.
\end{proposition}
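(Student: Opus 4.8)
The plan is to verify directly that the stated polynomials $h_{i,0}$ and $h_{i,1}$ satisfy the required interpolation conditions under $ev^{(1)}$, and then to settle existence and uniqueness by a dimension count. First I would record two elementary facts about the indicator function $h_i$. Since $h_i(a_j) = \delta_{i,j}$, the polynomial $h_i^2$ vanishes to order at least $2$ at every $a_j$ with $j \neq i$, so $h_i^2(a_j) = 0$ and $(h_i^2)^{(1)}(a_j) = 0$ for $j \neq i$; also $h_i^2(a_i) = 1$. Likewise $(x - a_i) h_i^2(x)$ vanishes at every $a_j$ (including $j = i$), and its order-$1$ Hasse derivative at $a_j$ is, by the Leibniz rule (Proposition \ref{prop leibniz}), $(x-a_i)^{(1)} h_i^2(a_j) + (x-a_i)|_{a_j}(h_i^2)^{(1)}(a_j) = h_i^2(a_j) + (a_j - a_i)(h_i^2)^{(1)}(a_j)$, which is $\delta_{i,j}$ (it is $1$ at $j=i$ and $0$ otherwise). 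These computations immediately give $ev^{(1)}\big((x-a_i)h_i^2\big) = (0,\dots,0,1,0,\dots,0)$ with the $1$ in the slot of $h_i^{(1)}(a_i)$, i.e. exactly the desired $ev^{(1)}(h_{i,1})$. For $h_{i,0}$, from $h_i^2(x) = \big(h_i^2(a_i) + (h_i^2)^{(1)}(a_i)(x-a_i) + \cdots\big)$ near $a_i$ one gets $ev^{(1)}(h_i^2) = (0,\dots,0,1, (h_i^2)^{(1)}(a_i), 0,\dots,0)$, so subtracting $(h_i^2)^{(1)}(a_i)\cdot(x-a_i)h_i^2(x)$ kills the unwanted derivative entry and leaves $ev^{(1)}(h_{i,0}) = (0,\dots,0,1,0,\dots,0)$ with the $1$ in the slot of $h_i(a_i)$. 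This is precisely the content of the matrix formula in the statement.

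Next I would check the degree bound: $\deg(h_i) = s-1$, so $\deg(h_i^2) = 2s-2$ and $\deg((x-a_i)h_i^2) = 2s-1$, hence both $h_{i,0}$ and $h_{i,1}$ lie in $\F[x]_{\leq 2s-1}$. Therefore $f = \sum_{i=1}^s \big(h_{i,0}\,b_i + h_{i,1}\,b_i'\big)$ lies in $\F[x]_{\leq 2s-1}$ and, by linearity of $ev^{(1)}$ together with the standard-basis images just computed, satisfies $ev^{(1)}(f) = (b_1, b_1', \dots, b_s, b_s')$. This proves existence.

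For uniqueness, I would observe that $ev^{(1)}$ restricted to the $(2s)$-dimensional space $\F[x]_{\leq 2s-1}$ is a linear map into $\F^{2s}$ whose image contains all $2s$ standard basis vectors, hence is surjective, hence (equal dimensions) an isomorphism; so $f$ is the unique preimage of $(b_1, b_1', \dots, b_s, b_s')$. Equivalently, if $g \in \F[x]_{\leq 2s-1}$ has $ev^{(1)}(g) = 0$, then $g$ has a zero of multiplicity $\geq 2$ at each of the $s$ points of $T$, so $G(x)^2 \mid g$ with $\deg G^2 = 2s > 2s-1$, forcing $g = 0$.

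The only mildly delicate point is the Hasse-derivative bookkeeping in characteristic $p$: one must use the Leibniz rule for Hasse derivatives rather than the product rule for classical derivatives, and one must remember that "multiplicity $\geq 2$" is the statement that the order-$0$ and order-$1$ Hasse derivatives vanish, which is exactly what makes $G^2 \mid g$ valid over any field (unlike the classical-derivative formulation, which can fail in small characteristic). With that caveat the argument is entirely routine, and I expect no real obstacle.
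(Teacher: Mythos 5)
Your proposal is correct and follows essentially the same route as the paper: verify via the Leibniz rule for Hasse derivatives that $ev^{(1)}(h_{i,1})$ and $ev^{(1)}(h_{i,0})$ are the appropriate standard basis vectors, check the degree bound, and conclude existence by linearity. Your uniqueness argument (surjectivity plus equal dimensions, or the $G^2\mid g$ divisibility observation) is a slightly more explicit justification than the paper's brief ``$2s$ equations in $2s$ indeterminates'' remark, but it is the same underlying dimension count.
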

\begin{proof}
By definition, we can see that
\begin{align}
h_{i,1}(x) &= (x-a_i)h_i^2(x) \qquad \text{ and }\\
h_{i,0}(x) &= h_{i}^2(x) - h_{i}^{2{(1)}}(a_i)h_{i,1}(x).
\end{align}
We have that $h_{i,0}(a_j)=\delta_{i,j}$ and $h_{i,1}(a_j)=0$ for $j\in[s]$. By taking derivatives and using Proposition~\ref{prop leibniz}, we obtain
\begin{align*}
h_{i,1}^{(1)}(x)
&= (x-a_i)^{(1)} h_{i}^2(x) + (x-a_i)h_{i}^{2(1)}(x)\\
&= h_{i}^2(x) + (x-a_i) 2h_{i}(x)h_i^{(1)}(x)\\
h_{i,0}^{(1)}(x) &= h_{i}^{2(1)}(x) - h_{i}^{2{(1)}}(a_i)h_{i,1}^{(1)}(x).
\end{align*}
It is straightforward to see that $h_{i,1}^{(1)}(a_j)=\delta_{i,j}$ and $h_{i,0}^{(1)}(a_j)=0$ for $j\in[s]$. Note the polynomial
\[f = 
\left(h_{1,0},h_{1,1},\ldots,
h_{s,0},h_{s,1}\right) \cdot
\left(b_1, b_1^{\prime}, \ldots, b_s, b_s^{\prime}\right)\]
satisfies Eq.~(\ref{25.04.26}) and has degree $ \leq 2s-1 $. The uniqueness of $f\in\F[x]_{\leq 2s-1}$ is due to the fact that $f$ is the solution of a system of $2s$ equations and $2s$ indeterminates.
\end{proof}

\begin{remark}\label{25.05.08}
We have
\[ \begin{pmatrix}
h_{i,0}(x)\\
h_{i,1}(x)
\end{pmatrix} =
\begin{pmatrix}
1 & -2\frac{G^{(2)}(a_i)}{G^{(1)}(a_i)}\\
0 & 1
\end{pmatrix} \begin{pmatrix}
1\\
x-a_i
\end{pmatrix} h_i^2(x).\]
Indeed, by Proposition~\ref{25.05.01}, we only need to compute $h_{i}^{2(1)}(a_i)$. Note $\displaystyle (x-a_i)h_i(x)=\frac{G(x)}{G^{(1)}(a_i)}$. By Proposition~\ref{prop leibniz}, we have
\begin{align*}
(x-a_i)h_i^{(2)}(x) + h_i^{(1)}(x)
 = \frac{G^{(2)}(x)}{G^{(1)}(a_i)}
\qquad \text{and} \qquad h_{i}^{2(1)}(x)  = 2 h_{i}(x)h_{i}^{(1)}(x).
\end{align*}
Thus,
\begin{align*}
h_{i}^{2(1)}(a_i) = 2 h_{i}(a_i)h_{i}^{(1)}(a_i)
= 2 h_{i}^{(1)}(a_i)
= 2\frac{G^{(2)}(a_i)}{G^{(1)}(a_i)}.
\end{align*}
\end{remark}

\noindent {\bf Order 2}. Let $b_1, b_1^{\prime}, b_1^{\prime\prime}, \ldots, b_s, b_s^{\prime}, b_s^{\prime\prime}$ be any $3s$ elements in $\F$. Define the map
\begin{align*}
ev^{(2)}\colon \F[x] &\to \F^{3s}\\
f &\mapsto 
ev^{(2)}(f) = \left( f(a_1),f^{(1)}(a_1),f^{(2)}(a_1),\ldots,f(a_s),f^{(1)}(a_s),f^{(2)}(a_s) \right).
\end{align*}
\begin{proposition} [\textbf{Single-variable Hermite interpolation of order 2}] \label{prop hermite order 2}
Let $\displaystyle G(x)$ be the vanishing polynomial and $h_1, \ldots, h_s$ the indicator functions of $ T $. There exists a unique polynomial $f \in \F[x]_{\leq 3s-1}$ such that
\[ev^{(2)}(f)=(b_1, b_1^{\prime}, b_1^{\prime\prime}, \ldots, b_s, b_s^{\prime}, b_s^{\prime\prime}).\]
Moreover,
$f = 
\left(h_{1,0},h_{1,1},h_{1,2},\ldots,
h_{s,0},h_{s,1},h_{s,2}\right) \cdot
\left(b_1, b_1^{\prime}, b_1^{\prime\prime}, \ldots, b_s, b_s^{\prime}, b_s^{\prime\prime} \right)$,
where 
\[
\begin{pmatrix}
h_{i,0}(x)\\
h_{i,1}(x)\\
h_{i,2}(x)
\end{pmatrix}
=
\begin{pmatrix}
1 & -h_{i}^{3(1)}(a_i) & -h_{i}^{3(2)}(a_i) + h_i^{3(1)}(a_i)^2 \\
0 & 1 & -h_{i}^{3(1)}(a_i)\\
0 & 0 & 1\\
\end{pmatrix}
\begin{pmatrix}
1\\
x-a_i\\
\left(x-a_i\right)^2
\end{pmatrix}
h_i^3(x),\]
for $i\in[s]$, and $h_{i}^{3(j)}(a_i)$ is the value of the $j$th derivative of $h_{i}^3(x)$ at $x=a_i$.
\end{proposition}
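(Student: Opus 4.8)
Following the template of the order-$1$ case (Proposition~\ref{25.05.01}), the plan is: (i) build ``local'' polynomials $h_{i,0},h_{i,1},h_{i,2}$ that, as far as the first three derivatives are concerned, are supported only at the node $a_i$; (ii) assemble $f$ from them by linearity; and (iii) deduce uniqueness from a dimension count. Throughout I will use that $h_i^{3(j)}(a_i)$ is the coefficient of $(x-a_i)^j$ in the expansion $h_i^3(x)=\sum_{j\ge 0}h_i^{3(j)}(a_i)(x-a_i)^j$ (i.e., the order-$j$ Hasse derivative of $h_i^3$ at $a_i$), and that $h_i^{3(0)}(a_i)=h_i(a_i)^3=1$ since $h_i(a_i)=1$. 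Write $c_1=h_i^{3(1)}(a_i)$ and $c_2=h_i^{3(2)}(a_i)$ for short.

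First I would show that only the node $a_i$ matters. Since $\deg h_i=s-1$, any product $h_i^3(x)p(x)$ with $\deg p\le 2$ lies in $\F[x]_{\le 3s-1}$. For $j\ne i$ we have $h_i(a_j)=0$, so iterating the Leibniz rule (Proposition~\ref{prop leibniz}) gives $(h_i^3)^{(\ell)}=\sum_{\mu+\mu'+\mu''=\ell}h_i^{(\mu)}h_i^{(\mu')}h_i^{(\mu'')}$, and for $\ell\in\{0,1,2\}$ every term has an index equal to $0$, hence a factor $h_i(a_j)=0$; one more application of the Leibniz rule yields $(h_i^3 p)^{(\ell)}(a_j)=0$ for all $j\ne i$ and $\ell\in\{0,1,2\}$. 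Thus it suffices to pick $p=p_{i,\ell}$ with $\deg p_{i,\ell}\le 2$ so that the order-$0,1,2$ Hasse derivatives of $h_i^3 p_{i,\ell}$ at $a_i$ are $\delta_{0,\ell},\delta_{1,\ell},\delta_{2,\ell}$.

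Next I would solve this local system. Writing $p(x)=\alpha_0+\alpha_1(x-a_i)+\alpha_2(x-a_i)^2$ and multiplying the two expansions at $a_i$, the order-$0,1,2$ Hasse derivatives of $h_i^3 p$ at $a_i$ are $\alpha_0$, $\alpha_1+c_1\alpha_0$ and $\alpha_2+c_1\alpha_1+c_2\alpha_0$; equivalently they equal $L\,(\alpha_0,\alpha_1,\alpha_2)^\intercal$, where $L$ is the unit lower-triangular matrix with $c_1$ on the first subdiagonal and $c_2$ in the bottom-left corner. Then $L$ is invertible with
\[
L^{-1}=\begin{pmatrix}1&0&0\\ -c_1&1&0\\ c_1^2-c_2&-c_1&1\end{pmatrix},
\]
and taking $(\alpha_0,\alpha_1,\alpha_2)^\intercal$ to be the $\ell$th column of $L^{-1}$ produces precisely the $h_{i,\ell}$ displayed in the statement, whose coefficient matrix is $(L^{-1})^\intercal$. (One could instead substitute the displayed $h_{i,\ell}$ into the Leibniz rule and verify $h_{i,\ell}^{(\ell')}(a_i)=\delta_{\ell,\ell'}$ directly.)

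Finally, by linearity of $ev^{(2)}$ the polynomial $f=\sum_{i=1}^s\big(h_{i,0}b_i+h_{i,1}b_i'+h_{i,2}b_i''\big)$ lies in $\F[x]_{\le 3s-1}$ and satisfies $ev^{(2)}(f)=(b_1,b_1',b_1'',\ldots,b_s,b_s',b_s'')$, giving existence and the formula. For uniqueness, $ev^{(2)}\colon\F[x]_{\le 3s-1}\to\F^{3s}$ is $\F$-linear between spaces of equal dimension $3s$ and surjective by the above, hence injective; alternatively, a nonzero $g$ with $ev^{(2)}(g)=0$ would vanish to order $\ge 3$ at all $s$ nodes, so $G^3\mid g$, contradicting $\deg g\le 3s-1<3s=\deg G^3$. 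The only mildly delicate point is the bookkeeping of the iterated Leibniz rule for the cube $h_i^3$ and keeping the Hasse-derivative normalization straight; both are routine, just as in the order-$1$ case.
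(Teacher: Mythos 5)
Your proof is correct, and it reaches the stated formula by a slightly different mechanism than the paper. The paper constructs the local basis recursively: it first checks $h_{i,2}(x)=(x-a_i)^2h_i^3(x)$ directly via the Leibniz rule, then corrects $(x-a_i)h_i^3$ and $h_i^3$ by subtracting multiples of the already-built higher-order polynomials, verifying the Kronecker-delta conditions at each step. You instead fix the ansatz $p(x)h_i^3(x)$ with $p$ quadratic in $(x-a_i)$, dispose of all nodes $a_j$ with $j\neq i$ in one stroke (since the order-$\leq 2$ Hasse derivatives of any such product vanish where $h_i$ does), and then solve the local conditions at $a_i$ as a unit lower-triangular $3\times 3$ system, using the identification of Hasse derivatives at $a_i$ with the Taylor coefficients in $(x-a_i)$; inverting $L$ indeed yields exactly the matrix in the statement, which is $(L^{-1})^\intercal$, so the computation matches. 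The paper's recursion is essentially forward substitution for the same triangular system, so the two arguments are close in substance, but your formulation makes the triangular structure explicit and scales transparently to general order (compare the recursive definition in Theorem \ref{prop hermite interpolation univariate}). Your uniqueness argument is also slightly stronger than the paper's: besides the equal-dimension/surjectivity count, the observation that a kernel element would vanish to order $\geq 3$ at all $s$ nodes and hence be divisible by $G^3$, which is impossible for degree $\leq 3s-1$, gives injectivity directly, whereas the paper only appeals to having a square system of $3s$ equations in $3s$ unknowns.
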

\begin{proof}
It is easy to see that the polynomial $h_{i,2}(x) = (x-a_i)^2h^3_i(x)$ and its derivative vanish on every element of $T$. By Proposition~\ref{prop leibniz},
\[h_{i,2}^{(2)}(x) = (x-a_i)^2h_i^{3(2)}(x) + 2(x-a_i)h_i^{3(1)}(x) + h_i^3(x).\]
Thus, $h_{i,2}^{(2)}(a_j) = \delta_{i,j}$. As 
$\left((x-a_i)h_i^3\right)^{(2)}(x) = 
(x-a_i)h_i^{3(2)}(x) + h_i^{3(1)}(x)$
by Proposition~\ref{prop leibniz},
we obtain
\[\left((x-a_i)h_i^3\right)^{(2)}(a_i) = 
(a_i-a_i)h_i^{3(2)}(a_i) + h_i^{3(1)}(a_i) = h_i^{3(1)}(a_i).\]
The last equation implies that
\begin{align*}
h_{i,1}(x) &= (x-a_i)h_i^3(x) - \left((x-a_i)h_i^3\right)^{(2)}(a_i)h_{i,2}(x).
\end{align*}
By the properties of $h_i(x)$ and $h_{i,2}(x)$, we have that $h_{i,1}(x)$ and $h_{i,1}^{(2)}(x)$ vanish on every element of $T$. In addition, $h_{i,1}^{(1)}(a_j)=\delta_{i,j}$. From the definitions, we can see that
\begin{align*}
h_{i,0}(x) &= h_{i}^3(x) - h_{i}^{3{(1)}}(a_i)h_{i,1}(x) - h_{i}^{3{(2)}}(a_i)h_{i,2}(x).
\end{align*}
By the properties of $h_i(x)$, $h_{i,1}(x)$, and $h_{i,2}(x)$, we obtain the result. Note the polynomial
\[f = 
\left(h_{1,0},h_{1,1},h_{1,2},\ldots,
h_{s,0},h_{s,1},h_{s,2}\right) \cdot
\left(b_1, b_1^{\prime}, b_1^{\prime\prime}, \ldots, b_s, b_s^{\prime}, b_s^{\prime\prime} \right)\]
satisfies $ev^{(2)}(f)=(b_1, b_1^{\prime}, b_1^{\prime\prime}, \ldots, b_s, b_s^{\prime}, b_s^{\prime\prime})$ and $\deg(f) \leq 3s-1 $. The uniqueness of $f\in \F[x]_{\leq 3s-1}$ is due to the fact that $f$ is the solution of a system of $3s$ equations and $3s$ indeterminates.
\end{proof}

\noindent {\bf Order $r$ (general case)}. Let $b_{10}, b_{11},\ldots, b_{1r}, \ldots, b_{s0}, b_{s1},\ldots, b_{sr}$ be any $s(r+1)$ elements in $\F$. Define the map
\begin{align*}
ev^{(r)}\colon \F[x] &\to \F^{(r+1)s}\\
f &\mapsto 
ev^{(r)}(f) = \left( f(a_1),f^{(1)}(a_1),\ldots,f^{(r)}(a_1),\ldots,f(a_s),f^{(1)}(a_s),\ldots,f^{(r)}(a_s) \right).
\end{align*}
\begin{theorem} [\textbf{Single-variable Hermite interpolation}] \label{prop hermite interpolation univariate}
Let $\displaystyle G(x)$ be the vanishing polynomial and $h_1, \ldots, h_s$ the indicator functions of $ T $. There exists a unique polynomial $f \in \F[x]_{\leq (r+1)s-1}$ such that
\[ev^{(r)}(f) = \left(b_{10}, b_{11},\ldots, b_{1r}, \ldots,b_{s0}, b_{s1},\ldots, b_{sr}\right).\]
Moreover, $f = 
\left(h_{1,0},\ldots,h_{1,r},\ldots,
h_{s,0},\ldots,h_{s,r}\right) \cdot
\left(b_{10}, b_{11},\ldots, b_{1r}, \ldots,b_{s0}, b_{s1},\ldots, b_{sr}\right)$, where we define, recursively in $ n = r,r-1, \ldots, 1,0 $, in decreasing order, $ h_{i,r}(x) = (x-a_i)^r h_i^{r+1}(x) $ and
$$ h_{i,n}(x) = (x-a_i)^n h_i^{r+1}(x) - \sum_{k=n+1}^r h_i^{r+1(k-n)}(a_i) h_{i,k}(x), $$
for $i\in[s]$. Here, $h_{i}^{r+1(j)}(a_i)$ is the value of the $j$th derivative of $h_{i}^{r+1}(x)$ at $x=a_i$.
\end{theorem}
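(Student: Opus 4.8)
The plan is to verify directly that, for each fixed $ i \in [s] $, the polynomials $ h_{i,0}, \dots, h_{i,r} $ produced by the displayed downward recursion are the ``Hermite indicator functions'' at $ a_i $: namely, $ ev^{(r)}(h_{i,n}) $ equals the standard basis vector of $ \F^{(r+1)s} $ supported on the coordinate reading off the $ n $th Hasse derivative at $ a_i $. Once this is established, the interpolation formula for $ f $ follows by $ \F $-linearity of $ ev^{(r)} $, the degree bound follows from a one-line computation, and uniqueness follows from an injectivity argument. The main tool is the Hasse--Taylor expansion: specializing Definition~\ref{def Hasse derivative} at $ \xx = a $, $ \yy = x-a $ gives $ g(x) = \sum_{k \ge 0} g^{(k)}(a)(x-a)^k $ for any $ g \in \F[x] $ and $ a \in \F $, so the block of $ ev^{(r)}(g) $ at $ a_i $ is precisely the list of coefficients of $ (x-a_i)^0, \dots, (x-a_i)^r $ in the expansion of $ g $ about $ a_i $.

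First I would handle the off-diagonal blocks. Since $ h_i(a_j) = 0 $ for $ j \neq i $, the polynomial $ h_i^{r+1} $ is divisible by $ \prod_{j \neq i}(x-a_j)^{r+1} $, hence vanishes to order at least $ r+1 $ at each $ a_j $ with $ j \neq i $. A downward induction on $ n $ using the recursion shows every $ h_{i,n} $ is a multiple of $ h_i^{r+1} $ (the base case $ h_{i,r} = (x-a_i)^r h_i^{r+1} $ is immediate, and the correction terms in the recursion are multiples of earlier $ h_{i,k} $'s), so $ h_{i,n}^{(k)}(a_j) = 0 $ for all $ k \le r $ and all $ j \neq i $. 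Thus $ ev^{(r)}(h_{i,n}) $ is supported on the block of $ a_i $.

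Next comes the diagonal block, which is the heart of the argument. Write $ h_i^{r+1}(x) = \sum_{k\ge 0} c_k(x-a_i)^k $ with $ c_k = h_i^{r+1(k)}(a_i) $ and $ c_0 = h_i(a_i)^{r+1} = 1 $. Then $ (x-a_i)^n h_i^{r+1}(x) = \sum_{k\ge 0} c_k (x-a_i)^{n+k} $, so its $ ev^{(r)} $-block at $ a_i $ is $ (\,\underbrace{0,\dots,0}_{n},\,c_0,c_1,\dots,c_{r-n}\,) $, with the entry $ c_0 = 1 $ in slot $ n $. I then induct downward on $ n = r, r-1, \dots, 0 $. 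For $ n = r $ the block is $ (0,\dots,0,1) $, as desired. For $ n < r $, assuming $ ev^{(r)}(h_{i,k}) $ is the slot-$ k $ basis vector for every $ k > n $, the subtracted sum $ \sum_{k=n+1}^{r} h_i^{r+1(k-n)}(a_i)\, h_{i,k} $ has $ ev^{(r)} $-block carrying $ c_{k-n} $ in slot $ k $ for $ k=n+1,\dots,r $ and zero elsewhere; subtracting it from the block of $ (x-a_i)^n h_i^{r+1} $ cancels slots $ n+1,\dots,r $ and leaves the $ 1 $ in slot $ n $. This gives $ ev^{(r)}(h_{i,n}) = $ the required basis vector. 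Hence $ f = \sum_{i=1}^s \sum_{n=0}^r h_{i,n} b_{in} $ satisfies the interpolation conditions, and since each $ h_{i,n} $ is an $ \F $-linear combination of polynomials $ (x-a_i)^{n'} h_i^{r+1} $ with $ 0 \le n' \le r $ and $ \deg h_i = s-1 $, we get $ \deg h_{i,n} \le r + (r+1)(s-1) = (r+1)s - 1 $, so $ \deg f \le (r+1)s-1 $. For uniqueness, if $ ev^{(r)}(g) = 0 $ for some $ g \in \F[x]_{\le(r+1)s-1} $, then each $ a_i $ has multiplicity at least $ r+1 $ in $ g $, so $ G^{r+1} \mid g $; but $ \deg G^{r+1} = (r+1)s > \deg g $, forcing $ g = 0 $, so $ ev^{(r)} $ is injective on $ \F[x]_{\le(r+1)s-1} $.

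I expect the only real obstacle to be the bookkeeping in the diagonal-block step: keeping track of which Hasse-derivative slot each monomial $ (x-a_i)^{n+k} $ occupies and checking that the index shift $ k \mapsto k-n $ in the correction terms is exactly what makes the triangular cancellation work. As a sanity check, this shift and the resulting unipotent upper-triangular structure should specialize, for $ r = 1 $ and $ r = 2 $, to the matrices displayed in Propositions~\ref{25.05.01} and~\ref{prop hermite order 2}; and the Leibniz rule (Proposition~\ref{prop leibniz}) can be used, exactly as in those proofs, to rewrite the coefficients $ h_i^{r+1(j)}(a_i) $ in terms of $ G $ if a more explicit closed form is wanted.
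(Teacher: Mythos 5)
Your proof is correct and follows essentially the same route as the paper's: a downward induction on $n=r,r-1,\ldots,0$ showing that $ev^{(r)}(h_{i,n})$ is the standard basis vector for the slot $(i,n)$, with the triangular cancellation coming directly from the recursion; reading the Hasse derivatives at $a_i$ as Taylor coefficients about $a_i$ instead of invoking the Leibniz rule, as the paper does, is only a cosmetic difference. If anything, your uniqueness argument (any $g\in\F[x]_{\leq (r+1)s-1}$ in the kernel of $ev^{(r)}$ is divisible by $G^{r+1}$, of degree $(r+1)s$, hence zero) is more complete than the paper's appeal to a square system of $(r+1)s$ equations in $(r+1)s$ unknowns.
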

\begin{proof}
Let $\left(e_{1,0},\ldots,e_{1,r},\ldots,e_{s,0},\ldots,e_{s,r}\right)^\intercal$ be the $(r+1)s \times (r+1)s$ identity matrix. In other words, the $e_{i,n}$'s represent the standard vectors for $i \in [1,s]$ and $n\in[0,r]$. We will show that $ev^{(r)}(h_{i,n})=e_{i,n}$, for all $ i \in [s] $, by induction in $n=r,\ldots,1,0$ (in decreasing order).

(Case $n=r$) The polynomial $h_{i,r}(x) = (x-a_i)^rh^{r+1}_i(x)$ and its first $r-1$ derivatives vanish on every element of $T$. By Proposition~\ref{prop leibniz},
\[h_{i,r}^{(r)}(x) = (x-a_i)h_i(x)g(x) + h_i^{r+1}(x),\]
for some $g(x) \in \F[x]$. Thus, $h_{i,r}^{(r)}(a_j) = h_i^{r+1}(a_j) = \delta_{i,j}$, meaning $ev^{(r)}(h_{i,r})=e_{i,r}$ for $i \in [s]$.

(Case $n \leq r-1$) Fix $ i \in [s] $. We assume by induction in $ n $ that $ev^{(r)}(h_{i,t})=e_{i,t}$, for all $ t \in [n+1,r] $. Given $ h_{i,n}(x) $ as in the proposition, we now show that $ev^{(r)}(h_{i,n})=e_{i,n}$. We consider different cases in order to compute $ h_{i,n}^{(\ell)}(a_j) $.

If $ 0 \leq \ell \leq n-1 $, then we have $ ((x-a_i)^nh_i^{r+1}(x))^{(\ell)}(a_j) = 0 $ and $ h_{i,n+1}^{(\ell)}(a_j) = \ldots = h_{i,r}^{(\ell)}(a_j) = 0 $, thus $ h_{i,n}^{(\ell)}(a_j) = 0 $, for all $ j \in [s] $.

If $ \ell = n $, then $ ((x-a_i)^nh_i^{r+1}(x))^{(n)}(a_j) = h_i^{r+1}(a_j) $ by Proposition \ref{prop leibniz}, and $ h_{i,n+1}^{(n)}(a_j) = \ldots = h_{i,r}^{(n)}(a_j) = 0 $, thus $ h_{i,n}^{(n)}(a_j) = h_i^{r+1}(a_j) = \delta_{i,j} $, for all $ j \in [s] $.

If $ n+1 \leq \ell \leq r $, then $ ((x-a_i)^nh_i^{r+1}(x))^{(\ell)}(a_j) = h_i^{r+1(\ell-n)}(a_j) $ by Proposition \ref{prop leibniz}, which is $ 0 $ if $ j \neq i $. Now, by the induction hypothesis, we have
\begin{equation*}
\begin{split}
h_{i,n}^{(\ell)}(a_j) & = ((x-a_i)^nh_i^{r+1}(x))^{(\ell)}(a_j) - \sum_{k=n+1}^r h_i^{r+1(k-n)}(a_i) h_{i,k}^{(\ell)}(a_j) \\
 & = h_i^{r+1(\ell-n)}(a_j) - h_i^{r+1(\ell-n)}(a_j) \delta_{i,j} \\
 & = 0,
\end{split}
\end{equation*}
for all $ j \in [s] $, and we are done.

\end{proof}

\begin{remark}
Notice that, since $ h_{i,r}(x) $ only depends on $ h_i(x) $, then by induction all of the polynomials $ h_{i,n}(x) $, for $ n \in [0,r] $, can be expressed as a function of only $ h_i(x) $ (and its derivatives at $ x=a_i $), as in Propositions \ref{25.05.01} and \ref{prop hermite order 2}. However, we omit this formula in general for brevity.
\end{remark} 

\begin{remark}\label{25.05.04}
In case the set $T$ and the element $a_i$ are relevant, we also denote $h_{i,j}$ from Theorem~\ref{prop hermite interpolation univariate} by $h_{T,a_i,j}$. Note that the polynomials $h_{i,j}$'s are playing the role of indicator functions with derivatives.
\end{remark}
\begin{remark}\label{25.05.05}
From Theorem~\ref{prop hermite interpolation univariate}, we have that
\[h_{i,r}(x) = \left(x-a_i\right)^rh_{i}^{r+1}(x)
= (x-a_i)^{r} \frac{G(x)^{r+1}}{\left(G^{(1)}(a_i)\right)^{r+1}\left(x-a_i\right)^{r+1}}.\]
In particular, $\coef(h_{i,r},x^{(r+1)s-1}) = \left(G^{(1)}(a_i)\right)^{-r-1} \neq 0 $.
\end{remark}

\begin{remark} \label{remark coef hermite r=2}
In the case $ T = \F_q $, we have $G(x)=x^q-x$. Hence $G^{(1)}(x)=-1$. Therefore, 
$$ \coef(h_{i,r},x^{(r+1)q-1}) = (-1)^{r+1}.$$
Moreover, by Remark \ref{25.05.08}, for $ r=1 $, we have that $ h_{i,0}(x) = h_i^2(x) $, since $ 2G^{(2)}(a_i) = 0 $, for all $ i \in [q] $ in that case. In particular, $ \deg(h_{i,0}) = 2q-2 $, thus for $ r=1 $,
$$ \coef(h_{i,0},x^{2q-1}) = 0. $$ 

In the case $ T = \F^*_q $, we have $G(x)=x^{q-1}-1$. Hence $G^{(1)}(x)=(q-1)x^{q-2}=-x^{q-2}=-x^{-1}$ and $G^{(1)}(a_i)=(-a_i)^{-1}$. Therefore, 
\[
\coef(h_{i,r},x^{(r+1)(q-1)-1}) = \left(-a_i\right)^{r+1}.
\]


\end{remark}

\subsection{Multivariate Hermite interpolation}
Recall that $ S = S_1 \times \cdots \times S_m $ represents a Cartesian set, where every $ S_i \subseteq \F $ is finite. For a positive integer $ r $, $\J_{\rr - \ones} = [0,r-1]^m$. Let $\{b_{\aaa,\ii}\}_{\aaa \in S, \ii \in \J_{\rr - \ones}}$ be a multiset of $|S|r^m$ elements in $\F$. Similarly to the single-variable case, we define the evaluation map
\begin{align*}
ev^{(\J_{\rr - \ones})}\colon \F[\xx] &\to \F^{|S|r^m}\\
f &\mapsto 
ev^{(\J_{\rr - \ones})}(h) = \left( \left( h^{(\ii)}(\aaa) \right)_{\ii \in \J_{\rr - \ones}} \right)_{\aaa \in S}.
\end{align*}

\begin{corollary} [\textbf{Rectangular Hermite interpolation}] \label{cor hermite interpolation multivariate}
Assume $|S_i| = s_i$.
There exists a unique $ h \in \F[\xx] $ with $ \deg_{x_j} (h) \leq rs_j-1 $ for $ j \in [m] $ such that
\[ev^{(\J_{\rr - \ones})}(h)
= \left(\left(b_{\aaa,\ii}\right)_{\ii \in \J_{\rr - \ones}}\right)_{\aaa \in S}.\]
Moreover, $h=
\left(h_{\aaa,\ii}\right)_{\aaa \in S, \hspace{0.1 cm} \ii \in \J_{\rr - \ones}}
\cdot \left(b_{\aaa,\ii}\right)_{\aaa \in S, \hspace{0.1 cm} \ii \in \J_{\rr - \ones}}$, where
$$ h_{\aaa,\ii}(\xx) = \prod_{j=1}^m h_{S_j,a_j,i_j}(x_j) \in \F[\xx],$$
and
$h_{S_j,a_j,i_j}$ is defined in Remark~\ref{25.05.04} (and Theorem \ref{prop hermite interpolation univariate}) for $\aaa = (a_1, \ldots, a_m) \in S$ and $\ii = (i_1, \ldots, i_m) \in \J_{\rr-\ones}$.
\end{corollary}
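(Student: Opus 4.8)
The plan is to reduce the multivariate (rectangular) statement to the single-variable result of Theorem \ref{prop hermite interpolation univariate}, exploiting the fact that the evaluation data lives on the Cartesian grid $S = S_1 \times \cdots \times S_m$ and that the multiplicity set $\J_{\rr-\ones} = [0,r-1]^m$ is itself a box. First I would recall the key multiplicative behaviour of Hasse derivatives on tensor products: if $h(\xx) = \prod_{j=1}^m g_j(x_j)$ with each $g_j \in \F[x_j]$, then for $\ii = (i_1,\ldots,i_m)$ one has $h^{(\ii)}(\xx) = \prod_{j=1}^m g_j^{(i_j)}(x_j)$. This follows directly from Definition \ref{def Hasse derivative} by substituting $x_j + y_j$ in each factor and reading off the coefficient of $\yy^\ii$, or equivalently by iterating the Leibniz rule (Proposition \ref{prop leibniz}) one variable at a time.

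With that in hand, the existence part is a direct computation. For $\aaa = (a_1,\ldots,a_m) \in S$ and $\ii = (i_1,\ldots,i_m) \in \J_{\rr-\ones}$, set $h_{\aaa,\ii}(\xx) = \prod_{j=1}^m h_{S_j,a_j,i_j}(x_j)$, where the univariate polynomials $h_{S_j,a_j,i_j}$ come from Theorem \ref{prop hermite interpolation univariate} applied with $T = S_j$ and order $r-1$. By the tensor formula for Hasse derivatives, for any $\bb = (c_1,\ldots,c_m) \in S$ and $\jj = (j_1,\ldots,j_m) \in \J_{\rr-\ones}$,
\[
h_{\aaa,\ii}^{(\jj)}(\bb) = \prod_{\ell=1}^m h_{S_\ell,a_\ell,i_\ell}^{(j_\ell)}(c_\ell) = \prod_{\ell=1}^m \delta_{(a_\ell,i_\ell),(c_\ell,j_\ell)} = \delta_{(\aaa,\ii),(\bb,\jj)},
\]
using exactly the defining property $ev^{(r-1)}(h_{S_\ell,a_\ell,i_\ell}) = e_{a_\ell,i_\ell}$ from Theorem \ref{prop hermite interpolation univariate}. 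Hence $ev^{(\J_{\rr-\ones})}(h_{\aaa,\ii})$ is the standard basis vector indexed by $(\aaa,\ii)$, and therefore $h = \sum_{\aaa \in S, \ii \in \J_{\rr-\ones}} b_{\aaa,\ii}\, h_{\aaa,\ii}$ satisfies $ev^{(\J_{\rr-\ones})}(h) = (b_{\aaa,\ii})$. The degree bound $\deg_{x_j}(h) \le rs_j - 1$ is inherited factor-by-factor from the univariate degree bound $\deg(h_{S_\ell,a_\ell,i_\ell}) \le (r-1+1)s_\ell - 1 = rs_\ell - 1$ in Theorem \ref{prop hermite interpolation univariate} (the $x_j$-degree of each product $h_{\aaa,\ii}$ is at most $rs_j - 1$, and this is preserved under $\F$-linear combinations).

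For uniqueness, I would count dimensions. The space of polynomials $h$ with $\deg_{x_j}(h) \le rs_j - 1$ for all $j$ has $\F$-dimension $\prod_{j=1}^m rs_j = r^m \prod_j s_j = r^m |S| = |S|r^m$, which is exactly the dimension of the target space $\F^{|S|r^m}$. So the $\F$-linear map $ev^{(\J_{\rr-\ones})}$ restricted to this polynomial space is between spaces of equal finite dimension, and we have just shown it is surjective; hence it is bijective, giving uniqueness. (Alternatively, one notes the $|S|r^m$ polynomials $h_{\aaa,\ii}$ are $\F$-linearly independent since their images under $ev^{(\J_{\rr-\ones})}$ are the distinct standard basis vectors, so they form a basis of the polynomial space, and the representation of $h$ is unique.) I do not expect a serious obstacle here; the only point needing a little care is stating and justifying the tensor-product rule for Hasse derivatives cleanly, and making sure the index bookkeeping between the flat vector in $\F^{|S|r^m}$ and the doubly-indexed family $(b_{\aaa,\ii})$ matches the ordering implicit in $ev^{(\J_{\rr-\ones})}$ — both of which are routine.
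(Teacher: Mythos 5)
Your proposal is correct and follows essentially the same route as the paper: surjectivity of the evaluation map on the space of polynomials with $\deg_{x_j}\leq rs_j-1$ via the univariate Theorem~\ref{prop hermite interpolation univariate}, combined with a dimension count to get bijectivity and hence uniqueness. The only difference is that you spell out the tensor-product rule for Hasse derivatives and the verification that $ev^{(\J_{\rr-\ones})}(h_{\aaa,\ii})$ is a standard basis vector, details the paper's proof leaves implicit.
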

\begin{proof}
The map that sends a polynomial $ h \in \F[\xx] $ with $ \deg_{x_j}(h) \leq rs_j-1 $, for $ j \in [m] $, to the evaluations $ ((h^{(\ii)}(\aaa))_{\ii \in \J_{\rr-\ones}})_{\aaa \in S} $ is a vector space isomorphism from the space of such polynomials to $ (\F^{|\J_{\rr-\ones}|})^{|S|} $. The map is surjective by using Theorem~\ref{prop hermite interpolation univariate}. Thus, it is an isomorphism since the dimensions of both spaces are the same over $\F$.
\end{proof}
\begin{remark}\label{25.05.09}
We also denote $h_{\aaa,\ii}$ by $h_{S,\aaa,\ii}$ in case the set $S$ is relevant.
\end{remark}
\begin{remark}
Notice that
$$ \coef(h_{\aaa, \rr-\ones}, x_1^{rs_1-1}\cdots x_m^{rs_m-1}) = \prod_{j=1}^m \coef(h_{S_j,a_j,i_j}(x_j), x_j^{rs_j-1}). $$
Therefore, by Remark~\ref{25.05.05}, we have that
\begin{equation}
\coef(h_{\aaa, \rr - \ones},x_1^{rs_1-1}\cdots x_m^{rs_m-1}) = \left(G_1^{(1)}(a_1)\cdots G_m^{(1)}(a_m)\right)^{-r} \neq 0,
\label{eq coef of hermite basis multivariate}
\end{equation}
for $ \aaa = (a_1,\ldots,a_m) \in S $, where $G_i(x)$ the vanishing polynomial of $S_i$.
\end{remark}

\section{Dimension} \label{sec dimension}
In this section, we compute the dimensions of general multiplicity codes (Definition \ref{def multi codes general}) using Gr{\"o}bner basis tools. The dimension plays an important role in computing the dual.

\begin{definition}
For any finite decreasing set $\mathcal{J} \subseteq \mathbb{N}^m$, we define
$$
\mathcal{B}_\mathcal{J}=\{ \ii \in \mathbb{N}^m \setminus \J : \jj  \in \mathbb{N}^m \setminus \J \text{ and } \jj \leq \ii \implies \ii=\jj\}.
$$
\end{definition}

We next consider the polynomials whose derivatives of orders in $ \J $ vanish in $ S $. The following definition is \cite[Def. 3.18]{geil2019}.

\begin{definition} [\hspace{1pt}\cite{geil2019}] \label{def ideal of a set with multi}
Given a finite decreasing set $\mathcal{J} \subseteq \mathbb{N}^m$, we define
$$ I(S; \J) = \left\lbrace f \in \F[\xx] : f^{(\ii)}(\aaa) = 0, \textrm{ for all } \aaa \in S \textrm{ and all } \ii \in \J \right\rbrace . $$
\end{definition}

Using that $ \J $ is decreasing and the Leibniz rule (Proposition \ref{prop leibniz}), it is easy to see that $ I(S;\J) $ is an ideal in $ \F[\xx] $. In \cite[Th. 4.7]{geil2019}, a Gr{\"o}bner basis for such an ideal is computed for any monomial ordering.

\begin{theorem}[\hspace{1pt}\cite{geil2019}] \label{T:grobnerbasis}
Take $G_j(x_j)=\prod_{\alpha\in S_j}(x_j-\alpha)$, for $ j \in [m] $. For any finite decreasing set $\mathcal{J}\subseteq \mathbb{N}^m$, the family
$$
\mathcal{F}=\left \{ \prod_{j=1}^mG_j(x_j)^{u_j} : (u_1,u_2,\dots,u_m)\in \mathcal{B}_\mathcal{J}  \right \}
$$
is a reduced Gröbner basis for the ideal $I(S;\mathcal{J})$ with respect to any monomial ordering (see \cite[Def. 2.5.5 \& Def. 2.7.4]{clo1} for the definitions). 
\end{theorem}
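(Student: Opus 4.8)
The plan is to establish the Gröbner basis property by a footprint/dimension count and then check reducedness by hand. Write $ s_j = |S_j| $, and for $ \uu = (u_1, \ldots, u_m) \in \mathcal{B}_\J $ set $ g_\uu = \prod_{j=1}^m G_j(x_j)^{u_j} $, so that $ \mathcal{F} = \{ g_\uu : \uu \in \mathcal{B}_\J \} $. The first step is to show $ \mathcal{F} \subseteq I(S;\J) $. Iterating the Leibniz rule (Proposition \ref{prop leibniz}) gives, for a product of univariate polynomials in separate variables, $ \left( \prod_j p_j(x_j) \right)^{(\ii)} = \prod_j p_j^{(i_j)}(x_j) $. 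Since each $ G_j $ is a product of distinct linear factors, $ (x_j - a_j)^{u_j} $ exactly divides $ G_j^{u_j} $, so $ m(G_j^{u_j}, a_j) = u_j $ and hence $ (G_j^{u_j})^{(i_j)}(a_j) = 0 $ whenever $ i_j < u_j $. Given $ \aaa \in S $ and $ \ii \in \J $: if $ i_j \geq u_j $ for all $ j $, then $ \uu \leq \ii \in \J $ would force $ \uu \in \J $ (as $ \J $ is decreasing), contradicting $ \uu \in \mathcal{B}_\J $; hence $ i_j < u_j $ for some $ j $, and $ g_\uu^{(\ii)}(\aaa) = \prod_j (G_j^{u_j})^{(i_j)}(a_j) = 0 $.

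Next I would pin down the leading monomials independently of the order. For any monomial ordering $ \prec $, a univariate polynomial $ p \in \F[x_j] $ of degree $ d $ has $ \ini_\prec(p) = x_j^d $ (because $ 1 \prec x_j \prec x_j^2 \prec \cdots $), and since $ \prec $ is multiplicative, the leading monomial of a product of polynomials in disjoint sets of variables is the product of the leading monomials; thus $ \ini_\prec(g_\uu) = x_1^{s_1 u_1} \cdots x_m^{s_m u_m} $ for every $ \prec $. Consequently $ \langle \ini_\prec(\mathcal{F}) \rangle $ is one fixed monomial ideal $ M $, independent of $ \prec $, and by the previous step $ M \subseteq \ini_\prec(I(S;\J)) $. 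I then count the footprint $ \Delta = \{ \cc \in \mathbb{N}^m : \xx^\cc \notin M \} $: we have $ \xx^\cc \in M $ iff $ s_j u_j \leq c_j $ for all $ j $ for some $ \uu \in \mathcal{B}_\J $, i.e. iff $ (\lfloor c_j / s_j \rfloor)_j \geq \uu $ for some $ \uu \in \mathcal{B}_\J $, i.e. (since $ \mathcal{B}_\J $ is the set of minimal elements of the up-set $ \mathbb{N}^m \setminus \J $) iff $ (\lfloor c_j / s_j \rfloor)_j \notin \J $. Hence $ \cc \in \Delta $ iff $ (\lfloor c_j / s_j \rfloor)_j \in \J $, and grouping the $ \cc $'s by the value of $ (\lfloor c_j/s_j\rfloor)_j $ yields $ |\Delta| = |\J| \cdot \prod_j s_j = |\J| \cdot |S| $ (in particular $ \Delta $ is finite, as $ \J $ is).

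To close the Gröbner basis claim I would compute $ \dim_\F \F[\xx]/I(S;\J) $ and compare. Since $ I(S;\J) = \bigcap_{\aaa \in S} I(\{\aaa\};\J) $ and for distinct $ \aaa, \bb \in S $ the ideals $ I(\{\aaa\};\J) $ and $ I(\{\bb\};\J) $ are comaximal (pick $ j $ with $ a_j \neq b_j $; for $ N $ larger than every $ |\ii| $ with $ \ii \in \J $ one has $ (x_j - a_j)^N \in I(\{\aaa\};\J) $ and $ (x_j - b_j)^N \in I(\{\bb\};\J) $, and these are coprime in $ \F[x_j] $), the Chinese Remainder Theorem gives $ \F[\xx]/I(S;\J) \cong \prod_{\aaa \in S} \F[\xx]/I(\{\aaa\};\J) $. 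Translating $ \aaa $ to the origin (a ring automorphism of $ \F[\xx] $ compatible with Hasse derivatives), $ I(\{\aaa\};\J) $ becomes $ \{ g : \coef(g,\xx^\ii) = 0 \text{ for all } \ii \in \J \} = \langle \xx^\bb : \bb \in \mathcal{B}_\J \rangle $, whose standard monomials are exactly $ \{ \xx^\cc : \cc \in \J \} $; so each factor has dimension $ |\J| $, giving $ \dim_\F \F[\xx]/I(S;\J) = |S| \cdot |\J| = |\Delta| = \dim_\F \F[\xx]/M $. Since $ M \subseteq \ini_\prec(I(S;\J)) $ and both quotients are finite-dimensional of the same dimension, $ M = \ini_\prec(I(S;\J)) $, so $ \mathcal{F} $ is a Gröbner basis for every $ \prec $. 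For ``reduced'': the leading coefficients are $ 1 $ since the $ G_j $ are monic, and no monomial $ \xx^\dd $ occurring in $ g_\uu $ (all of which satisfy $ d_j \leq s_j u_j $) is divisible by $ \ini_\prec(g_{\uu'}) = x_1^{s_1 u'_1} \cdots x_m^{s_m u'_m} $ for $ \uu' \neq \uu $, since that would force $ u'_j \leq u_j $ for all $ j $, i.e. $ \uu' \leq \uu $, impossible because $ \mathcal{B}_\J $ is an antichain.

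The main obstacle is the dimension identity $ \dim_\F \F[\xx]/I(S;\J) = |S| \cdot |\J| $ and its interaction with the footprint computation: one must make the comaximality/CRT step fully rigorous and, crucially, keep the argument logically independent of the dimension formulas that Section \ref{sec dimension} derives \emph{from} this theorem. Everything else — the Leibniz computation, the order-independence of leading terms, and reducedness — is routine. An alternative to the dimension count is Buchberger's criterion: one checks that every S-polynomial $ S(g_\uu, g_{\uu'}) $ reduces to $ 0 $ modulo $ \mathcal{F} $, using the substitution $ x_j^{s_j} = G_j(x_j) - (G_j(x_j) - x_j^{s_j}) $ to rewrite $ x_j^{s_j c} G_j(x_j)^u $ as a combination of higher powers of $ G_j(x_j) $; this avoids CRT but is more computational.
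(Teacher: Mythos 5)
The paper does not actually prove this statement: it is quoted from \cite{geil2019} (Th.~4.7 there), so there is no in-paper proof to match your argument against. Judged on its own, your proof is correct and self-contained. The membership step $\mathcal{F}\subseteq I(S;\J)$ via the separated-variables factorization $\bigl(\prod_j p_j(x_j)\bigr)^{(\ii)}=\prod_j p_j^{(i_j)}(x_j)$ and the exact order of vanishing of $G_j^{u_j}$ at points of $S_j$ is right; the order-independence of $\ini_\prec(g_\uu)=x_1^{s_1u_1}\cdots x_m^{s_mu_m}$ is right; the footprint count $|\Delta(M)|=|\J|\cdot|S|$ via $\cc\mapsto(\lfloor c_j/s_j\rfloor)_j$ is right (using that $\mathcal{B}_\J$ is the set of minimal elements of the up-set $\mathbb{N}^m\setminus\J$, guaranteed by Dickson's lemma); the comaximality/CRT computation $\dim_\F \F[\xx]/I(S;\J)=|S|\cdot|\J|$, with the translation-to-the-origin identification $I(\{\mathbf{0}\};\J)=\langle \xx^{\bb}:\bb\in\mathcal{B}_\J\rangle$ because $g^{(\ii)}(\mathbf{0})=\coef(g,\xx^\ii)$, is right; and the reducedness check via the antichain property of $\mathcal{B}_\J$ and the degree bounds $\deg_{x_j}(g_\uu)\le s_ju_j$ is right. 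Two remarks to tighten the write-up. First, the final comparison implicitly uses that $\dim_\F \F[\xx]/\langle\ini_\prec(I(S;\J))\rangle=\dim_\F \F[\xx]/I(S;\J)$ (equivalently, that the standard monomials of $I(S;\J)$ form a basis of the quotient, \cite[Prop.~5.3.1]{clo1}); state this explicitly, and note it is not circular since it only needs the existence of \emph{some} Gr\"obner basis of $I(S;\J)$, not the one being constructed. Second, your care in avoiding Lemma~\ref{lemma footprint reduced form} and Theorem~\ref{th multiplicity code dimension} is exactly right, since those results in Section~\ref{sec dimension} are downstream of this theorem; your CRT route replaces the multivariate Hermite interpolation (Corollary~\ref{cor hermite interpolation multivariate}) that the paper uses for its later dimension statements, which is what keeps the argument independent. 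The Buchberger alternative you sketch would also work but, as you say, is more computational and unnecessary given the dimension count.
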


We will also need the concept of footprint \cite{footprints}, which is well known in the commutative algebra literature.

\begin{definition} [\hspace{1pt}\cite{footprints}] \label{def footprint}
Given a monomial ordering $ \prec $ and an ideal $ I \subseteq \F[\xx] $, we define its footprint with respect to $ \prec $ as 
$$ \Delta_\prec(I) = \{ \xx^\ii : \xx^\ii \notin \langle \inn_\prec(I) \rangle \}, $$
where $ \inn_\prec(I) = \{ \inn_\prec(f) : f \in I \} $, i.e., $ \langle \inn_\prec(I) \rangle $ is the initial ideal of $ I $ with respect to $ \prec $.
\end{definition}

\begin{remark} \label{remark footprint of box}
For $ \rr = r \ones $ and any monomial ordering $ \prec $, using the Gr{\"o}bner basis from Theorem \ref{T:grobnerbasis}, it is straightforward to verify that
$$ \Delta_\prec(I(S;\J_{\rr-\ones})) = \left\lbrace \xx^\ii : \ii \in \prod_{j=1}^m [0,rs_j-1] \right\rbrace. $$
In particular, any $ f \in \langle \Delta_\prec(I(S;\J_{\rr-\ones})) \rangle_\F $ satisfies $ \deg_{x_j}(f) \leq rs_j - 1 $, for all $ j \in [m] $.
\end{remark}
%

Footprints are useful for reducing polynomials while preserving their evaluations, as we show now.

\begin{lemma} \label{lemma footprint reduced form}
For any monomial ordering $ \prec $ and any finite decreasing set $ \J \subseteq \mathbb{N}^m $, the maps
$$ 
\begin{array}{ccccc}
\langle \Delta_\prec(I(S;\J)) \rangle_\F & \stackrel{\rho}{\longrightarrow}& \F[\xx]/I(S;\J)&\stackrel{\eta}{\longrightarrow}& (\F^t)^n \\
f & \mapsto & f + I(S;\J) & \mapsto & \left( \left( f^{(\ii)}(\aaa) \right)_{\ii \in \J} \right)_{\aaa \in S}
\end{array}
$$
are isomorphisms of vector spaces over $ \F $, where $ t = |\J| $ and $ n = |S| $. In particular, the images of the monomials in $ \Delta_\prec(I(S;\J)) $ by $ \rho $ (i.e., modulo $ I(S;\J) $) and $ \eta \circ \rho $ (i.e., their Hasse derivatives) are $ \F $-linearly independent, and
\begin{equation}
|\Delta_\prec(I(S;\J))| = |\J|\cdot |S| = tn.
\label{eq size of footprint}
\end{equation}
\end{lemma}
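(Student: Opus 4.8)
The plan is to split the statement into two halves: that $\rho$ is an isomorphism, which I would get as a formal consequence of Gröbner basis theory, and that $\eta$ is an isomorphism, which I would deduce from injectivity together with a dimension count. For $\rho$, I would invoke the standard fact that for \emph{any} ideal $I\subseteq\F[\xx]$ and \emph{any} monomial ordering $\prec$, the residue classes modulo $I$ of the monomials in $\Delta_\prec(I)$ form an $\F$-basis of $\F[\xx]/I$ (see e.g.\ \cite{clo1}). Since $\langle\Delta_\prec(I(S;\J))\rangle_\F$ is by definition the $\F$-span of those monomials, $\rho$ maps it isomorphically onto $\F[\xx]/I(S;\J)$; in particular the monomials of $\Delta_\prec(I(S;\J))$ are already $\F$-linearly independent modulo $I(S;\J)$, and $\dim_\F\F[\xx]/I(S;\J)=|\Delta_\prec(I(S;\J))|$.

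That $\eta$ is well defined and $\F$-linear is immediate, and $\eta$ is injective straight from Definition~\ref{def ideal of a set with multi}: if $((f^{(\ii)}(\aaa))_{\ii\in\J})_{\aaa\in S}=\mathbf{0}$ then $f\in I(S;\J)$, i.e.\ $f+I(S;\J)=0$. Hence $\eta\circ\rho$ is injective, and it will remain only to check that $\dim_\F\F[\xx]/I(S;\J)=tn$; equivalently, by the previous paragraph, that $|\Delta_\prec(I(S;\J))|=tn$, which is precisely Eq.~\eqref{eq size of footprint}.

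To count the footprint I would use the reduced Gröbner basis $\mathcal{F}$ of Theorem~\ref{T:grobnerbasis}. Each of its elements $\prod_{j=1}^m G_j(x_j)^{u_j}$, with $\uu=(u_1,\dots,u_m)\in\mathcal{B}_\J$, is a product of polynomials in distinct variables, each monic in its own variable of degree $s_ju_j$ (where $s_j=|S_j|$), so its leading monomial is $\prod_{j=1}^m x_j^{s_ju_j}$ for every monomial ordering; hence $\langle\inn_\prec(I(S;\J))\rangle=\langle\, \prod_{j=1}^m x_j^{s_ju_j}:\uu\in\mathcal{B}_\J\,\rangle$. Thus $\xx^\ii\in\Delta_\prec(I(S;\J))$ iff for every $\uu\in\mathcal{B}_\J$ there is a $j$ with $i_j<s_ju_j$. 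Since $s_j,u_j,i_j\in\mathbb{N}$ and $s_j\ge 1$, the inequality $i_j\ge s_ju_j$ is equivalent to $\lfloor i_j/s_j\rfloor\ge u_j$, so $\xx^\ii\in\Delta_\prec(I(S;\J))$ iff $\big(\lfloor i_1/s_1\rfloor,\dots,\lfloor i_m/s_m\rfloor\big)$ is not coordinatewise $\ge$ any element of $\mathcal{B}_\J$. Because $\J$ is decreasing, $\mathbb{N}^m\setminus\J$ is upward closed with $\mathcal{B}_\J$ as its set of minimal elements, so this last condition is the same as $\big(\lfloor i_1/s_1\rfloor,\dots,\lfloor i_m/s_m\rfloor\big)\in\J$. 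Finally, for each fixed $\vv=(v_1,\dots,v_m)\in\J$ the exponent vectors $\ii$ with $\lfloor i_j/s_j\rfloor=v_j$ for all $j$ are exactly those with $i_j\in\{v_js_j,\dots,v_js_j+s_j-1\}$, i.e.\ $\prod_{j=1}^m s_j=|S|=n$ of them, so $|\Delta_\prec(I(S;\J))|=|\J|\cdot|S|=tn$.

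Combining the pieces, $\eta\circ\rho$ is an injective $\F$-linear map between $\F$-vector spaces of the same dimension $tn$, hence an isomorphism; since $\rho$ is an isomorphism, so is $\eta$, and the monomials of $\Delta_\prec(I(S;\J))$, being a basis of the domain, have $\F$-linearly independent images under both $\rho$ and $\eta\circ\rho$. (One could also bypass the explicit footprint count: embedding $\J$ in a box $\J_{\rr-\ones}$ and applying Corollary~\ref{cor hermite interpolation multivariate} shows $\eta$ surjective, which with injectivity makes $\eta$ an isomorphism and forces $\dim_\F\F[\xx]/I(S;\J)=tn$, whence Eq.~\eqref{eq size of footprint} via $\dim_\F\F[\xx]/I(S;\J)=|\Delta_\prec(I(S;\J))|$.) I expect the only step with genuine content to be the footprint count of the third paragraph---pinning down the leading terms of $\mathcal{F}$ and translating divisibility into the floor-function condition on exponents; the rest is formal bookkeeping.
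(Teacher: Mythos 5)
Your argument is correct, and it differs from the paper's in one essential step. The paper proves that $\eta$ is an isomorphism by combining injectivity (from the definition of $I(S;\J)$) with \emph{surjectivity}, which it obtains from the rectangular Hermite interpolation of Corollary~\ref{cor hermite interpolation multivariate} after embedding $\J$ in a box $\J_{\rr-\ones}$; the count $|\Delta_\prec(I(S;\J))|=tn$ in Eq.~\eqref{eq size of footprint} then falls out as a consequence of the two isomorphisms. You instead establish Eq.~\eqref{eq size of footprint} directly and combinatorially: from the explicit Gr\"obner basis of Theorem~\ref{T:grobnerbasis} you identify the initial ideal as generated by the monomials $\prod_j x_j^{s_ju_j}$, $\uu\in\mathcal{B}_\J$ (correctly observing that these are the leading monomials for \emph{any} ordering, since every other monomial in the expansion divides them), translate membership in the footprint into the condition $\bigl(\lfloor i_1/s_1\rfloor,\dots,\lfloor i_m/s_m\rfloor\bigr)\in\J$ via the upward-closedness of $\mathbb{N}^m\setminus\J$, and count $n$ exponent vectors per element of $\J$; surjectivity of $\eta$ then comes for free from injectivity plus equality of dimensions. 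Both routes are sound. Yours has the merit of making the footprint formula self-contained (independent of the interpolation machinery of Section~\ref{sec hermite}) and of exhibiting the footprint explicitly as a union of translated boxes indexed by $\J$, which is in the spirit of Remark~\ref{remark footprint of box}; the paper's route is shorter given that Corollary~\ref{cor hermite interpolation multivariate} is already available, and you correctly note it as an alternative in your parenthetical remark.
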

\begin{proof}
The fact that $ \rho $ is an isomorphism of vector spaces over $ \F $ follows from \cite[Prop. 5.3.1]{clo1}. Finally, the fact that $ \eta $ is an isomorphism of vector spaces over $ \F $ follows from the $ \F $-linearity of Hasse derivatives, the definition of $ I(S;\J) $ (for injectivity) and Corollary \ref{cor hermite interpolation multivariate} (for surjectivity).
\end{proof} 

\begin{definition} \label{def reduced form}
With notation as in Lemma \ref{lemma footprint reduced form}, given $ f \in \F[\xx] $, we define its reduced form in $ S $ and $ \J $ as 
$$ \overline{f} = \rho^{-1} (f+I(S;\J)) \in \langle \Delta_\prec(I(S;\J)) \rangle_\F . $$
\end{definition}

We deduce the following properties.

\begin{corollary} \label{cor reduced form properties}
With notation as in Lemma \ref{lemma footprint reduced form}, given $ f \in \F[\xx] $, we have
\begin{enumerate}
    \item 
$ \deg(\overline{f}) \leq \deg(f) $ if $ \prec $ is a graded monomial ordering.
    \item 
$ \overline{f}^{(\ii)}(\aaa) = f^{(\ii)}(\aaa) $, for all $ \aaa \in S $ and all $ \ii \in \J $.
\item 
$ \overline{f} = f $ if $ f \in \langle \Delta_\prec(I(S;\J)) \rangle_\F $.
\end{enumerate} 
\end{corollary}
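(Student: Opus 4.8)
The plan is to deduce all three properties directly from the definition of the reduced form $\overline{f} = \rho^{-1}(f + I(S;\J))$ and from Lemma \ref{lemma footprint reduced form}. For property (3), if $f \in \langle \Delta_\prec(I(S;\J)) \rangle_\F$, then since $\rho$ is a bijection sending $f$ to $f + I(S;\J)$, the preimage $\rho^{-1}(f+I(S;\J))$ is exactly $f$, so $\overline{f} = f$. For property (2), note that by construction $\overline{f} + I(S;\J) = \rho(\overline{f}) = f + I(S;\J)$, i.e., $\overline{f} - f \in I(S;\J)$; by the very definition of $I(S;\J)$ (Definition \ref{def ideal of a set with multi}), this means $(\overline{f}-f)^{(\ii)}(\aaa) = 0$ for all $\aaa \in S$ and all $\ii \in \J$, and by $\F$-linearity of Hasse derivatives we conclude $\overline{f}^{(\ii)}(\aaa) = f^{(\ii)}(\aaa)$.

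For property (1), I would argue via the standard division/remainder characterization of the footprint: when $\prec$ is a graded monomial ordering, $\overline{f}$ is the remainder of $f$ upon division by the reduced Gröbner basis of $I(S;\J)$ (this is the content of \cite[Prop. 5.3.1]{clo1} used in Lemma \ref{lemma footprint reduced form}, since the remainder lies in $\langle \Delta_\prec(I(S;\J))\rangle_\F$ and is congruent to $f$ modulo $I(S;\J)$, and such a remainder is unique). In the division algorithm each reduction step replaces a multiple of a leading term $\inn_\prec(g)$ of a Gröbner basis element $g$ by a combination of strictly $\prec$-smaller monomials, and for a graded ordering every monomial appearing in that replacement has degree at most that of the monomial removed; hence the total degree never increases, giving $\deg(\overline{f}) \leq \deg(f)$.

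The only mild subtlety is making precise that $\overline{f}$ coincides with the division remainder rather than merely being congruent to $f$ modulo the ideal and supported on $\Delta_\prec(I(S;\J))$ — but these two conditions determine the same element, since their difference would be an element of $I(S;\J)$ whose monomials all lie outside $\langle \inn_\prec(I(S;\J))\rangle$, forcing it to be $0$. I do not expect any genuine obstacle here; the result is essentially a bookkeeping consequence of the isomorphisms established in Lemma \ref{lemma footprint reduced form}.
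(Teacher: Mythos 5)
Your proposal is correct and follows essentially the same route as the paper: item 1 via the division algorithm with respect to the Gr\"obner basis of $I(S;\J)$, using that a graded ordering keeps the remainder's degree bounded by $\deg(f)$ and identifying that remainder with $\overline{f}$ by uniqueness, while items 2 and 3 are read off from the definition of $\rho$ and of $I(S;\J)$. Your slightly more direct handling of items 2 and 3 (using $\overline{f}-f\in I(S;\J)$ and injectivity of $\rho$ instead of invoking $\eta$ and remainder uniqueness) is an inessential variation.
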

\begin{proof}
\begin{enumerate}
    \item 
By the Euclidean division algorithm \cite[Th. 2.3.3]{clo1}, we have that
$$ f = q_1g_1 + \cdots + q_Mg_M + r, $$
for some $ q_1, \ldots, q_M \in \F[\xx] $, where $ \mathcal{F} = \{g_1, g_2, \ldots, g_M \} $ is the Gr{\"o}bner basis of $ I(S;\J) $ from Theorem \ref{T:grobnerbasis}, and where $ r $ is an $ \F $-linear combination of monomials in $ \Delta_\prec(I(S;\J)) $. Since we are using a graded monomial ordering, it must hold $ \deg(r) \leq \deg(f) $, given how the Euclidean division algorithm works. Finally, since $ \mathcal{F} $ is a Gr{\"o}bner basis of $ I(S;\J) $, then we must have $ r = \overline{f} $ by uniqueness of remainders \cite[Prop. 2.6.1]{clo1}, and we are done.
    \item 
It follows from
$$ \left( \left( \overline{f}^{(\ii)}(\aaa) \right)_{\ii \in \J} \right)_{\aaa \in S} = \eta ( \rho(\overline{f}) ) = \eta( f + I(S;\J)) = \left( \left( f^{(\ii)}(\aaa) \right)_{\ii \in \J} \right)_{\aaa \in S}. $$
\item 
It follows by the uniqueness of remainders since $ \mathcal{F} $ is a Gr{\"o}bner basis, as in Item 1.
\end{enumerate}
\end{proof}

From the previous results, we deduce the following theorem, which includes a first formula for the dimension of general multiplicity codes as in Definition \ref{def multi codes general}.

\begin{theorem} \label{th multiplicity code dimension}
Let $ \J \subseteq \mathbb{N}^m $ be a finite decreasing set and $ \prec $ any graded monomial ordering. Then 
$$ \mathcal{M}(S,\J,k) = \left\lbrace \left( \left( f^{(\ii)}(\aaa) \right)_{\ii \in \J} \right)_{\aaa \in S} : f \in \langle \Delta_\prec(I(S;\J)) \rangle_\F \cap \F[\xx]_{<k} \right\rbrace . $$
In particular, if $M^k=\{\xx^\ii : \abs{\ii}<k\}$, we have
$$ \dim_\F( \mathcal{M}(S,\J,k)) = \abs{M^k\cap \Delta_\prec(I(S;\J))}. $$
Moreover, if $ \J = \J_r $ or $ \J = \J_{\rr-\ones} $ (or in general $ \J \subseteq \J_{\rr-\ones} $), where $ \rr = r \ones $, and $ k > \sum_{j=1}^m (rs_j-1) $, then $ \mathcal{M}(S,\J,k) = \F^{tn} $, i.e., it is the total space. 
\end{theorem}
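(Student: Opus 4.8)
The plan is to establish the three assertions in order, relying on the reduced-form apparatus of Corollary~\ref{cor reduced form properties} and the evaluation isomorphism of Lemma~\ref{lemma footprint reduced form}.

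For the identity describing $\mathcal{M}(S,\J,k)$, I would start from the definition $\mathcal{M}(S,\J,k) = \{((f^{(\ii)}(\aaa))_{\ii\in\J})_{\aaa\in S} : f \in \F[\xx]_{<k}\}$ and show that for each such $f$ its reduced form $\overline{f}$ yields the same codeword while lying in the smaller set. Indeed, Corollary~\ref{cor reduced form properties}(2) gives $\overline{f}^{(\ii)}(\aaa) = f^{(\ii)}(\aaa)$ for all $\aaa\in S$ and $\ii\in\J$; Corollary~\ref{cor reduced form properties}(1), using that $\prec$ is graded, gives $\deg(\overline{f}) \leq \deg(f) < k$, so $\overline{f}\in\F[\xx]_{<k}$; and $\overline{f}\in\langle\Delta_\prec(I(S;\J))\rangle_\F$ by construction. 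This yields the inclusion ``$\subseteq$'', and the reverse inclusion is immediate because $\langle\Delta_\prec(I(S;\J))\rangle_\F\cap\F[\xx]_{<k}\subseteq\F[\xx]_{<k}$.

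For the dimension formula, I would use that $\eta\circ\rho$ from Lemma~\ref{lemma footprint reduced form} is an isomorphism on $\langle\Delta_\prec(I(S;\J))\rangle_\F$, hence injective on the subspace $V := \langle\Delta_\prec(I(S;\J))\rangle_\F\cap\F[\xx]_{<k}$; combined with the identity just proved, $\dim_\F\mathcal{M}(S,\J,k) = \dim_\F V$. Then I would observe that $V$ is spanned by exactly the monomials lying in $M^k\cap\Delta_\prec(I(S;\J))$: a polynomial belongs to $\langle\Delta_\prec(I(S;\J))\rangle_\F$ precisely when each of its monomials does, and it belongs to $\F[\xx]_{<k}$ precisely when each of its monomials lies in $M^k$; since these are sets of linearly independent monomials, $V = \langle M^k\cap\Delta_\prec(I(S;\J))\rangle_\F$ and $\dim_\F V = |M^k\cap\Delta_\prec(I(S;\J))|$.

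For the last claim, I would first record a monotonicity property of footprints: if $\J\subseteq\J'$ are decreasing sets, then $I(S;\J')\subseteq I(S;\J)$ (a $\J'$-vanishing polynomial is in particular $\J$-vanishing), hence $\langle\ini_\prec(I(S;\J'))\rangle\subseteq\langle\ini_\prec(I(S;\J))\rangle$ and therefore $\Delta_\prec(I(S;\J))\subseteq\Delta_\prec(I(S;\J'))$. Taking $\J'=\J_{\rr-\ones}$ and invoking Remark~\ref{remark footprint of box}, every monomial $\xx^\ii\in\Delta_\prec(I(S;\J))$ satisfies $i_j\leq rs_j-1$ for all $j$, whence $|\ii|\leq\sum_{j=1}^m(rs_j-1)<k$ and $\xx^\ii\in M^k$. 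Thus $M^k\cap\Delta_\prec(I(S;\J)) = \Delta_\prec(I(S;\J))$, which by Eq.~(\ref{eq size of footprint}) has cardinality $tn$; the dimension formula then gives $\dim_\F\mathcal{M}(S,\J,k)=tn$, and since $\mathcal{M}(S,\J,k)\subseteq\F^{tn}$, equality follows. I anticipate no serious obstacle; the only points requiring a little care are the footprint monotonicity and the remark that the hypothesis $k>\sum_{j=1}^m(rs_j-1)$ is exactly what is needed to contain the whole ``box'' footprint in $M^k$.
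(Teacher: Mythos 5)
Your proposal is correct and follows essentially the same route as the paper: the first identity via the reduced-form properties of Corollary~\ref{cor reduced form properties}, the dimension count via the isomorphisms of Lemma~\ref{lemma footprint reduced form} applied to the monomials in $M^k\cap\Delta_\prec(I(S;\J))$, and the total-space claim via Remark~\ref{remark footprint of box} together with the containment $\Delta_\prec(I(S;\J))\subseteq\Delta_\prec(I(S;\J_{\rr-\ones}))$ (which the paper uses implicitly and you justify explicitly through $I(S;\J_{\rr-\ones})\subseteq I(S;\J)$).
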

\begin{proof}
The first equation follows by combining the three items in Corollary \ref{cor reduced form properties} with Definition \ref{def multi codes general}. From there, the formula for the dimension follows by the fact that the set of Hasse derivatives $ \eta(\rho( M^k\cap \Delta_\prec(I(S;\J)))) $ is an $ \F $-linearly independent set by Lemma \ref{lemma footprint reduced form}.
%

Finally, let $ \J = \J_{\rr-\ones} $, with $ \rr = r \ones $, and $ k = \sum_{j=1}^m (rs_j-1) +1 $. Then one can easily see that $ M^k\cap \Delta_\prec(I(S;\J_{\rr-\ones})) = \Delta_\prec(I(S;\J_{\rr-\ones})) $, since $ \Delta_\prec(I(S;\J_{\rr-\ones})) $ corresponds to the box $ \prod_{j=1}^m [0,rs_j-1] $ (see Remark \ref{remark footprint of box}). Hence $\mathcal{M}(S,\J_{\rr-\ones},k)$ is the total space by Lemma \ref{lemma footprint reduced form}. Finally, if $ \J \subseteq \J_{\rr-\ones} $, we also have $ M^k\cap \Delta_\prec(I(S;\J)) = \Delta_\prec(I(S;\J)) $ and the same result holds.
\end{proof}

We may also obtain the following reduced form of multiplicity codes, meaning that degrees in each variable may be upper bounded without leaving out any codeword.

\begin{corollary} \label{cor reduced multiplicity code}
If $ \J \subseteq \mathbb{N}^m $ is a decreasing set such that $ \J \subseteq \J_{\rr - \ones} $ (for instance $ \J = \J_r $ or $ \J = \J_{\rr - \ones} $), then
$$ \mathcal{M}(S,\J,k) = \left\lbrace \left( \left( f^{(\ii)}(\aaa) \right)_{\ii \in \J} \right)_{\aaa \in S} : f \in \F[\xx]_{<k}, \deg_{x_j}(f) \leq rs_j - 1, \forall j \in [m] \right\rbrace . $$
\end{corollary}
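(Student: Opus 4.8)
The plan is to deduce this corollary directly from Theorem~\ref{th multiplicity code dimension} together with the description of the footprint of $I(S;\J_{\rr-\ones})$ in Remark~\ref{remark footprint of box}. First I would invoke Theorem~\ref{th multiplicity code dimension} to rewrite
$$ \mathcal{M}(S,\J,k) = \left\lbrace \left( \left( f^{(\ii)}(\aaa) \right)_{\ii \in \J} \right)_{\aaa \in S} : f \in \langle \Delta_\prec(I(S;\J)) \rangle_\F \cap \F[\xx]_{<k} \right\rbrace, $$
where $\prec$ may be taken to be any graded monomial ordering. So it suffices to show that the set of evaluation vectors on the right-hand side is unchanged if we replace the condition $f \in \langle \Delta_\prec(I(S;\J)) \rangle_\F$ by the weaker condition $\deg_{x_j}(f) \leq rs_j-1$ for all $j \in [m]$ (keeping $f \in \F[\xx]_{<k}$).

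The inclusion ``$\subseteq$'' is immediate: since $\J \subseteq \J_{\rr-\ones}$, the footprint satisfies $\Delta_\prec(I(S;\J)) \subseteq \Delta_\prec(I(S;\J_{\rr-\ones})) = \{ \xx^\ii : \ii \in \prod_{j=1}^m [0,rs_j-1] \}$ (monotonicity of footprints under inclusion of decreasing sets, or equivalently inclusion of the ideals $I(S;\J_{\rr-\ones}) \subseteq I(S;\J)$, which follows from the definition of $I(S;\J)$). Hence every $f \in \langle \Delta_\prec(I(S;\J)) \rangle_\F$ is an $\F$-linear combination of monomials $\xx^\ii$ with $\ii \in \prod_{j=1}^m[0,rs_j-1]$, so $\deg_{x_j}(f) \leq rs_j - 1$ for all $j$, exactly as noted at the end of Remark~\ref{remark footprint of box}. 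For ``$\supseteq$'', take any $f \in \F[\xx]_{<k}$ with $\deg_{x_j}(f) \leq rs_j-1$ for all $j$. I would pass to its reduced form $\overline{f}$ in $S$ and $\J$ (Definition~\ref{def reduced form}). By Corollary~\ref{cor reduced form properties}(2), $\overline{f}$ produces the same evaluation vector as $f$, so the corresponding codeword is the same. By Corollary~\ref{cor reduced form properties}(1), using that $\prec$ is graded, $\deg(\overline{f}) \leq \deg(f) < k$, so $\overline{f} \in \F[\xx]_{<k}$; and by construction $\overline{f} \in \langle \Delta_\prec(I(S;\J)) \rangle_\F$. Thus the codeword obtained from $f$ lies in the set described by Theorem~\ref{th multiplicity code dimension}, which is $\mathcal{M}(S,\J,k)$.

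There is essentially no serious obstacle here; the statement is a clean consequence of the machinery already set up. The one point requiring a little care is making sure the monomial ordering can simultaneously be chosen graded (needed for the degree bound $\deg(\overline f)\le\deg(f)$ in Corollary~\ref{cor reduced form properties}(1) and for the applicability of Theorem~\ref{th multiplicity code dimension}) while still giving the box-shaped footprint in Remark~\ref{remark footprint of box} — but that remark holds for \emph{any} monomial ordering, so fixing, say, the graded lexicographic order works throughout. It also bears mentioning that the $\deg_{x_j}$ bounds do not by themselves force $f$ into $\langle \Delta_\prec(I(S;\J)) \rangle_\F$ when $\J \subsetneq \J_{\rr-\ones}$ (the footprint can be a proper subset of the box), which is precisely why the reduced-form argument, rather than a direct set equality of polynomial spaces, is the right tool for the ``$\supseteq$'' direction.
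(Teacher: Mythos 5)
Your proof is correct and rests on exactly the same ingredients as the paper's one-line argument: Theorem~\ref{th multiplicity code dimension}, Remark~\ref{remark footprint of box}, and the containment $\Delta_\prec(I(S;\J)) \subseteq \Delta_\prec(I(S;\J_{\rr-\ones}))$. The only comment is that your ``$\supseteq$'' direction is more elaborate than necessary: any $f \in \F[\xx]_{<k}$ already yields a codeword of $\mathcal{M}(S,\J,k)$ directly by Definition~\ref{def multi codes general}, so the reduced-form detour there, while valid, is not needed.
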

\begin{proof}
It follows from Theorem \ref{th multiplicity code dimension}, Remark \ref{remark footprint of box} and $ \Delta_\prec(I(S;\J)) \subseteq \Delta_\prec(I(S;\J_{\rr - \ones})) $.
\end{proof}

In some cases, explicit formulas for the dimension are possible, as we show next. This formula is also new to the best of our knowledge.

\begin{corollary} \label{cor dimension J_rr}
Take $r\in \mathbb{N}$, $ k \leq m(rs-1) $, and $S_1,\dots,S_m \subseteq \F $ such that $s_1=\cdots=s_m=s$. Then
$$
\dim_\F( \mathcal{M}(S,\J_{\rr -\ones},k))=\sum_{t=0}^{k-1} \sum_{i=0}^m (-1)^i\binom{m}{i}\binom{t-irs+m-1}{t-irs}.
$$
\end{corollary}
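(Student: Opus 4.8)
The plan is to turn the dimension into a count of lattice points in a box subject to a degree cap, and then evaluate that count by inclusion--exclusion. First I would invoke Theorem~\ref{th multiplicity code dimension}, which for any graded monomial ordering $\prec$ gives
\[ \dim_\F(\mathcal{M}(S,\J_{\rr-\ones},k)) = \abs{M^k \cap \Delta_\prec(I(S;\J_{\rr-\ones}))}, \]
where $M^k = \{\xx^\ii : \abs{\ii} < k\}$. Since $s_1 = \cdots = s_m = s$, Remark~\ref{remark footprint of box} identifies the footprint $\Delta_\prec(I(S;\J_{\rr-\ones}))$ with the box $\{\xx^\ii : \ii \in [0,rs-1]^m\}$. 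Hence the dimension equals the number of $\ii = (i_1,\dots,i_m) \in \mathbb{N}^m$ with $0 \le i_j \le rs-1$ for every $j$ and $\abs{\ii} \le k-1$. Grouping this count by the total degree $t = \abs{\ii}$ gives $\dim_\F(\mathcal{M}(S,\J_{\rr-\ones},k)) = \sum_{t=0}^{k-1} c_m(t)$, where $c_m(t)$ is the number of $\ii \in [0,rs-1]^m$ with $\abs{\ii} = t$.

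The second step is to compute $c_m(t)$, a standard bounded stars-and-bars count. From the generating function $\sum_{t \ge 0} c_m(t)\, z^t = (1 + z + \cdots + z^{rs-1})^m = (1-z^{rs})^m (1-z)^{-m}$, the expansions $(1-z^{rs})^m = \sum_{i=0}^m (-1)^i \binom{m}{i} z^{irs}$ and $(1-z)^{-m} = \sum_{\ell \ge 0} \binom{\ell+m-1}{m-1} z^\ell$ give, on reading off the coefficient of $z^t$,
\[ c_m(t) = \sum_{i=0}^m (-1)^i \binom{m}{i} \binom{t-irs+m-1}{t-irs}, \]
with the convention $\binom{a}{b} = 0$ for $b < 0$, which discards the terms where $irs > t$. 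Substituting this into $\sum_{t=0}^{k-1} c_m(t)$ yields the stated formula. The hypothesis $k \le m(rs-1)$ merely keeps us in the regime where $\mathcal{M}(S,\J_{\rr-\ones},k)$ is a proper subcode; the complementary range $k > m(rs-1)$ is already handled by Theorem~\ref{th multiplicity code dimension} (and one could check the formula persists on the small overlap, though this is not needed).

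The whole argument is essentially bookkeeping, so the only point that needs care is the inclusion--exclusion evaluation of $c_m(t)$: one must confirm that the binomial-coefficient conventions in the stated formula (where the lower index $t - irs$ can be negative) agree with the coefficient extraction from $(1-z^{rs})^m(1-z)^{-m}$. With that in hand, everything else is immediate from Theorem~\ref{th multiplicity code dimension} and Remark~\ref{remark footprint of box}.
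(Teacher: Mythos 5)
Your proposal is correct, and it follows the paper's reduction exactly: both invoke Theorem~\ref{th multiplicity code dimension} together with Remark~\ref{remark footprint of box} to turn the dimension into the number of exponent vectors $\ii \in [0,rs-1]^m$ with $\abs{\ii} < k$. The only divergence is in how that count is evaluated. The paper observes that this is precisely the dimension of a Reed--Muller code over the full affine space $\F_q^m$ with $q$ replaced by $rs$, and cites the known formula (Kasami--Lin--Peterson, and \cite[Prop.\ 5.4.7]{pellikaanlibro}); you instead derive the count from scratch via the generating function $(1+z+\cdots+z^{rs-1})^m = (1-z^{rs})^m(1-z)^{-m}$, reading off the coefficient of $z^t$ and summing over $t \le k-1$, with the convention $\binom{a}{b}=0$ for $b<0$ absorbing the terms with $irs>t$. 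Your route is self-contained and makes the binomial conventions explicit (which is exactly the point where the stated formula could otherwise be misread), at the cost of a slightly longer argument; the paper's route is shorter but leans on the cited Reed--Muller dimension formula. Your closing comment on the hypothesis $k \le m(rs-1)$ is also consistent with the paper: for larger $k$ the code is the whole space by Theorem~\ref{th multiplicity code dimension}, and the counting formula is only asserted in the stated range.
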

\begin{proof}
By Theorem \ref{th multiplicity code dimension}, we just need to count the monomials in the footprint with degree $<k$. By Remark \ref{remark footprint of box}, if we look at the exponents of the monomials in the footprint, we have a hypercube of size $rs\times\cdots\times rs = (rs)^m $. Therefore, this is the same problem as computing the dimension of a Reed--Muller code over the entire affine space $ \F_q^m $, changing $q$ with $rs$, which gives the stated formula (see \cite{kasamiRM} and \cite[Prop. 5.4.7]{pellikaanlibro}). 
\end{proof}

\begin{remark}\label{r:dimcartesian}
A similar result follows for Cartesian sets with different sizes by substituting in the formula for the dimension of affine Cartesian codes (see \cite[Thm. 3.1]{hiramAffineCartesian}) the sizes of the constituent sets by $rs_i$, for $i\in [m]$.
\end{remark}

\section{Dual} \label{sec dual}
The dual of a multiplicity code is equivalent (isometric for the folded Hamming metric) to a multiplicity code in the univariate case $ m = 1 $ (see \cite[Sec. 3]{woottersListDecodingPolynomialIdeal}). The same holds for Reed--Muller codes, i.e., multiplicity codes for several variables but multiplicity $ r = 1 $, see \cite{delsarteRM,kasamiRM}. However, this is no longer the case in general for multiplicity $ r > 1 $ and $ m > 1 $ variables, as we now show. 

\begin{example}\label{ex:dualnoesmult}
Take $ q = m = r = 2 $ and $ S_1 = S_2 = \F_2 $, i.e., $ S = \F_2^2 $. For $ k = 0,1,2,3,4,5 $, the dimensions of the multiplicity codes $ \mathcal{M}(S,r,k) $ are $ 0,1,3,6,10,12 $, respectively. Since the code length over the alphabet $ \F_2 $ is $ \binom{m+r-1}{m} |S| = 12 $, the corresponding duals have dimensions $ 12, 11,9,6,2,0 $, respectively. In other words, only those of dimensions $ 0 $, $ 6 $ and $ 12 $ can have a dual that is equivalent to a multiplicity code for $ q = m = r = 2 $ and $ S = \F_2^2 $. We will come back to the multiplicity code for $ k = 4 $ in Example \ref{ex worked example}.
\end{example}

Even though the dual is no longer equivalent to a multiplicity code, in this section, we will give explicit expressions for the dual of a multiplicity code in general. We will use this expression in Proposition \ref{prop lower bound dist mult SZ} to lower bound the minimum folded distance of such duals. 

As in \cite{woottersListDecodingPolynomialIdeal}, we consider duality with respect to the usual inner product in $ \F^{tn} $.

\subsection{Dual in the case $m = 1$}\label{subsec m=1 dual}
In this subsection, we give an explicit expression for the dual of single-variable multiplicity codes. Even though this case was solved in \cite{woottersListDecodingPolynomialIdeal}, we include it for convenience of the reader. This case will be significantly simpler since $ \J_{\rr -\ones} = \J_r $ in $ m = 1 $ variable.

Recall $S=\{a_1,\ldots,a_s\} \subseteq \F$, $G(x)=\prod_{a\in S}(x-a)$ is its vanishing polynomial, and $\{h_i(x)\}_{i=1}^s$ are the indicator functions of $S$. We denote by $h_{i}^{r+1(j)}(x)$ the $j$th derivative of $h_{i}^{r+1}(x)$.
\begin{proposition}\label{25.05.07}
The dual of
\[\mathcal{M}(S,2,k) = \left\{\left(
(f\left(a_1\right),f^{(1)}\left(a_1\right)), \ldots, (f\left(a_s\right),f^{(1)}\left(a_s\right))
\right) : f \in \F_q[x]_{<k} \right\}\]
is given by
\[\mathcal{M}(S,2,k)^\perp = \left\{\left(
(g\left(a_1\right),g^{(1)}\left(a_1\right)) M_1,\ldots, (g\left(a_s\right),g^{(1)}\left(a_s\right)) M_s
\right) : g \in \F_q[x]_{<2s-k} \right\},\]
where
$M_i =
\left(G^{(1)}(a_i)\right)^{-3}
\begin{pmatrix}
-2 G^{(2)}(a_i) & G^{(1)}(a_i) \\
G^{(1)}(a_i) & 0
\end{pmatrix}
$ (which is invertible since $ G^{(1)}(a_i) \neq 0 $), for all $ i \in [s] $.
\end{proposition}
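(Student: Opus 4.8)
The plan is to verify directly that every codeword of the proposed dual code is orthogonal to every codeword of $\mathcal{M}(S,2,k)$, and then match dimensions. First I would recall from Theorem~\ref{prop hermite interpolation univariate} (with $r=1$) that the Hermite basis polynomials $h_{i,0}(x)$, $h_{i,1}(x)$ satisfy $ev^{(1)}(h_{i,0}) = e_{i,0}$ and $ev^{(1)}(h_{i,1}) = e_{i,1}$; consequently, for any $f \in \F_q[x]$, its reduced form can be written as $\overline{f} = \sum_{i=1}^s \left( f(a_i) h_{i,0}(x) + f^{(1)}(a_i) h_{i,1}(x) \right)$, which has degree $\leq 2s-1$ (Corollary~\ref{cor reduced multiplicity code}). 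Because the top-degree monomial $x^{2s-1}$ can only come from $h_{i,1}$ (recall $\deg(h_{i,0}) = 2s-2$ by Remark~\ref{remark coef hermite r=2}, and $\coef(h_{i,1}, x^{2s-1}) = (G^{(1)}(a_i))^{-2}$ by Remark~\ref{25.05.05}), the condition $\deg(f) < k$ is equivalent to a linear condition on the coefficients of $\overline{f}$, namely that the coefficients of $x^k, x^{k+1}, \dots, x^{2s-1}$ in $\overline{f}$ all vanish.

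The key computation is the inner product $\langle ev^{(1)}(f), (ev^{(1)}(g)M_1, \dots, ev^{(1)}(g)M_s)\rangle$. Expanding, this equals $\sum_{i=1}^s \big(f(a_i), f^{(1)}(a_i)\big) M_i \big(g(a_i), g^{(1)}(a_i)\big)^\intercal$. I would first show that $M_i$ is exactly the matrix such that $\big(f(a_i), f^{(1)}(a_i)\big) M_i \big(g(a_i), g^{(1)}(a_i)\big)^\intercal$ equals the coefficient extraction $\coef\!\big(\overline{fg}, x^{2s-1}\big)$ restricted to the $a_i$-contribution — more precisely, I claim that
\[ \sum_{i=1}^s \big(f(a_i), f^{(1)}(a_i)\big) M_i \big(g(a_i), g^{(1)}(a_i)\big)^\intercal = \coef\!\big(\overline{fg}, x^{2s-1}\big). \]
To see this, write $\overline{fg} = \sum_i \big((fg)(a_i) h_{i,0} + (fg)^{(1)}(a_i) h_{i,1}\big)$, use the Leibniz rule (Proposition~\ref{prop leibniz}) to express $(fg)(a_i) = f(a_i)g(a_i)$ and $(fg)^{(1)}(a_i) = f(a_i)g^{(1)}(a_i) + f^{(1)}(a_i)g(a_i)$, and then extract the coefficient of $x^{2s-1}$ using $\coef(h_{i,1}, x^{2s-1}) = (G^{(1)}(a_i))^{-2}$ and $\coef(h_{i,0}, x^{2s-1}) = 0$. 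This should produce exactly the bilinear form with matrix $(G^{(1)}(a_i))^{-2}\left(\begin{smallmatrix} 0 & 1 \\ 1 & 0\end{smallmatrix}\right)$; the discrepancy with the stated $M_i$ (the extra $(G^{(1)}(a_i))^{-1}$ factor and the $-2G^{(2)}(a_i)$ entry) is absorbed by replacing $h_{i,1}$ in the expansion of $\overline{fg}$ by $(x-a_i)h_i^2(x)$ and using Remark~\ref{25.05.08} to relate $h_{i,0}$ to $h_i^2$ — in other words, the matrix $M_i$ is the Gram-type matrix for the pairing with respect to the slightly different basis $\{h_i^2, (x-a_i)h_i^2\}$ rather than $\{h_{i,0}, h_{i,1}\}$, which is the natural one for reading off top coefficients of the product.

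Once that identity is in hand, the orthogonality is immediate: if $\deg(f) < k$ and $\deg(g) < 2s-k$, then $\deg(fg) < 2s$, so $\overline{fg} = fg$ has degree $< 2s$ (in fact the reduction does nothing since $fg$ already lies in $\langle \Delta_\prec(I(S;\J_2))\rangle_\F$ by Corollary~\ref{cor reduced form properties}(3) once we check $\deg_x(fg) \le 2s-1$), hence $\coef(\overline{fg}, x^{2s-1}) = 0$, so the inner product vanishes. This shows the proposed code is contained in $\mathcal{M}(S,2,k)^\perp$. For equality I would compare dimensions: $\dim \mathcal{M}(S,2,k) = \min(k, 2s)$ when $k \le 2s$ (this follows from Theorem~\ref{th multiplicity code dimension} — the footprint monomials of degree $<k$, which after accounting for the box structure $[0,2s-1]$ in one variable just gives $\min(k,2s)$), so $\dim \mathcal{M}(S,2,k)^\perp = 2s - k$; and the proposed code has dimension $2s-k$ as well, since $g \mapsto (ev^{(1)}(g)M_i)_i$ is injective on $\F_q[x]_{<2s-k}$ (the $M_i$ are invertible and $ev^{(1)}$ is injective on polynomials of degree $< 2s \ge 2s-k$, as it is the restriction of the isomorphism $\eta\circ\rho$). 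Matching dimensions gives equality, and invertibility of $M_i$ is the stated determinant computation $\det M_i = -(G^{(1)}(a_i))^{-6}(G^{(1)}(a_i))^2 = -(G^{(1)}(a_i))^{-4} \ne 0$.

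\textbf{Main obstacle.} The delicate part is the bookkeeping in the middle paragraph: correctly identifying which basis $\{h_i^2, (x-a_i)h_i^2\}$ (not $\{h_{i,0}, h_{i,1}\}$) makes $M_i$ come out as stated, and carefully tracking the $G^{(1)}(a_i)$ powers and the $G^{(2)}(a_i)$ term that appears because $h_{i,0} = h_i^2 - 2\frac{G^{(2)}(a_i)}{G^{(1)}(a_i)}(x-a_i)h_i^2$ (Remark~\ref{25.05.08}); a sign or normalization slip there would give a matrix that is merely equivalent, not equal, to the claimed $M_i$. Everything else — the Leibniz expansion, the degree count, and the dimension comparison — is routine given the machinery already developed.
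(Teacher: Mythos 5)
Your overall skeleton --- expand $\overline{fg}$ in the Hermite basis of Proposition~\ref{25.05.01}, apply the Leibniz rule, use that the coefficient of $x^{2s-1}$ in $fg$ vanishes because $\deg(fg)\leq 2s-2$, and conclude by a dimension count --- is exactly the paper's route. But the central step, the identification of $M_i$, is wrong as written. You assert $\deg(h_{i,0})=2s-2$ and $\coef(h_{i,0},x^{2s-1})=0$, citing Remark~\ref{remark coef hermite r=2}; that remark only treats $T=\F_q$, where $2G^{(2)}(a_i)=0$ and hence $h_{i,0}=h_i^2$. For a general $S$ of size $s$ one has $h_{i,0}=h_i^2-h_i^{2(1)}(a_i)\,(x-a_i)h_i^2$ with $h_i^{2(1)}(a_i)=2G^{(2)}(a_i)/G^{(1)}(a_i)$ (Remark~\ref{25.05.08}), so $\coef(h_{i,0},x^{2s-1})=-2G^{(2)}(a_i)\left(G^{(1)}(a_i)\right)^{-3}$, which is nonzero in general --- and this is precisely where the entry $-2G^{(2)}(a_i)$ of $M_i$ comes from. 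With the correct top coefficients, the pairing matrix in the basis $\{h_{i,0},h_{i,1}\}$ is already $M_i$ itself; there is no ``discrepancy'' to absorb, and your intermediate answer $\left(G^{(1)}(a_i)\right)^{-2}\left(\begin{smallmatrix}0&1\\ 1&0\end{smallmatrix}\right)$ is simply not the pairing matrix (for $S\subsetneq\F_q$, or in characteristic $0$, the two matrices give different codes, so the statement would come out false).

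Your proposed repair is also not carried out: ``replacing $h_{i,1}$ by $(x-a_i)h_i^2$'' is vacuous, since $h_{i,1}=(x-a_i)h_i^2$ by construction; the problem sits entirely in $h_{i,0}$. One can indeed work in the basis $\{h_i^2,(x-a_i)h_i^2\}$, but then the coefficient of $(x-a_i)h_i^2$ in the expansion of $\overline{fg}$ is $(fg)^{(1)}(a_i)-h_i^{2(1)}(a_i)(fg)(a_i)$, not $(fg)^{(1)}(a_i)$, and after Leibniz this reproduces exactly $M_i$ --- equivalently, and more directly, just compute $\coef(h_{i,0},x^{2s-1})$ correctly, which is what the paper does. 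The remaining ingredients of your proposal are fine: orthogonality from $\deg(fg)\leq 2s-2$ (the reduction is indeed unnecessary here), injectivity of $g\mapsto\left(ev^{(1)}(g)M_i\right)_{i}$ on $\F[x]_{<2s-k}$ via invertibility of the $M_i$ and uniqueness in Hermite interpolation, $\dim\mathcal{M}(S,2,k)=\min(k,2s)$ from Theorem~\ref{th multiplicity code dimension}, and the determinant computation $\det M_i=-\left(G^{(1)}(a_i)\right)^{-4}\neq 0$.
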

\begin{proof}
Take $f \in \F_q[x]_{<k}$ and $g \in \F_q[x]_{<2s-k}$. By Propositions \ref{prop leibniz} and \ref{25.05.01},
\begin{align*}
fg
&=\sum_{i=1}^s\left((fg)(a_i),(fg)^{(1)}(a_i)\right) \cdot \left(h_{i,0}(x),h_{i,1}(x)\right)\\
&=\sum_{i=1}^s\left((fg)(a_i),
(f^{(1)}g)(a_i) + (fg^{(1)})(a_i)\right) \cdot \left(h_{i,0}(x),h_{i,1}(x)\right)\\
&=\sum_{i=1}^s\left(f(a_i), f^{(1)}(a_i)\right) \cdot \left(g(a_i)h_{i,0}(x) + g^{(1)}(a_i)h_{i,1}(x),
g(a_i) h_{i,1}(x)\right)\\
&=\sum_{i=1}^s \left(f(a_i), f^{(1)}(a_i)\right) 
\begin{pmatrix}
h_{i,0}(x) & h_{i,1}(x) \\
h_{i,1}(x) & 0
\end{pmatrix}  \begin{pmatrix}
    g(a_i) \\
    g^{(1)}(a_i)
\end{pmatrix} ,
\end{align*}
where, by Proposition~\ref{25.05.01},
\[
\begin{pmatrix}
h_{i,0}(x)\\
h_{i,1}(x)
\end{pmatrix}
=
\begin{pmatrix}
1 & -h_{i}^{2(1)}(a_i)\\
0 & 1
\end{pmatrix}
\begin{pmatrix}
1\\
x-a_i
\end{pmatrix}
h_i^2(x).\]
We also have that $ \coef(h_{i,0}(x),x^{2s-1}) = -h_{i}^{2(1)}(a_i) \left(G^{(1)}(a_1)\right)^{-2} $ and $ \coef(h_{i,1}(x),x^{2s-1}) = \left(G^{(1)}(a_1)\right)^{-2} $. Finally, as $\deg(fg)\leq 2s-2$ and $\deg(h_{i,0})=\deg(h_{i,1})=2s-1$, the addition of the coefficients of $fg$ in $ x^{2s-1} $ is $0$. Therefore, 
\[
\sum_{i=1}^s \left(f(a_i), f^{(1)}(a_i)\right)
\left(G^{(1)}(a_i)\right)^{-2}
\begin{pmatrix}
-h_{i}^{2(1)}(a_i) & 1 \\
1 & 0
\end{pmatrix}
\begin{pmatrix}
    g(a_i) \\
    g^{(1)}(a_i)
\end{pmatrix} = 0.
\]
Thus, the result follows using Remark \ref{25.05.08} in order to express $ h_{i}^{2(1)}(a_i) $ in terms of $ G(x) $.
\end{proof}

\begin{corollary}\label{25.05.10}
Assume $\F_q=\{a_1,\ldots,a_q\}$.
The dual of 
\[\mathcal{M}(\F_q,2,k) = \left\{\left(f\left(a_1\right),f^{(1)}\left(a_1\right), \ldots, f\left(a_q\right),f^{(1)}\left(a_q\right) \right) : f \in \F_q[x]_{<k} \right\} \]
is given by
\[\mathcal{M}(\F_q,2,k)^\perp = \left\{\left(g^{(1)}\left(a_1\right),g\left(a_1\right), \ldots, g^{(1)}\left(a_q\right),g\left(a_q\right) \right) : g \in \F_q[x]_{<2q-k} \right\}. \]
\end{corollary}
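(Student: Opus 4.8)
The plan is to specialize Proposition \ref{25.05.07} to the case $S = \F_q$ and then simplify the transformation matrices $M_i$ using the explicit form of the vanishing polynomial. Recall that for $T = \F_q$ we have $G(x) = x^q - x$, hence $G^{(1)}(x) = -1$ and $G^{(2)}(x) = 0$ (the latter because $\binom{q}{2} \equiv 0$ in characteristic $p$ dividing $q$, or simply because $x^q - x$ has no $x^2$-type contribution once one expands via Hasse derivatives). In particular $G^{(1)}(a_i) = -1$ and $G^{(2)}(a_i) = 0$ for every $i \in [q]$, which matches what was already observed in Remark \ref{remark coef hermite r=2}.

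First I would substitute these values into the matrix
$$M_i = \left(G^{(1)}(a_i)\right)^{-3}\begin{pmatrix} -2G^{(2)}(a_i) & G^{(1)}(a_i) \\ G^{(1)}(a_i) & 0 \end{pmatrix}.$$
With $G^{(1)}(a_i) = -1$ we get $\left(G^{(1)}(a_i)\right)^{-3} = (-1)^{-3} = -1$, and with $G^{(2)}(a_i) = 0$ the top-left entry vanishes. Thus
$$M_i = -\begin{pmatrix} 0 & -1 \\ -1 & 0 \end{pmatrix} = \begin{pmatrix} 0 & 1 \\ 1 & 0 \end{pmatrix}$$
for every $i \in [q]$. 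This is simply the coordinate-swap matrix, independent of $i$.

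Then I would observe that right-multiplying the row vector $(g(a_i), g^{(1)}(a_i))$ by $M_i$ gives $(g^{(1)}(a_i), g(a_i))$, so that plugging into the description of $\mathcal{M}(S,2,k)^\perp$ from Proposition \ref{25.05.07} with $s = q$ and $2s - k = 2q - k$ yields exactly
$$\mathcal{M}(\F_q,2,k)^\perp = \left\{\left(g^{(1)}(a_1), g(a_1), \ldots, g^{(1)}(a_q), g(a_q)\right) : g \in \F_q[x]_{<2q-k}\right\},$$
which is the claimed statement. There is essentially no obstacle here: the entire proof is the bookkeeping of substituting $G(x) = x^q - x$ into the already-established general formula. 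The only point requiring a word of care is confirming $G^{(2)}(x) = 0$ over $\F_q$, which follows from the Hasse-derivative expansion $G(x+y) = (x+y)^q - (x+y) = x^q + y^q - x - y$ (using that the Frobenius is additive), so that $G^{(i)} = 0$ for all $2 \le i \le q-1$ and $i = q$ contributes the constant $1$; in particular the coefficient of $y^2$, namely $G^{(2)}(x)$, is zero. Everything else is immediate from Proposition \ref{25.05.07} and Remark \ref{remark coef hermite r=2}.
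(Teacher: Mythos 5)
Your proposal takes exactly the paper's route: substitute the vanishing polynomial $G(x)=x^q-x$ of $\F_q$ into Proposition~\ref{25.05.07}, observe that the matrices $M_i$ all reduce to the swap matrix $\left(\begin{smallmatrix} 0 & 1 \\ 1 & 0 \end{smallmatrix}\right)$, and read off the dual; this is the same (correct) bookkeeping the paper performs via Remark~\ref{25.05.08} and Proposition~\ref{25.05.07}. One inaccuracy should be fixed, though: your claim that $G^{(2)}(x)=0$ over $\F_q$ is false when $q=2$. Indeed, $G^{(2)}(x)=\binom{q}{2}x^{q-2}$, and $\binom{2}{2}=1$, so for $q=2$ one has $G^{(2)}(x)=1$; equivalently, in your Frobenius expansion $G(x+y)=x^q+y^q-x-y$ the coefficient of $y^2$ is the term coming from $y^q$ when $q=2$, and your range ``$2\le i\le q-1$'' is empty in that case. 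What is true \emph{independently of $q$} -- and is all that the computation of $M_i$ requires, since the offending entry is $-2G^{(2)}(a_i)$ -- is that $2G^{(2)}(a_i)=0$: for odd characteristic because $G^{(2)}=0$, and for characteristic $2$ because of the factor $2$. This is precisely how the paper states it ($2G^{(2)}(x)=0$, ``independently of $q$''), and with that one-line repair your argument is complete and coincides with the paper's proof.
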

\begin{proof}
The vanishing polynomial of $\F_q$ is $G(x)=x^q-x$. Then, $G^{(1)}(x)=-1$ and $2G^{(2)}(x)=0$ (independently of $q$). Thus, the result follows from Remark~\ref{25.05.08} and Proposition~\ref{25.05.07}.
\end{proof}

\begin{proposition}\label{p:dualm=1}
The dual of
\[\mathcal{M}(S,r,k) = \left\{\left(
(f\left(a_1\right),\ldots,f^{(r-1)}\left(a_1\right)), \ldots, (f\left(a_s\right),\ldots,f^{(r-1)}\left(a_s\right))
\right) : f \in \F_q[x]_{<rs-k} \right\}\]
is given by
\[\mathcal{M}(S,r,k)^\perp = \left\{\left(
(g\left(a_1\right),\ldots,g^{(r-1)}\left(a_1\right)) M_1, \ldots, (g\left(a_s\right),\ldots,g^{(r-1)}\left(a_s\right)) M_s
\right) : g \in \F_q[x]_{<rs-k} \right\}, \]
where $ M_i $ is the invertible matrix given by
\[
M_i=
\begin{pmatrix}
\lambda_{i,0} & \lambda_{i,1} & \cdots & \lambda_{i,r-2} & \lambda_{i,r-1}\\
\lambda_{i,1} & \lambda_{i,2} & \cdots & \lambda_{i,r-1} & 0 \\
\vdots&\vdots& \iddots  & \vdots&\vdots\\
\lambda_{i,r-2} & \lambda_{i,r-1} & \cdots & 0 & 0\\
\lambda_{i,r-1} & 0 & \cdots & 0 & 0
\end{pmatrix},
\]
where $ \lambda_{i,r-1} = \coef(h_i^r(x),x^{rs-1}) = (G^{(1)}(a_i))^{-r} $ and, recursively in $ n = r-2,r-3,\ldots, 1,0, $ (in decreasing order), we define $ \lambda_{i,n} = - \sum_{k=n+1}^{r-1} h_i^{r(k-n)}(a_i) \lambda_{i,k} $.
\end{proposition}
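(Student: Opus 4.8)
The strategy mirrors the proof of Proposition \ref{25.05.07}, generalizing it from $r=2$ to arbitrary $r$. Fix $f \in \F_q[x]_{<k}$ and $g \in \F_q[x]_{<rs-k}$, so that $\deg(fg) \leq rs-2$. Since $\{h_{i,n}\}_{i\in[s],\, n\in[0,r-1]}$ is the Hermite interpolation basis for $ev^{(r-1)}$ from Theorem \ref{prop hermite interpolation univariate} (applied with order $r-1$), and $\deg(fg) \leq rs-1$, we may expand $fg = \sum_{i=1}^s \sum_{n=0}^{r-1} (fg)^{(n)}(a_i) h_{i,n}(x)$. Next I would use the Leibniz rule (Proposition \ref{prop leibniz}) to write $(fg)^{(n)}(a_i) = \sum_{\ell+\ell'=n} f^{(\ell)}(a_i) g^{(\ell')}(a_i)$, and collect terms to obtain an expression of the form $fg = \sum_{i=1}^s \big(f(a_i),\ldots,f^{(r-1)}(a_i)\big) H_i(x) \big(g(a_i),\ldots,g^{(r-1)}(a_i)\big)^\intercal$, where $H_i(x)$ is the $r\times r$ matrix whose $(\ell,\ell')$ entry (indices from $0$ to $r-1$) is $h_{i,\ell+\ell'}(x)$ if $\ell+\ell' \leq r-1$ and $0$ otherwise — a Hankel-type matrix of polynomials.

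The key step is then to extract the coefficient of $x^{rs-1}$ on both sides. On the left, $\deg(fg) \leq rs-2$ forces this coefficient to be $0$. On the right, $\coef(h_{i,n}(x), x^{rs-1})$ must be computed: I claim it equals the scalar $\lambda_{i,n}$ defined in the statement. For $n=r-1$ this is Remark \ref{25.05.05} (with $r$ there replaced by $r-1$), giving $\coef(h_{i,r-1}, x^{rs-1}) = (G^{(1)}(a_i))^{-r}$. For $n<r-1$, the defining recursion $h_{i,n}(x) = (x-a_i)^n h_i^r(x) - \sum_{k=n+1}^{r-1} h_i^{r(k-n)}(a_i) h_{i,k}(x)$ (Theorem \ref{prop hermite interpolation univariate} with order $r-1$, so exponent $r$ on $h_i$) shows, upon taking the coefficient of $x^{rs-1}$: the term $(x-a_i)^n h_i^r(x)$ has degree $n + r(s-1) < rs-1$ when $n \leq r-2$, hence contributes $0$, and the remaining sum yields precisely the recursion $\lambda_{i,n} = -\sum_{k=n+1}^{r-1} h_i^{r(k-n)}(a_i)\lambda_{i,k}$. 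Substituting, the coefficient of $x^{rs-1}$ on the right is $\sum_{i=1}^s \big(f(a_i),\ldots,f^{(r-1)}(a_i)\big) M_i \big(g(a_i),\ldots,g^{(r-1)}(a_i)\big)^\intercal$, where $M_i$ is the anti-triangular Hankel matrix with $\lambda_{i,\ell+\ell'}$ in position $(\ell,\ell')$ for $\ell+\ell'\leq r-1$ and $0$ otherwise. Setting this to $0$ gives the orthogonality relation.

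It remains to check (i) that $M_i$ is invertible, which is immediate since it is anti-triangular with anti-diagonal entries all equal to $\lambda_{i,r-1} = (G^{(1)}(a_i))^{-r} \neq 0$, so $\det M_i = \pm \lambda_{i,r-1}^r \neq 0$; and (ii) a dimension count to conclude equality of the two codes rather than just containment. For (ii): the map $g \mapsto \big((g^{(j)}(a_i))_{j} M_i\big)_i$ is injective on $\F_q[x]_{<rs-k}$ because $g \mapsto ev^{(r-1)}(g)$ is injective there (as $\deg g \leq rs-k-1 \leq rs-1$, using $k\geq 1$; the edge cases are handled separately or absorbed), and the $M_i$ are invertible, so the right-hand code has dimension $rs-k$. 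Since $\mathcal{M}(S,r,k)$ has dimension $k$ (it equals the Hamming-side image of $\F_q[x]_{<k}$, injective for $k\leq rs$) and $k + (rs-k) = rs = \dim \F_q^{rs}$, the containment in $\mathcal{M}(S,r,k)^\perp$ is an equality. The main obstacle is bookkeeping: carefully tracking the index shift (order $r-1$ interpolation, exponent $r$ on $h_i$) and verifying the Hankel structure of $M_i$ emerges correctly from the Leibniz convolution; the invertibility and dimension count are routine.
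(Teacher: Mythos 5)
Your proposal is correct and follows essentially the same route as the paper: expand $fg$ in the order-$(r-1)$ Hermite interpolation basis, apply the Leibniz rule to obtain the anti-triangular Hankel matrix of the $h_{i,n}$'s, extract the vanishing coefficient of $x^{rs-1}$ (identifying $\coef(h_{i,n},x^{rs-1})=\lambda_{i,n}$), and finish with a dimension count. Your explicit verification of the $\lambda_{i,n}$ recursion and of the dimension argument only fills in details the paper leaves implicit.
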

\begin{proof}
Take $f \in \F_q[x]_{<k}$ and $g \in \F_q[x]_{<rs-k}$. By Proposition \ref{prop leibniz} and Theorem~\ref{prop hermite interpolation univariate},
\begin{align*}
fg&=\sum_{i=1}^s\left((fg)(a_i),\ldots,(fg)^{(r-1)}(a_i)\right) \cdot \left(h_{i,0}(x),\ldots,h_{i,r-1}(x)\right)\\
&=\sum_{i=1}^s\left((fg)(a_i),\ldots,
((fg^{(r-1)})(a_i) + \cdots + (f^{(r-1)}g)(a_i)\right) \cdot \left(h_{i,0}(x),\ldots,h_{i,r-1}(x)\right)\\
&=\sum_{i=1}^s \left(f(a_i),\ldots, f^{(r-1)}(a_i)\right)  
\begin{pmatrix}
h_{i,0}(x) & h_{i,1}(x) & \cdots &  h_{i,r-1}(x)\\
h_{i,1}(x) & h_{i,2}(x) & \cdots & 0\\
\vdots\\
h_{i,r-1}(x) & 0 & \cdots & 0 \\
\end{pmatrix} \begin{pmatrix}
    g(a_i) \\
    g^{(1)}(a_i) \\
    \vdots \\
    g^{(r-1)}(a_i)
\end{pmatrix},
\end{align*}
where $h_{i,0}(x),\ldots,h_{i,r-1}(x)$ are given in Theorem~\ref{prop hermite interpolation univariate}. Since $\deg(fg)\leq rs-2$, the addition of the coefficients of $fg$ in $ x^{rs-1} $ is $0$. Therefore, 
\[
\sum_{i=1}^s \left(f(a_i),\ldots, f^{(r-1)}(a_i)\right) M_i \left(g(a_i),\ldots, g^{(r-1)}(a_i)\right)^\intercal = 0,
\]
where $M_i$ is given in the proposition statement, given that $ \coef(h_{i,n}(x),x^{rs-1}) = \lambda_{i,n} $, for all $ n \in [0,r-1] $ and $ i \in [s] $. By a dimensional argument, we get the result.
\end{proof}

\begin{example}
By Proposition~\ref{p:dualm=1}, and using the Coding Theory Package~\cite{cod_package} of {\it Macaulay2}~\cite{Mac2}, or Magma~\cite{magma}, we obtain that the dual of 
\[\mathcal{M}(\F_2,3,3) =
\left\{\left(
f\left(0\right),f^{(1)}\left(0\right),f^{(2)}\left(0\right),
f\left(1\right),f^{(1)}\left(1\right),f^{(2)}\left(1\right) \right) : f \in \F_2[x]_{<3} \right\}\]
is given by
\[
\left\{\left(
(g^{(2)}+g^{(1)})\left(0\right),
(g^{(1)}+g)\left(0\right),
g\left(0\right),
(g^{(2)}+g^{(1)})\left(1\right),
(g^{(1)}+g)\left(1\right),
g\left(1\right)
\right) : g \in \F_2[x]_{<3} \right\}.\]
In contrast with the case of multiplicity $ r=1 $ (see Corollary \ref{25.05.10}), in this case of multiplicity $ r=2 $, the codewords of the form 
$$ \left(
g^{(2)}\left(0\right),g^{(1)}\left(0\right),g\left(0\right),
g^{(2)}\left(1\right),g^{(1)}\left(1\right),g\left(1\right) \right), $$
for $ g \in \F_2[x]_{<3} $, are not all in the dual $ \mathcal{M}(\F_2,3,3)^\perp $ even for $ S = \F_2 $, the whole field. For instance, if $ g =x^2 $, then
$$ \left(
g^{(2)}\left(0\right),g^{(1)}\left(0\right),g\left(0\right),
g^{(2)}\left(1\right),g^{(1)}\left(1\right),g\left(1\right) \right) = (1,0,0,1,0,1) \notin \mathcal{M}(\F_2,3,3)^\perp , $$
since this word is not orthogonal to
$$ \left(
f\left(0\right),f^{(1)}\left(0\right),f^{(2)}\left(0\right),
f\left(1\right),f^{(1)}\left(1\right),f^{(2)}\left(1\right) \right) = (0,1,0,1,1,0) \in \mathcal{M}(\F_2,3,3), $$
for $ f = x $. This is because, for $ r = 2 $, it is not true that the matrix $ M_i $ from Proposition \ref{p:dualm=1} is anti-diagonal with ones in its main anti-diagonal, even if $ S = \F_q $ is the whole field (whereas this holds for $ r = 1 $ by Corollary \ref{25.05.10}, see also Remark \ref{remark N_a simple}, Example \ref{ex worked example} and Remark \ref{remark antidiagonal for r=2}). 
\end{example}

\subsection{Dual in the case $ m = r = 2 $} \label{subsec duals m=r=2}
In this subsection, $ S = S_1 \times S_2 $ represents a  Cartesian product where every $ S_i \subseteq \F $ is finite. For convenience of the reader, we add a subsection where we compute the dual of multiplicity codes $ \mathcal{M}(S,2,k) $ for $ m = r = 2 $ and $ k \leq N = (2s_1-1) + (2s_2-1) $ (recall that for larger $ k $ such codes become the total space, see Theorem \ref{th multiplicity code dimension}). For simplicity, we will use the notation
$$ \partial_{x^iy^j} f = f^{(i,j)}(x,y) $$
for the $ (i,j)$th Hasse derivative of the bivariate polynomial $ f = f(x,y) \in \F[x,y] $. We may order the Hasse derivatives of order $ (i,j) \leq (1,1) $ of $ f $ in $ \aaa \in S $ as
$$ \left( f(\aaa), \partial_xf(\aaa), \partial_yf(\aaa), \partial_{xy}f(\aaa) \right) \in \F^4, $$
i.e., according to the graded lexicographic order with $ x \prec y $. Thus, the derivatives of total order $ i+j < 2 $ are the first three elements, i.e.,
$$ \left( f(\aaa), \partial_xf(\aaa), \partial_yf(\aaa) \right) \in \F^3. $$
Let $ \pi : \F^4 \longrightarrow \F^3 $ and $ \overline{\pi} : \F^4 \longrightarrow \F $ be the projections onto the first $ 3 $ coordinates and the last $ 1 $ coordinate, respectively, and extend them coordinatewise to $ \pi : (\F^4)^{|S|} \longrightarrow (\F^3)^{|S|} $ and $ \overline{\pi} : (\F^4)^{|S|} \longrightarrow (\F^1)^{|S|} $, respectively. Then, 
$$ \mathcal{M}(S,2,k) = \pi (\mathcal{M}(S,\J_{(1,1)},k)). $$
In particular, by \cite[Th. 1.5.7]{pless}, we have
\begin{equation}
\mathcal{M}(S,2,k)^\perp = \pi\left( \mathcal{M}(S,\J_{(1,1)},k) \right)^\perp
 = \pi(\mathcal{M}(S,\J_{(1,1)},k)^\perp \cap \ker(\overline{\pi})),
\label{eq multiplicity as shortened m=2}
\end{equation}
which is usually known as a shortened code. Hence, we first compute the dual $ \mathcal{M}(S,\J_{(1,1)},k)^\perp $.

Similarly to the evaluation function, given $f\in \F[x,y]$ and $\aaa \in S$, we define the element $\mathbf{f}(\aaa) = \left( f(\aaa), \partial_xf(\aaa), \partial_yf(\aaa), \partial_{xy}f(\aaa) \right) \in \F^4$.

\begin{proposition}\label{25.03.05}
Let $h_{1},\ldots,h_{s_1}$ (resp. $\ell_{1},\ldots,\ell_{s_2}$) be the indicator functions and $G_1$ (resp. $G_2$) the vanishing polynomials of $S_1$ (resp. $S_2$). The dual of $\mathcal{M}(S,\J_{(1,1)},k)
= \{ (\mathbf{f}(\aaa))_{\aaa \in S} : f \in \F[x,y]_{\leq k} \}
$ is
\begin{align*}
\mathcal{M}(S,\J_{(1,1)},k)^\perp & = \{ ( \mathbf{g}(\aaa) M_\aaa )_{\aaa \in S} : g \in \F[x,y]_{\leq N-k} \}\\
& = \{ ( \mathbf{g}(\aaa) M_\aaa )_{\aaa \in S} : g \in \F[x,y]_{\leq N-k}, \deg_x(g) \leq 2s_1-1, \deg_y(g) \leq 2s_2-1 \},
\end{align*}
where
$M_{\aaa}=
\left(G_1^{(1)}(a_i)
G_2^{(1)}(a_j)\right)^{-2}
\begin{pmatrix}
-\ell_{j}^{2(1)}(a_j) & 1\\
1 & 0
\end{pmatrix} \otimes
\begin{pmatrix}
-h_{i}^{2(1)}(a_i) & 1\\
1 & 0
\end{pmatrix}$ for $\aaa=(a_i,a_j)\in S$.
In particular, we have that $\mathcal{M}(S,\J_{(1,1)},k)^\perp$ is equivalent to $\mathcal{M}(S,\J_{(1,1)},N-k+1)$.
\end{proposition}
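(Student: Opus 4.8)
The plan is to follow the single‑variable argument of Proposition~\ref{25.05.07} almost verbatim, replacing single‑variable Hermite interpolation by its rectangular multivariate version (Corollary~\ref{cor hermite interpolation multivariate}). The two features that are new relative to the case $m=1$ are that we must first reduce the product $fg$ modulo $I(S;\J_{(1,1)})$ (in one variable the product already had small enough degree to equal its own interpolant, which fails here), and that $M_\aaa$ inherits a Kronecker‑product structure from the fact that the rectangular Hermite basis $h_{\aaa,\ii}(\xx)=\prod_{j}h_{S_j,a_j,i_j}(x_j)$ has separated variables. Throughout I would fix a graded monomial ordering $\prec$ (so that reduction modulo an ideal never raises the total degree, Corollary~\ref{cor reduced form properties}), write $\aaa=(a_i,a_j)$, and order $\J_{(1,1)}$ as $(0,0)\prec(1,0)\prec(0,1)\prec(1,1)$, consistently with the definition of $\mathbf{f}(\aaa)$.

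The core step is the inclusion $\{(\mathbf{g}(\aaa)M_\aaa)_{\aaa\in S}:g\in\F[x,y],\ \deg g\le N-k\}\subseteq\mathcal{M}(S,\J_{(1,1)},k)^\perp$. Take $f$ with $\deg f<k$ (as in Definition~\ref{def multi codes general}) and $g$ with $\deg g\le N-k$, so that $\deg(fg)\le N-1$. Let $\overline{fg}$ be the reduced form of $fg$ modulo $I(S;\J_{(1,1)})$; by Remark~\ref{remark footprint of box} the footprint is the box $[0,2s_1-1]\times[0,2s_2-1]$, so $\deg_x\overline{fg}\le 2s_1-1$ and $\deg_y\overline{fg}\le 2s_2-1$, while $\overline{fg}$ has the same Hasse derivatives on $S$ as $fg$ and $\deg\overline{fg}\le\deg(fg)\le N-1$. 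Since the rectangular Hermite polynomials $h_{\aaa,\ii}$ lie in $\langle\Delta_\prec(I(S;\J_{(1,1)}))\rangle_\F$ and satisfy $h_{\aaa,\ii}^{(\ii')}(\aaa')=\delta_{\aaa,\aaa'}\delta_{\ii,\ii'}$, Lemma~\ref{lemma footprint reduced form} gives
\[
\overline{fg}=\sum_{\aaa\in S}\ \sum_{\ii\in\J_{(1,1)}}(fg)^{(\ii)}(\aaa)\,h_{\aaa,\ii}.
\]
Comparing the coefficient of $x^{2s_1-1}y^{2s_2-1}$ on both sides, the left‑hand side is $0$ because $\deg\overline{fg}<N$; on the right‑hand side I would expand $(fg)^{(\ii)}(\aaa)=\sum_{\jj+\kk=\ii}f^{(\jj)}(\aaa)g^{(\kk)}(\aaa)$ by the Leibniz rule (Proposition~\ref{prop leibniz}) and factor $\coef(h_{\aaa,\ii},x^{2s_1-1}y^{2s_2-1})=\coef(h_{S_1,a_i,i_1},x^{2s_1-1})\coef(h_{S_2,a_j,i_2},y^{2s_2-1})$, the two univariate factors being computed exactly as in the proof of Proposition~\ref{25.05.07}. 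Re‑indexing the triple sum by the pairs $(\jj,\kk)$ collapses it to $\sum_{\aaa\in S}\mathbf{f}(\aaa)\,M_\aaa\,\mathbf{g}(\aaa)^\intercal$, and the multiplicativity of the Hermite basis forces $M_\aaa$ to be exactly the stated matrix: its $(\jj,\kk)$‑entry is the product of the two univariate leading coefficients when $\jj+\kk\in\J_{(1,1)}$ and is $0$ otherwise, which equals $\bigl(G_1^{(1)}(a_i)G_2^{(1)}(a_j)\bigr)^{-2}$ times the Kronecker product of the two $2\times2$ blocks. As each $M_\aaa$ is symmetric, $\sum_{\aaa}\mathbf{f}(\aaa)M_\aaa\mathbf{g}(\aaa)^\intercal=0$ says precisely that $(\mathbf{g}(\aaa)M_\aaa)_{\aaa}$ is orthogonal to $(\mathbf{f}(\aaa))_{\aaa}$, i.e. lies in $\mathcal{M}(S,\J_{(1,1)},k)^\perp$.

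For the reverse inclusion I would argue by dimension. Each $M_\aaa$ is invertible: the $2\times2$ blocks $\bigl(\begin{smallmatrix}\ast&1\\1&0\end{smallmatrix}\bigr)$ have determinant $-1$, so the Kronecker product has determinant $1$, and the scalar $\bigl(G_1^{(1)}(a_i)G_2^{(1)}(a_j)\bigr)^{-2}$ is nonzero. Hence $g\mapsto(\mathbf{g}(\aaa)M_\aaa)_{\aaa}$ has the same kernel on $\F[x,y]_{\le N-k}$ as $g\mapsto(\mathbf{g}(\aaa))_{\aaa}$, so the left‑hand code has dimension $\dim\mathcal{M}(S,\J_{(1,1)},N-k+1)=\bigl|\{(a,b):0\le a\le 2s_1-1,\ 0\le b\le 2s_2-1,\ a+b\le N-k\}\bigr|$ by Theorem~\ref{th multiplicity code dimension}. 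The complementation $(a,b)\mapsto(2s_1-1-a,\,2s_2-1-b)$ is an involution of the box sending $\{a+b\le N-k\}$ onto the complement of $\{a+b\le k-1\}$, so this number equals $4s_1s_2-\dim\mathcal{M}(S,\J_{(1,1)},k)=\dim\mathcal{M}(S,\J_{(1,1)},k)^\perp$, forcing the two codes to coincide. The second description (with $\deg_x,\deg_y$ bounded) then follows from Corollary~\ref{cor reduced multiplicity code} since $M_\aaa$ is invertible and the footprint is the box. Finally, $\phi:(\cc_\aaa)_{\aaa}\mapsto(\cc_\aaa M_\aaa)_{\aaa}$ is a linear isometry for the $4$‑folded Hamming metric by Proposition~\ref{prop isometries}, and it carries $\mathcal{M}(S,\J_{(1,1)},N-k+1)$ onto $\mathcal{M}(S,\J_{(1,1)},k)^\perp$; hence the two codes are equivalent.

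I expect the only real obstacle to be the bookkeeping in the second step: collapsing the triple sum into the bilinear form with $M_\aaa$ in exactly the stated shape — matching the ordering $(0,0),(1,0),(0,1),(1,1)$ with the two Kronecker factors, checking that the entries indexed by pairs $(\jj,\kk)$ with $\jj+\kk\notin\J_{(1,1)}$ really vanish, and pinning down the global scalar from the univariate leading coefficients. A minor point worth stating explicitly is the degree budget $\deg(fg)\le N-1$, which is what makes the coefficient of $x^{2s_1-1}y^{2s_2-1}$ in $\overline{fg}$ vanish and which fixes the degree bounds in the statement; everything else is a transcription of the $m=1$ argument plus the reduction‑modulo‑$I(S;\J_{(1,1)})$ step.
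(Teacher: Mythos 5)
Your proposal is correct and follows essentially the same route as the paper: reduce $fg$ modulo $I(S;\J_{(1,1)})$, expand the reduced form in the rectangular Hermite basis, extract the coefficient of $x^{2s_1-1}y^{2s_2-1}$ (which vanishes since $\deg(fg)\leq N-1$), apply the Leibniz rule to obtain $\sum_{\aaa}\mathbf{f}(\aaa)M_\aaa\mathbf{g}(\aaa)^\intercal=0$ with the Kronecker-product matrix $M_\aaa$, and conclude equality by a dimension count, with the reduced form of $g$ giving the second description. The only (harmless) difference is expository: you make the dimension comparison explicit via the complementation involution on the footprint box, where the paper simply cites Theorem \ref{th multiplicity code dimension}/Corollary \ref{cor dimension J_rr}, and you spell out the ``in particular'' equivalence via Proposition \ref{prop isometries}, which the paper leaves implicit.
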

\begin{proof}
Let $ f,g \in \F[x,y] $ be such that $ \deg(f) \leq k-1 $ and $ \deg(g) \leq N-k $. By Corollary \ref{cor reduced form properties}, $ \overline{fg} $ has the same Hasse derivatives as $ fg $. Therefore, using the Leibniz rule (Proposition \ref{prop leibniz}) and the interpolation of Corollary~\ref{cor hermite interpolation multivariate}, we get
\begin{align*}
\label{eq hermite product 2 vars}
\overline{fg} &= \sum_{\aaa \in S}
\mathbf{fg}(\aaa) \cdot \left(h_{\aaa,(0,0)}, h_{\aaa,(1,0)}, h_{\aaa,(0,1)}, h_{\aaa,(1,1)} \right)\\
&= \sum_{\aaa \in S}
( (fg)(\aaa)h_{\aaa,(0,0)} + 
\partial_{x}(fg)(\aaa)h_{\aaa,(1,0)}
+ \partial_{y}(fg)(\aaa)h_{\aaa,(0,1)}
+ \partial_{xy}(fg)(\aaa)h_{\aaa,(1,1)})\nonumber.\\
&= \sum_{\aaa \in S} ( f(\aaa)g(\aaa) h_{\aaa,(0,0)} + (\partial_xf(\aaa)g(\aaa) + f(\aaa)\partial_xg(\aaa)) h_{\aaa,(1,0)} + (\partial_yf(\aaa)g(\aaa) +f(\aaa) \partial_yg(\aaa)) h_{\aaa,(0,1)}  \\
& + (\partial_{xy}f(\aaa)g(\aaa) + \partial_xf(\aaa)\partial_yg(\aaa) + \partial_yf(\aaa)\partial_xg(\aaa) + f(\aaa) \partial_{xy}g(\aaa)) h_{\aaa,(1,1)} ) \\
& = \sum_{\aaa \in S} \left( f(\aaa), \partial_xf(\aaa), \partial_yf(\aaa), \partial_{xy}f(\aaa) \right)
\begin{pmatrix}
h_{\aaa,(0,0)} & h_{\aaa,(1,0)} & h_{\aaa,(0,1)} & h_{\aaa,(1,1)} \\
h_{\aaa,(1,0)} & 0 & h_{\aaa,(1,1)} & 0 \\
h_{\aaa,(0,1)} & h_{\aaa,(1,1)} & 0 & 0 \\
h_{\aaa,(1,1)} & 0 & 0 & 0
\end{pmatrix}
\begin{pmatrix}
g(\aaa) \\
\partial_xg(\aaa) \\
\partial_yg(\aaa) \\
\partial_{xy}g(\aaa)
\end{pmatrix}\\
& = \sum_{\aaa=(a_i,a_j) \in S} \mathbf{f}(\aaa)
\begin{pmatrix}
h_{i,0}\ell_{j,0} & h_{i,1}\ell_{j,0} & h_{i,0}\ell_{j,1} &  h_{i,1}\ell_{j,1} \\
h_{i,1}\ell_{j,0} & 0 & h_{i,1}\ell_{j,1} & 0 \\
h_{i,0}\ell_{j,1} & h_{i,1}\ell_{j,1} & 0 & 0 \\
h_{i,1}\ell_{j,1} & 0 & 0 & 0
\end{pmatrix}
\mathbf{g}(\aaa)^\intercal \quad \text{(by Corollary~\ref{cor hermite interpolation multivariate})}\\
& = \sum_{\aaa=(a_i,a_j) \in S} \mathbf{f}(\aaa)
\begin{pmatrix}
\ell_{j,0} & \ell_{j,1}\\
\ell_{j,1} & 0
\end{pmatrix} \otimes
\begin{pmatrix}
h_{i,0} & h_{i,1}\\
h_{i,1} & 0
\end{pmatrix}
\mathbf{g}(\aaa)^\intercal.
\end{align*}
Note $ \overline{fg} \in \langle \Delta_\prec (I(S;\J_{(1,1)})) \rangle_\F $, i.e., $ \deg_x(\overline{fg}) \leq 2s_1-1 $ and $ \deg_y(\overline{fg}) \leq 2s_2-1 $ (see Remark \ref{remark footprint of box}). By Corollary~\ref{cor reduced form properties}, $ \deg(\overline{fg}) \leq \deg(fg) \leq N-1 < (2s_1-1)+(2s_2-1) $. Thus, $ \coef(\overline{fg}, x_1^{2s_1-1} x_2^{2s_2-1} ) = 0 $. Using Proposition~\ref{p:dualm=1} to compute the coefficients of $h_{i,0},h_{i,1} $ and $ \ell_{j,0},\ell_{j,1}$ in $ x_1^{2s_1-1} $ and $ x_2^{2s_2-1} $, respectively, we get
$0 = \sum_{\aaa \in S} \mathbf{f}(\aaa) M_\aaa \mathbf{g}(\aaa)^\intercal,$
where $M_{\aaa}$ is given in the proposition statement for $\aaa=(a_i,a_j)\in S$. By the symmetry of $ M_\aaa$, this means that 
$$ \left\lbrace \left( (g(\aaa), \partial_xg(\aaa), \partial_yg(\aaa), \partial_{xy}g(\aaa)) M_\aaa \right)_{\aaa \in S} : g \in \F[x,y]_{\leq N-k} \right\rbrace \subseteq \mathcal{M}(S,\J_{(1,1)},k)^\perp , $$
and they are equal since both have the same dimension by Theorem \ref{th multiplicity code dimension} or by Corollary \ref{cor dimension J_rr} (it is the same calculation as with classical Reed--Muller codes over the whole space $ \F_q^2 $ but replacing $ q $ by $ 2s $). 

Finally, for each $ g \in \F[x,y]_{\leq N-k} $, consider its reduced form $ \overline{g} \in \langle \Delta_\prec(I(S;\J_{(1,1)})) \rangle_\F $ with respect to any graded monomial ordering. By Remark \ref{remark footprint of box}, we have that $ \deg_x(\overline{g}) \leq 2s_1 - 1 $ and $ \deg_y(\overline{g}) \leq 2s_2 - 1 $. By Corollary \ref{cor reduced form properties}, $ \deg(\overline{g}) \leq \deg(g) \leq N-k $ and $ \partial_{x^iy^j}\overline{g}(\aaa) = \partial_{x^iy^j}g(\aaa) $ for all $ i,j \in \{ 0,1\} $ and all $ \aaa \in S $. Hence, we conclude the second equality for the dual is valid.
\end{proof}
We now compute the dual of classical multiplicity codes for $ m=r=2 $ by combining Proposition~\ref{25.03.05} and Eq.~(\ref{eq multiplicity as shortened m=2}).
\begin{corollary} \label{cor dual multi m=2}
Let $h_{1},\ldots,h_{s_1}$ (resp. $\ell_{1},\ldots,\ell_{s_2}$) be the indicator functions and $G_1$ (resp. $G_2$) the vanishing polynomials of $S_1$ (resp. $S_2$). The dual of $\mathcal{M}(S,2,k)$ is given by
\begin{align*}
\mathcal{M}(S,2,k)^\perp & = \{ ( (\partial_xg(\aaa), \partial_yg(\aaa), \partial_{xy}g(\aaa)) N_\aaa )_{\aaa \in S} : g \in \F[x,y]_{\leq N-k}, g(\aaa) = 0, \forall \aaa \in S \} \\
 & = \{ ( (\partial_xg(\aaa), \partial_yg(\aaa), \partial_{xy}g(\aaa)) N_\aaa )_{\aaa \in S} : \\
 & \phantom{= \{ ( (\partial_xg(\aaa),} g \in \F[x,y]_{\leq N-k}, \deg_x(g) \leq 2s_1-1, \deg_y(g) \leq 2s_2-1, g(\aaa) = 0, \forall \aaa \in S \},
\end{align*}
where $N_{\aaa}=
\left(G_1^{(1)}(a_i)
G_2^{(1)}(a_j)\right)^{-2}
\begin{pmatrix}
-\ell_{j}^{2(1)}(a_j) & 0 & 1\\
-h_{j}^{2(1)}(a_i) & 1 & 0\\
1 & 0 & 0
\end{pmatrix}$ for $\aaa=(a_i,a_j)\in S$.
\end{corollary}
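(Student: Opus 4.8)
The plan is to feed the description of $\mathcal{M}(S,\J_{(1,1)},k)^\perp$ from Proposition~\ref{25.03.05} into the shortening identity~(\ref{eq multiplicity as shortened m=2}), namely $\mathcal{M}(S,2,k)^\perp = \pi\bigl(\mathcal{M}(S,\J_{(1,1)},k)^\perp \cap \ker(\overline{\pi})\bigr)$. Put $\lambda_\aaa = \bigl(G_1^{(1)}(a_i)G_2^{(1)}(a_j)\bigr)^{-2} \neq 0$ for $\aaa=(a_i,a_j)\in S$, so that by Proposition~\ref{25.03.05} the words of $\mathcal{M}(S,\J_{(1,1)},k)^\perp$ are exactly $\bigl(\mathbf{g}(\aaa)M_\aaa\bigr)_{\aaa\in S}$ with $g\in\F[x,y]_{\leq N-k}$, where $M_\aaa=\lambda_\aaa\,A_\aaa\otimes B_\aaa$ and $A_\aaa,B_\aaa$ are the two $2\times 2$ matrices of that proposition. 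The rows and columns of $M_\aaa$ are indexed, in this order, by the Hasse derivatives $f(\aaa),\partial_xf(\aaa),\partial_yf(\aaa),\partial_{xy}f(\aaa)$ and $g(\aaa),\partial_xg(\aaa),\partial_yg(\aaa),\partial_{xy}g(\aaa)$; in particular $\overline{\pi}$ picks out the last coordinate.

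First I would identify $\ker(\overline{\pi})$ in this parametrization. Expanding $A_\aaa\otimes B_\aaa$, its last column is the Kronecker product of the second columns of $A_\aaa$ and of $B_\aaa$, both equal to $(1,0)^{\intercal}$; hence the last column of $M_\aaa$ is $\lambda_\aaa(1,0,0,0)^{\intercal}$, and the $\partial_{xy}$-entry of $\mathbf{g}(\aaa)M_\aaa$ equals $\lambda_\aaa\,g(\aaa)$. Since $\lambda_\aaa\neq 0$, a word $\bigl(\mathbf{g}(\aaa)M_\aaa\bigr)_\aaa$ lies in $\ker(\overline{\pi})$ exactly when $g(\aaa)=0$ for every $\aaa\in S$. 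Thus $\mathcal{M}(S,\J_{(1,1)},k)^\perp\cap\ker(\overline{\pi})$ is parametrized by those $g\in\F[x,y]_{\leq N-k}$ that vanish on all of $S$.

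Next I would apply $\pi$ to such words. If $g(\aaa)=0$ then $\mathbf{g}(\aaa)=(0,\partial_xg(\aaa),\partial_yg(\aaa),\partial_{xy}g(\aaa))$, so the first row of $M_\aaa$ is multiplied by $0$ and disappears, while $\pi$ deletes the last (fourth) output coordinate; therefore $\pi(\mathbf{g}(\aaa)M_\aaa)=(\partial_xg(\aaa),\partial_yg(\aaa),\partial_{xy}g(\aaa))\,\widetilde{M}_\aaa$, where $\widetilde{M}_\aaa$ is the $3\times 3$ matrix obtained from $M_\aaa$ by removing its first row and its last column. A short expansion of $A_\aaa\otimes B_\aaa$ shows that $\widetilde{M}_\aaa$ is exactly the matrix $N_\aaa$ of the statement, and combining this with~(\ref{eq multiplicity as shortened m=2}) gives the first description of $\mathcal{M}(S,2,k)^\perp$.

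For the second description I would, exactly as in the last paragraph of the proof of Proposition~\ref{25.03.05}, replace each $g$ by its reduced form $\overline{g}$ in $S$ and $\J_{(1,1)}$ with respect to a graded monomial ordering. By Corollary~\ref{cor reduced form properties}, $\overline{g}$ has the same Hasse derivatives of orders in $\J_{(1,1)}$ on $S$ as $g$ (in particular $\overline{g}(\aaa)=g(\aaa)=0$), satisfies $\deg(\overline{g})\leq\deg(g)\leq N-k$, and belongs to $\langle\Delta_\prec(I(S;\J_{(1,1)}))\rangle_\F$, so $\deg_x(\overline{g})\leq 2s_1-1$ and $\deg_y(\overline{g})\leq 2s_2-1$ by Remark~\ref{remark footprint of box}. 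Since the word depends only on the Hasse derivatives of $g$ of orders $\leq(1,1)$ on $S$, this substitution does not change the code, giving the equality of the two sets. The argument is essentially bookkeeping; the one point that needs care is matching the $\overline{\pi}$-vanishing condition with $g(\aaa)=0$ — which relies on the last column of $M_\aaa$ being a scalar multiple of $(1,0,0,0)^{\intercal}$ — and then reading off $N_\aaa$ correctly as the corresponding $3\times 3$ submatrix of $M_\aaa$.
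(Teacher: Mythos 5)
Your proposal is correct and follows essentially the same route as the paper: feed Proposition~\ref{25.03.05} into the shortening identity~(\ref{eq multiplicity as shortened m=2}), use the fact that the last column of $M_\aaa$ is $\lambda_\aaa(1,0,0,0)^\intercal$ (the paper phrases this as the anti-triangular shape of $M_\aaa$) to identify $\ker(\overline{\pi})$ with the condition $g(\aaa)=0$ for all $\aaa\in S$, and then read off $N_\aaa$ as the relevant $3\times 3$ submatrix after projecting. Your reduced-form argument for the degree-restricted description is a harmless minor variant of the paper's appeal to the second expression already present in Proposition~\ref{25.03.05}.
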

\begin{proof}
By Eq.~(\ref{eq multiplicity as shortened m=2}), we need to compute
$ \C =\mathcal{M}(S,\J_{(1,1)},k)^\perp \cap \ker(\overline{\pi}) $
and project it onto the first $ 3 $ coordinates in each block of $ 4 $ coordinates. By Proposition~\ref{25.03.05}, $\C$ is formed by the codewords
\begin{equation}
\left( (g(\aaa), \partial_xg(\aaa), \partial_yg(\aaa), \partial_{xy}g(\aaa)) M_\aaa \right)_{\aaa \in S},
\label{eq codewords multi m=2}
\end{equation}
for $ g \in \F[x,y]_{\leq N-k} $  (respectively, $ \deg_x(g) \leq 2s_1-1 $ and $ \deg_y(g) \leq 2s_2-1 $), where the last component in every block of $ 4 $ coordinates is zero. That is, where $ g(\aaa) = 0 $ for all $ \aaa \in S $ (by the anti-triangular shape of $ M_\aaa $). Hence, the codewords in (\ref{eq codewords multi m=2}) have the form
$$ \left( (0, \partial_xg(\aaa), \partial_yg(\aaa), \partial_{xy}g(\aaa)) M_{\aaa} \right)_{\aaa \in S} = \left( \left( (\partial_xg(\aaa), \partial_yg(\aaa), \partial_{xy}g(\aaa) ) N_{\aaa} , 0 \right) \right)_{\aaa \in S}, $$
for $ g \in \F[x,y]_{\leq N-k} $ (respectively, $ \deg_x(g) \leq 2s_1-1 $ and $ \deg_y(g) \leq 2s_2-1 $) and $ g(\aaa)=0 $ for all $ \aaa \in S $. We obtain the result by projecting onto the first $ 3 $ coordinates in each block.
\end{proof}

In Corollary \ref{cor dual multi m=2}, we determined the dual of a multiplicity code as a vector space with certain equations, that is, it is determined implicitly. In the following result, we show how to regard this dual also as an evaluation code, and we explicitly compute one of its bases (i.e., one of its generator matrices or, equivalently, a parity-check matrix of the original multiplicity code).

\begin{proposition}\label{p:basedualm=2}
Using the notation from Theorem \ref{T:grobnerbasis}, let 
$$
\begin{aligned}
&A_1=G_1(x)\cdot \{x^\alpha y^\beta : 0\leq \alpha <s_1,\; 0\leq \beta <2s_2,\; \alpha+\beta \leq N-k-s_1 \},\\
&A_2=G_2(y)\cdot \{x^\alpha y^\beta : 0\leq \alpha < 2s_1,\; 0\leq \beta <s_2,\; \alpha+\beta \leq N-k-s_2 \}.
\end{aligned}
$$
Then,  
$$ 
\mathcal{M}(S,2,k)^\perp = \left\lbrace  \left( (\partial_xg(\aaa), \partial_yg(\aaa), \partial_{xy}g(\aaa)) \cdot N_\aaa \right)_{\aaa \in S} : g\in \langle A_1\cup A_2 \rangle_\F \right\rbrace.
$$
Finally, let $ \prec $ be a graded monomial ordering and let $ A \subseteq A_1 \cup A_2 $ be such that its initial monomials with respect to $ \prec $ are all different and are exactly those in
$$
 T = \{ x^\alpha y^\beta : \alpha <2s_1,\;  \beta < 2s_2,\; \alpha+\beta \leq N-k, \textrm{ and } (s_1\leq \alpha \text{ or } s_2\leq \beta) \}.
$$
Then a basis of $\mathcal{M}(S,2,k)^\perp$ is given by
$$
\{ \left( (\partial_xg(\aaa), \partial_yg(\aaa), \partial_{xy}g(\aaa)) \cdot N_\aaa \right)_{\aaa \in S} : g\in A \} .
$$
\end{proposition}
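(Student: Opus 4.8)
The plan is to deduce this from Corollary \ref{cor dual multi m=2} by exhibiting $A_1 \cup A_2$ as a generating set for the relevant polynomial space and then extracting a basis via a footprint-type argument.

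\textbf{Step 1: $\langle A_1 \cup A_2\rangle_\F$ is exactly the space of polynomials appearing in Corollary \ref{cor dual multi m=2}.} By Corollary \ref{cor dual multi m=2}, $\mathcal{M}(S,2,k)^\perp$ is the image under the (injective, by the anti-triangular invertible $N_\aaa$) map $g \mapsto ((\partial_xg(\aaa),\partial_yg(\aaa),\partial_{xy}g(\aaa))N_\aaa)_{\aaa\in S}$ of the space $V = \{g \in \F[x,y]_{\leq N-k} : \deg_x(g)\leq 2s_1-1,\ \deg_y(g)\leq 2s_2-1,\ g(\aaa)=0\ \forall \aaa\in S\}$. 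The condition $g(\aaa)=0$ for all $\aaa\in S$ means $g \in I(S;\J_{(0,0)}) = \langle G_1(x), G_2(y)\rangle$, so $g = G_1(x)p + G_2(y)q$ for some $p,q$; after reducing $p$ modulo $G_2(y)$ in the $y$-variable and $q$ modulo $G_1(x)$ in the $x$-variable (which does not change $g$ evaluated against nothing — wait, it changes $g$, so instead one keeps the degree bookkeeping directly) one sees every element of $V$ is an $\F$-linear combination of monomials $G_1(x)x^\alpha y^\beta$ with $\alpha < s_1$, $\beta < 2s_2$ and $G_2(y)x^\alpha y^\beta$ with $\alpha < 2s_1$, $\beta < s_2$, subject to the total-degree bound $\leq N-k$ (which translates to $\alpha+\beta \leq N-k-s_1$ resp. $N-k-s_2$). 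This is precisely $\langle A_1\cup A_2\rangle_\F$. Conversely each element of $A_1\cup A_2$ lies in $V$ by construction, so $\langle A_1\cup A_2\rangle_\F = V$, giving the first displayed equality.

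\textbf{Step 2: the set $T$ of "new" leading monomials.} A monomial $x^\alpha y^\beta$ is the leading monomial (for a graded ordering $\prec$) of some element of $A_1\cup A_2$ precisely when it is divisible by $G_1(x) = x^{s_1}+\cdots$ or by $G_2(y)=y^{s_2}+\cdots$ in leading-monomial sense, i.e. $\alpha \geq s_1$ or $\beta \geq s_2$, while staying inside the footprint box $\alpha < 2s_1$, $\beta < 2s_2$ and satisfying $\alpha+\beta\leq N-k$. This is exactly the set $T$. One checks that the leading monomials of $G_1(x)x^{\alpha'}y^{\beta'}$ (for the admissible $\alpha',\beta'$ defining $A_1$) range over $\{x^\alpha y^\beta : s_1\leq\alpha<2s_1,\ \beta<2s_2,\ \alpha+\beta\leq N-k\}$ and those from $A_2$ over $\{x^\alpha y^\beta : \alpha<2s_1,\ s_2\leq\beta<2s_2,\ \alpha+\beta\leq N-k\}$, whose union is $T$; so one can indeed pick a subset $A\subseteq A_1\cup A_2$ whose leading monomials are pairwise distinct and equal to $T$.

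\textbf{Step 3: $A$ is a basis.} The elements of $A$ have pairwise distinct leading monomials, hence they are $\F$-linearly independent, and since $|A| = |T|$ it suffices to show $|T| = \dim_\F \mathcal{M}(S,2,k)^\perp$. Because the evaluation-type map $g\mapsto ((\partial_xg(\aaa),\partial_yg(\aaa),\partial_{xy}g(\aaa))N_\aaa)_{\aaa\in S}$ restricted to $V$ is injective (each $N_\aaa$ invertible, and distinct reduced polynomials in $\langle\Delta_\prec(I(S;\J_{(1,1)}))\rangle_\F$ have distinct Hasse-derivative tuples by Lemma \ref{lemma footprint reduced form}; note $V\subseteq\langle\Delta_\prec(I(S;\J_{(1,1)}))\rangle_\F$ by the degree bounds and Remark \ref{remark footprint of box}), we get $\dim_\F\mathcal{M}(S,2,k)^\perp = \dim_\F V$. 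Finally $\dim_\F V = |T|$: the box footprint $\Delta_\prec(I(S;\J_{(1,1)}))$ restricted to total degree $\leq N-k$ splits as the monomials with $\alpha<s_1$ and $\beta<s_2$ (which cannot occur in $V$, since such polynomials vanishing on all of $S$ must be $0$) together with $T$; and $V$, being the kernel inside that footprint of the evaluation $g\mapsto(g(\aaa))_{\aaa\in S}$ which has image of dimension equal to the number of monomials with $\alpha<s_1,\beta<s_2$ and degree $\leq N-k$, has dimension exactly $|T|$. This yields the claimed basis.

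\textbf{Main obstacle.} The delicate point is Step 1 — showing that imposing $g\in\langle G_1(x),G_2(y)\rangle$ \emph{together with} the total-degree bound $\deg(g)\leq N-k$ and the per-variable bounds still leaves exactly $\langle A_1\cup A_2\rangle_\F$, i.e. that one can always choose the decomposition $g = G_1(x)p + G_2(y)q$ with $\deg_y(p)<2s_2$ and $\deg_x(q)<2s_1$ without violating the total-degree bound. One has to argue carefully that reducing the cofactors does not increase total degree (here a graded ordering and the fact that $G_1,G_2$ are monic with the right leading terms are used), and that the overlap $G_1(x)G_2(y)\F[x,y]$ is correctly accounted for so that no dimension is double-counted; equivalently, one must verify $\dim_\F\langle A_1\cup A_2\rangle_\F = |T|$ directly by an inclusion–exclusion count on lattice points, matching it with the kernel dimension computed in Step 3.
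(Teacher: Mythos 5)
Your overall architecture matches the paper's (reduce to Corollary \ref{cor dual multi m=2}, identify the leading monomials of $A_1\cup A_2$ with $T$, and finish with a linear-independence-plus-dimension count), but there is a genuine gap exactly where you flag it: the first displayed equality, i.e. that every $g$ admissible in Corollary \ref{cor dual multi m=2} lies in $\langle A_1\cup A_2\rangle_\F$. Your inline argument in Step 1 is garbled (the attempted reduction of the cofactors is abandoned mid-sentence), and the ``Main obstacle'' paragraph concedes that the decomposition $g=G_1(x)p+G_2(y)q$ with the required per-variable \emph{and} total-degree bounds has not been established. This is not a cosmetic issue: in an arbitrary decomposition the summands $G_1p$ and $G_2q$ can have total degree larger than $\deg(g)$ because of cancellation, and moving multiples of $G_1G_2$ between the two cofactors need not repair this. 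The paper closes this step by invoking the multivariate division algorithm with respect to the reduced Gr\"obner basis $\{G_1(x),G_2(y)\}$ of $I(S;\J_1)$ (Theorem \ref{T:grobnerbasis}) for a graded ordering: since $g\in I(S;\J_1)$, division gives $g=f_1G_1+f_2G_2$ with $\deg(f_iG_i)\leq\deg(g)\leq N-k$, $\deg_x(f_iG_i)<2s_1$ and $\deg_y(f_iG_i)<2s_2$, which is precisely membership in $\langle A_1\cup A_2\rangle_\F$. The proposed alternative (``inclusion--exclusion count of $\dim_\F\langle A_1\cup A_2\rangle_\F$'') is likewise only sketched, so as written the spanning claim remains unproven.

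It is worth noting that the ingredients you do prove would let you close the gap without the division algorithm, if reassembled: every element of $A_1\cup A_2$ lies in the space $V$ of Corollary \ref{cor dual multi m=2}, your Step 2 produces $A\subseteq A_1\cup A_2$ with $|A|=|T|$ pairwise distinct initial monomials (hence $\F$-linearly independent elements of $V$), and your Step 3 rank--nullity computation gives $\dim_\F V=|T|$; together these force $\langle A\rangle_\F=\langle A_1\cup A_2\rangle_\F=V$, which yields both displayed statements simultaneously. Your Step 3 itself is sound but compressed: the claim on the image dimension of the evaluation map needs Corollary \ref{cor reduced form properties} (degree-nonincreasing reduction for a graded order) and the injectivity statement of Lemma \ref{lemma footprint reduced form}, and it replaces the paper's count, which instead uses the bijection $x^\alpha y^\beta\mapsto x^{2s_1-1-\alpha}y^{2s_2-1-\beta}$ between $T$ and the degree-$\geq k$ monomials of $\Delta_\prec(I(S;\J_2))$ together with Theorem \ref{th multiplicity code dimension}; either count is acceptable once made explicit.
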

\begin{proof}
Let $ g \in \F[x,y]_{\leq N-k} $ be such that $ \deg_x(g) \leq 2s_1 - 1  $, $ \deg_y(g) \leq 2s_2 - 1 $, and $ g(\aaa) = 0 $, for all $ \aaa \in S $. By definition, we have $ g \in I(S;\J_1) $. Since $ \{ G_1(x),G_2(y) \} $ forms a Gr{\"o}bner basis of $ I(S;\J_1) $ with respect to $ \prec $ (Theorem \ref{T:grobnerbasis}), then by the Euclidean division algorithm \cite[Th. 2.3.3]{clo1}, we can write 
$$
g(x,y)=f_1(x,y)G_1(x)+f_2(x,y)G_2(y), 
$$
where $ \deg_x(f_iG_i) < 2s_1 $, $ \deg_y(f_iG_i) < 2s_2 $ and $ \deg(f_iG_i) \leq N-k $, for $ i = 1,2 $.
Hence $ g \in \langle A_1 \cup A_2 \rangle_\F $, and the first part on the generators follows. 

Next, since $ \ini_\prec(G_1(x)) = x^{s_1} $ and $ \ini_\prec(G_2(y)) = y^{s_2} $, the set of initial monomials of all the polynomials in $ A_1 \cup A_2 $ is precisely $ T $, hence we may choose $ A \subseteq A_1 \cup A_2 $ as in the proposition (notice that $ \ini_\prec(A_1) \cap \ini_\prec(A_2) \neq \varnothing $ thus necessarily $ A \subsetneq A_1 \cup A_2 $). 

Since the initial monomials of the polynomials in $ A $ are all distinct, then such polynomials are $ \F $-linearly independent, and by Lemma \ref{lemma footprint reduced form}, so are their Hasse derivatives. Since $ |A| = |T| $, then the only thing left to prove is $ |T| = \dim (\mathcal{M}(S,2,k)^\perp) $.

The map $x^\alpha y^\beta \mapsto x^{2s_1-\alpha-1}y^{2s_2-\beta-1}$ gives a bijection between $ T $ and the monomials in $ \Delta_\prec(I(S;\J_2)) $ of degree $ \geq k $. The number of such monomials is $ tn-\dim (\mathcal{M}(S,2,k)) $ by Theorem \ref{th multiplicity code dimension} (recall that $ t = |\J_2| $, $ n = |S| $ and $ tn = |\Delta_\prec(I(S;\J_2))| $ by Lemma \ref{lemma footprint reduced form}). Therefore,
$$ |T| = tn -\dim (\mathcal{M}(S,2,k))=\dim (\mathcal{M}(S,2,k)^\perp),$$ and we obtain the result.
%
\end{proof}

When we evaluate over the whole affine plane $ S = \F_q^2 $, we have the following simple formula for the matrices $ M_\aaa $ and $ N_\aaa $. 

\begin{remark} \label{remark N_a simple}
By Remark \ref{remark coef hermite r=2}, if $ S_1 = S_2 = \F_q $, i.e., $ S = \F_q^2 $, then for all $ (a,b) \in \F_q^2 $,
$$ M_{(a,b)} = \left( \begin{array}{cccc}
0 & 0 & 0 & 1 \\
0 & 0 & 1 & 0 \\
0 & 1 & 0 & 0 \\
1 & 0 & 0 & 0 
\end{array} \right) \quad \textrm{and} \quad N_{(a,b)} = \left( \begin{array}{ccc}
 0 & 0 & 1 \\
 0 & 1 & 0 \\
 1 & 0 & 0  
\end{array} \right). $$
\end{remark}

We illustrate the previous results with a worked example for $ q=m=r=2 $.

\begin{example} \label{ex worked example}
Consider $ q=m=r=2 $ and $ S_1 = S_2 = \F_2 $, i.e., $ S = \F_2^2 $, as in Example \ref{ex:dualnoesmult}. Set $ N = m(rs-1) = 6 $ and $ k = 4 $. The multiplicity code $ \mathcal{M}(S,r,k) = \mathcal{M}(\F_2^2,2,4) $ has dimension $ 10 $ and one of its bases, by Corollary \ref{cor reduced multiplicity code}, is given by the Hasse derivatives
$$ (f(a,b),\partial_x f(a,b), \partial_y f(a,b))_{(a,b) \in \F_2^2} \in (\F_2^3)^4 $$
of the monomials $ 1,x,y,x^2,xy,y^2,x^3,x^2y,xy^2,y^3 $. For instance, if $ f_1 = x $, then $ \partial_x f_1 = 1 $ and $ \partial_y f_1 = 0 $, since such Hasse derivatives coincide with classical derivatives (see Section \ref{sec preliminaries}). If we order $ S $ as $ (0,0) $, $ (1,0) $, $ (0,1) $ and $ (1,1) $, then the corresponding codeword is
$$ \mathbf{c}_1 = ((0,1,0), (1,1,0), (0,1,0), (1,1,0)) \in (\F_2^3)^4. $$
Similarly, if $ f_2 = xy $, we have $ \partial_x f_2 = y $ and $ \partial_y f_2 = x $, thus we obtain the codeword
$$ \mathbf{c}_2 = ((0,0,0), (0,0,1), (0,1,0), (1,1,1)) \in (\F_2^3)^4. $$
Now we look at the dual code $ \mathcal{M}(\F_2^2,2,4)^\perp $, which has dimension $ 12-10 = 2 $. By Proposition \ref{p:basedualm=2}, a basis of $ \mathcal{M}(\F_2^2,2,4)^\perp $ may be given by the Hasse derivatives
$$ ((\partial_x g(a,b), \partial_y g(a,b), \partial_{xy}g(a,b)) \cdot N_{(a,b)})_{(a,b) \in \F_2^2} = ( \partial_{xy}g(a,b), \partial_y g(a,b), \partial_x g(a,b))_{(a,b) \in \F_2^2} \in (\F_2^3)^4 $$
(where we have used Remark \ref{remark N_a simple} for $ N_{(a,b)} $) of the polynomials
$$ g_1 = G_1(x) = x^2+x \quad \textrm{and} \quad g_2 = G_2(y) = y^2+y . $$
Since $ \partial_xg_1 = 1 $ and $ \partial_y g_1 = \partial_{xy} g_1 = 0 $, the corresponding codeword is
$$ \mathbf{d}_1 = ((0,0,1), (0,0,1), (0,0,1), (0,0,1)) \in (\F_2^3)^4, $$
and we can easily verify that $ \mathbf{c}_1 \cdot \mathbf{d}_1 = \mathbf{c}_2 \cdot \mathbf{d}_1 = 0 $. Similarly for $ g_2 $. Using $ g_1 $ and $ g_2 $, a generator matrix of $ \mathcal{M}(\F_2^2,2,4)^\perp $, i.e., a parity-check matrix of $ \mathcal{M}(\F_2^2,2,4) $ is given by
$$ H = \left( \begin{array}{ccc|ccc|ccc|ccc}
0 & 0 & 1 & 0 & 0 & 1 & 0 & 0 & 1 & 0 & 0 & 1 \\
0 & 1 & 0 & 0 & 1 & 0 & 0 & 1 & 0 & 0 & 1 & 0
\end{array} \right) \in \F_2^{2 \times 12}. $$
\end{example}

\subsection{Dual in the general case} \label{subsec duals general}

We will first compute the dual of $ \mathcal{M}(S,\J_{\rr-\ones},k) $, where we fix $ \rr = r \ones $, for a positive integer $ r $, throughout this subsection. Recall that $ S = S_1 \times \cdots \times S_m $, where $ S_1, \ldots, S_m \subseteq \F $ are non-empty and of finite sizes $ s_i = |S_i| $, for $ i \in [m] $. 

\begin{definition} \label{def J_r ordering}
We order $ \J_{\rr - \ones} = \{ \ii_0, \ii_1, \ldots, \ii_{r^m-1} \} $ according to the graded lexicographic order. In particular, $ \ii_0 = \mathbf{0} $ and $ \ii_{r^m-1} = \rr - \ones $.
\end{definition}

The following technical lemma will be helpful when describing duals.

\begin{lemma} \label{lemma J_(m-1)(r-1)}
Ordering $ \J_{\rr - \ones} $ as in Definition \ref{def J_r ordering}, the first $ \binom{m+r-1}{m} $ elements of $ \J_{\rr - \ones} $ are exactly the $ \ii \in [0,r-1]^m $ such that $ |\ii| < r $, and the first $ r^m - \binom{m+r-1}{m} $ elements of $ \J_{\rr - \ones} $ are exactly the $ \ii \in [0,r-1]^m $ such that $ |\ii| < (m-1)(r-1) $. In other words,
\begin{equation*}
\begin{split}
\left\lbrace \ii_j : j \in \left[ 0, \binom{m+r-1}{m} - 1 \right] \right\rbrace & = \J_r , \\
\left\lbrace \ii_j : j \in \left[ 0,r^m - \binom{m+r-1}{m}-1 \right] \right\rbrace & = \J_{(m-1)(r-1)} \cap [0,r-1]^m.
\end{split}
\end{equation*}
\end{lemma}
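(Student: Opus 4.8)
The statement claims two things about the graded-lexicographic ordering of $\J_{\rr-\ones} = [0,r-1]^m$: first, that the initial segment of length $\binom{m+r-1}{m} = |\J_r|$ is exactly $\J_r$; second, that the initial segment of length $r^m - \binom{m+r-1}{m}$ is exactly $\J_{(m-1)(r-1)} \cap [0,r-1]^m$. Both are statements about how the graded lexicographic order interacts with the total-degree function on the box $[0,r-1]^m$. The key structural fact I would isolate first is this: in a graded order, every element of total degree $d$ precedes every element of total degree $d+1$; so an initial segment of $\J_{\rr-\ones}$ is determined by a "threshold" degree together with a prefix (in pure lex order) of the elements of that threshold degree. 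Hence to prove an initial segment equals a given down-set it suffices to check the down-set is "complete by degrees up to some level, then possibly a lex-prefix of the next level" — and in our two cases the down-sets $\J_r$ and $\J_{(m-1)(r-1)}\cap[0,r-1]^m$ are in fact unions of complete degree-layers, so no lex-prefix subtlety arises.

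\textbf{First claim.} I would argue that $\J_r \subseteq [0,r-1]^m$ (immediate: $|\ii| < r$ forces each $i_j \le r-1$), and that $\J_r$ consists of exactly those $\ii \in [0,r-1]^m$ with $|\ii| \le r-1$, i.e. it is the union of the degree layers $0, 1, \dots, r-1$ of the box. By the graded property, these are precisely the first $\sum_{d=0}^{r-1} |\{\ii \in [0,r-1]^m : |\ii| = d\}|$ elements in graded lex order. So I need the counting identity
$$ \sum_{d=0}^{r-1} \#\{\ii \in [0,r-1]^m : |\ii| = d\} = \binom{m+r-1}{m}. $$
But if $d \le r-1$ then the constraint $i_j \le r-1$ is vacuous, so $\#\{\ii \in [0,r-1]^m : |\ii| = d\} = \#\{\ii \in \mathbb{N}^m : |\ii| = d\} = \binom{d+m-1}{m-1}$, and the hockey-stick identity $\sum_{d=0}^{r-1}\binom{d+m-1}{m-1} = \binom{r+m-1}{m}$ finishes it. This also identifies $\ii_0 = \mathbf 0$ and shows the last element $\ii_{r^m-1} = \rr - \ones$ has maximal total degree $m(r-1)$, as stated in Definition~\ref{def J_r ordering}.

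\textbf{Second claim.} Here I would use a complementation (reflection) argument, which I expect to be the only mildly delicate point. The map $\ii \mapsto (\rr-\ones) - \ii = \kk$ is a bijection of $[0,r-1]^m$ onto itself sending total degree $d$ to total degree $m(r-1) - d$. Graded lex order is \emph{not} simply reversed by this map (lex order on a layer gets scrambled), but degree layers are swapped as blocks: the set of the last $N$ elements of $\J_{\rr-\ones}$ in graded lex order, when $N$ is a sum of whole degree layers from the top, is exactly the image under $\ii \mapsto (\rr-\ones)-\ii$ of... no — more cleanly: the first $r^m - \binom{m+r-1}{m}$ elements are the first $r^m - |\J_r|$ elements, i.e. all elements \emph{not} among the last $|\J_r|$. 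The last $|\J_r|$ elements are the top degree layers $m(r-1), m(r-1)-1, \dots$; by the reflection bijection these correspond to the bottom layers $0, 1, \dots$ of size $\binom{m+r-1}{m}$ total, i.e. to $\{\ii : |\ii| \le r-1\}$ reflected, which is $\{\kk : |\kk| \ge m(r-1)-(r-1) = (m-1)(r-1)\} \cap [0,r-1]^m$. Hence the \emph{complement} — the first $r^m - \binom{m+r-1}{m}$ elements — is $\{\kk \in [0,r-1]^m : |\kk| < (m-1)(r-1)\} = \J_{(m-1)(r-1)} \cap [0,r-1]^m$, as claimed. The one thing to verify carefully is that $\{\ii \in [0,r-1]^m : |\ii| < (m-1)(r-1)\}$ is again a union of \emph{complete} degree layers (layers $0$ through $(m-1)(r-1)-1$) so that the graded-order initial segment genuinely coincides with it regardless of the lex ordering within each layer; this is automatic since the condition only constrains total degree, and the count matches by the same reflection of the hockey-stick sum. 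I expect the main obstacle to be presenting the reflection/complementation bookkeeping cleanly — making precise that "initial segment of a graded order that happens to be a union of degree layers equals the set-theoretic union of those layers" — rather than any real difficulty; once that lemma is stated, both identities reduce to the single binomial computation above.
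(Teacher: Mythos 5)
Your proof is correct and follows essentially the same route as the paper: the paper also uses the reflection $\varphi(\aaa)=\rr-\ones-\aaa$ of the box $[0,r-1]^m$ to identify $\J_{(m-1)(r-1)}\cap[0,r-1]^m$ with the complement of $\J_r$ (hence to get the cardinality $r^m-\binom{m+r-1}{m}$), and then invokes the graded property of the ordering, treating the first claim as obvious. Your extra hockey-stick computation and the remark that both sets are unions of complete degree layers just make explicit what the paper leaves implicit.
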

\begin{proof}
The first equality is obvious. Now we prove the second one. Consider the map $\varphi: [0,r-1]^m\rightarrow [0,r-1]^m$ defined as
$$
\varphi(a_1,\dots,a_m)=(r-1-a_1,\dots,r-1-a_m).
$$
It is straightforward to check that this map is a bijection. Now if $(a_1,\dots,a_m)\in [0,r-1]^m$, then $\sum_{i=1}^m a_i\geq r$ if, and only if, $\sum_{i=1}^m (r-1-a_i)=m(r-1)-\sum_{i=1}^m a_i \leq m(r-1)-r = (m-1)(r-1)-1 $. Hence $\varphi([0,r-1]^m\setminus \J_r)= \J_{(m-1)(r-1)} \cap [0,r-1]^m$, and in particular,
$$
r^m-\binom{m+r-1}{m}=r^m-\abs{\J_r}=\abs{[0,r-1]^m\setminus \J_r}=\abs{ \J_{(m-1)(r-1)} \cap [0,r-1]^m},
$$
since $ \varphi $ is a bijection. Taking into account that we chose a graded ordering for $\J_{\rr - \ones}$, we have obtained that the first $r^m-\binom{m+r-1}{m}$ elements of $\J_{\rr - \ones}$ are the elements of $ \J_{(m-1)(r-1)} \cap [0,r-1]^m$, which concludes the proof. 
\end{proof}

We also need to introduce the invertible matrices that will give us the equivalence (as in Proposition \ref{prop isometries}) for the dual code.

Recall that $ S = S_1 \times \cdots \times S_m $ with $ s_i = |S_i| $ for $ i \in [m] $. Let $\{h_{S_j,a_j}\}_{s_j\in S_j}$ be the indicator functions and $G_j(x)=\prod_{a\in S_j}(x-a)$ the vanishing ideal of $S_j$.

By Corollary~\ref{cor hermite interpolation multivariate}, the rectangular Hermite interpolation is given by the polynomials
\[
\left\{h_{S,\aaa,\ii}(x) = \prod_{j=1}^m h_{S_j,a_j,i_j}(x_j)\right\}_{\aaa \in S, \hspace{0.1 cm} \ii \in \J_{\rr - \ones}},
\]
where $h_{S_j,a_j,i_j}(x_j)$ is defined in Remark~\ref{25.05.04}  (and Theorem \ref{prop hermite interpolation univariate}) for $\aaa = (a_1, \ldots, a_m) 
\in S$ and $\ii = (i_1, \ldots, i_m) \in \J_{\rr-\ones}$.
\begin{definition} \label{def matrix lambda}
For each $ \aaa \in S $ and $ \ii \in \J_{\rr-\ones} $, define 
$$ \lambda^\aaa_\ii = \coef(h_{S,\aaa, \ii},x_1^{rs_1-1}\cdots x_m^{rs_m-1}) \in \F.$$
We also set $ \lambda^\aaa_\ii = 0 $ if $ \ii \in \mathbb{N}^m \setminus \J_{\rr - \ones}$ and define the $ r^m \times r^m $ matrix $ M_\aaa = \left( \lambda^\aaa_{\ii_u + \ii_v} \right)_{u=0,v=0}^{r^m-1,r^m-1}$.
\end{definition}
We will need the following properties of such matrices.
\begin{lemma} \label{lemma matrix lambda}
$ M_\aaa $ is symmetric, upper anti-triangular ($ \lambda^\aaa_{\ii_u+\ii_v} = 0 $ if $ u+v > r^m-1 $) and every entry in its main anti-diagonal is 
$$ \lambda^\aaa_{\ii_u+\ii_v} = \lambda^\aaa_{\rr - \ones} = \coef(h_{S,\aaa, \rr - \ones},x_1^{rs_1-1}\cdots x_m^{rs_m-1}) \neq 0, $$
for $ u+v = r^m-1 $ by (\ref{eq coef of hermite basis multivariate}). In particular, $ M_\aaa $ is invertible for all $ \aaa \in S $.
\end{lemma}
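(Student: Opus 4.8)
The plan is to reduce everything to the combinatorial behaviour of the graded lexicographic order on $\J_{\rr-\ones} = [0,r-1]^m$ together with the involution $\varphi(\ii) = (\rr-\ones)-\ii$ already used in Lemma~\ref{lemma J_(m-1)(r-1)}. Symmetry is immediate: $(M_\aaa)_{u,v} = \lambda^\aaa_{\ii_u+\ii_v} = \lambda^\aaa_{\ii_v+\ii_u} = (M_\aaa)_{v,u}$. For the other claims, the first thing I would record is that $\varphi$ reverses the graded lexicographic order on $[0,r-1]^m$: since $|\varphi(\ii)| = m(r-1)-|\ii|$, it flips the comparison of total degrees, and since $\varphi$ acts coordinatewise by $i_j \mapsto r-1-i_j$ it preserves which coordinate is the first at which two tuples differ while flipping the value comparison there, hence it flips the lexicographic tie-break as well. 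As $\varphi$ is a bijection of the $r^m$-element totally ordered set $(\J_{\rr-\ones},\prec)$, it must send the $u$-th smallest element to the $u$-th largest, which in the labelling of Definition~\ref{def J_r ordering} reads $\varphi(\ii_u) = \ii_{r^m-1-u}$ for all $u \in [0,r^m-1]$.

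Next I would invoke the definition of $\lambda^\aaa_\ii$ (Definition~\ref{def matrix lambda}): $\lambda^\aaa_\ii = 0$ whenever $\ii \notin [0,r-1]^m$, and otherwise $\lambda^\aaa_\ii = \coef(h_{S,\aaa,\ii}, x_1^{rs_1-1}\cdots x_m^{rs_m-1})$. Consequently $\lambda^\aaa_{\ii_u+\ii_v}$ can be nonzero only if $\ii_u+\ii_v \in [0,r-1]^m$, i.e.\ only if $\ii_v \leq (\rr-\ones)-\ii_u = \varphi(\ii_u)$ coordinatewise. Because the graded lexicographic order refines the coordinatewise partial order, this forces $\ii_v \preceq \varphi(\ii_u) = \ii_{r^m-1-u}$, hence $v \leq r^m-1-u$, i.e.\ $u+v \leq r^m-1$. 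Contrapositively, $u+v > r^m-1$ gives $\lambda^\aaa_{\ii_u+\ii_v}=0$, which is the upper anti-triangular property. When $u+v = r^m-1$, the equality case yields $\ii_v = \ii_{r^m-1-u} = \varphi(\ii_u) = (\rr-\ones)-\ii_u$, so $\ii_u+\ii_v = \rr-\ones$ and therefore $\lambda^\aaa_{\ii_u+\ii_v} = \lambda^\aaa_{\rr-\ones} = \coef(h_{S,\aaa,\rr-\ones}, x_1^{rs_1-1}\cdots x_m^{rs_m-1})$, which is nonzero by (\ref{eq coef of hermite basis multivariate}).

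Finally, invertibility follows from the shape just established: an $r^m\times r^m$ matrix vanishing strictly above its main anti-diagonal and having the constant nonzero value $\lambda^\aaa_{\rr-\ones}$ along it becomes, after left multiplication by the permutation matrix that reverses the row order, upper triangular with all diagonal entries equal to $\lambda^\aaa_{\rr-\ones}$; hence $\det(M_\aaa) = \pm (\lambda^\aaa_{\rr-\ones})^{r^m} \neq 0$. I do not expect a real obstacle here; the only point deserving care is the verification that $\varphi$ reverses the graded lexicographic order, including the lexicographic tie-break, as everything else is a direct consequence of the definitions and of Lemma~\ref{lemma J_(m-1)(r-1)}.
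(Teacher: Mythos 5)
Your proof is correct and follows essentially the same route as the paper: the paper's (much terser) argument also rests on the fact that, because the labelling of $\J_{\rr-\ones}$ is graded lexicographic, $u+v=r^m-1$ is equivalent to $\ii_u+\ii_v=\rr-\ones$ and $u+v>r^m-1$ forces $\ii_u+\ii_v\notin\J_{\rr-\ones}$. Your explicit verification that $\ii\mapsto(\rr-\ones)-\ii$ reverses the graded lexicographic order, together with the observation that this order refines the coordinatewise partial order, simply supplies the details the paper leaves implicit.
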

\begin{proof}
First, notice that $ u+v = r^m-1 $ is equivalent to $ \ii_u + \ii_v = \rr - \ones $ due to the graded lexicographic order in Definition \ref{def J_r ordering}. For the same reason, if $ u+v > r^m -1 $, then $ \ii_u + \ii_v \in \mathbb{N}^m \setminus \J_{\rr - \ones} $, hence $ \lambda^\aaa_{\ii_u+\ii_v} = 0 $ by Definition \ref{def matrix lambda}.
\end{proof}

We may now describe the dual of $ \mathcal{M}(S,\J_{\rr - \ones},k) $ as desired.

\begin{theorem} \label{th dual box general}
Fix a positive integer $ k \leq N $, where $ N = \sum_{j=1}^m (rs_j-1) $. Then
\begin{equation*}
\begin{split}
\mathcal{M}(S,\J_{\rr - \ones},k)^\perp & = \left\lbrace \left( \left( g^{(\ii)}(\aaa) \right)_{\ii \in \J_{\rr - \ones}} \cdot M_\aaa \right)_{\aaa \in S} : g \in \F[\xx]_{\leq N-k} \right\rbrace \\
& = \left\lbrace \left( \left( g^{(\ii)}(\aaa) \right)_{\ii \in \J_{\rr - \ones}} \cdot M_\aaa \right)_{\aaa \in S} : g \in \F[\xx]_{\leq N-k}, \deg_{x_j}(g) \leq rs_j-1, \forall j \in [m] \right\rbrace ,
\end{split}
\end{equation*}
which is equivalent to $ \mathcal{M}(S,\J_{\rr - \ones}, N -k + 1) $, where $ \J_{\rr-\ones} $ is ordered according to the graded lexicographic ordering.
\end{theorem}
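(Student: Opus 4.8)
The plan is to follow the template already established in the cases $m=1$ (Proposition~\ref{p:dualm=1}) and $m=r=2$ (Proposition~\ref{25.03.05}): produce enough orthogonality relations between $\mathcal{M}(S,\J_{\rr-\ones},k)$ and the candidate code on the right-hand side, and then close the argument by a dimension count. Concretely, I would take $f\in\F[\xx]_{<k}$ and $g\in\F[\xx]_{\leq N-k}$, form the product $fg$, and look at its reduced form $\overline{fg}$ with respect to a graded monomial ordering. By Corollary~\ref{cor reduced form properties}, $\overline{fg}$ has the same Hasse derivatives on $S$ as $fg$, and by Remark~\ref{remark footprint of box} it lies in $\langle\Delta_\prec(I(S;\J_{\rr-\ones}))\rangle_\F$, i.e.\ it has $\deg_{x_j}\leq rs_j-1$ for all $j$. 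Since $\deg(\overline{fg})\leq\deg(fg)\leq N-1<N=\sum_j(rs_j-1)$, the coefficient of the top monomial $x_1^{rs_1-1}\cdots x_m^{rs_m-1}$ in $\overline{fg}$ must vanish.

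The next step is to compute that top coefficient explicitly via the Hermite interpolation of Corollary~\ref{cor hermite interpolation multivariate}. Writing $\overline{fg}=\sum_{\aaa\in S}\sum_{\ii\in\J_{\rr-\ones}}(fg)^{(\ii)}(\aaa)\,h_{S,\aaa,\ii}(\xx)$ and expanding $(fg)^{(\ii)}(\aaa)$ by the Leibniz rule (Proposition~\ref{prop leibniz}) as $\sum_{\jj+\kk=\ii}f^{(\jj)}(\aaa)g^{(\kk)}(\aaa)$, and then extracting $\coef(\,\cdot\,,x_1^{rs_1-1}\cdots x_m^{rs_m-1})$ using $\coef(h_{S,\aaa,\ii},\cdot)=\lambda^\aaa_\ii$ from Definition~\ref{def matrix lambda}, one gets
\begin{equation*}
0=\sum_{\aaa\in S}\ \sum_{\ii\in\J_{\rr-\ones}}\ \sum_{\jj+\kk=\ii}\lambda^\aaa_\ii\,f^{(\jj)}(\aaa)\,g^{(\kk)}(\aaa)
=\sum_{\aaa\in S}\ \sum_{u,v}\lambda^\aaa_{\ii_u+\ii_v}\,f^{(\ii_u)}(\aaa)\,g^{(\ii_v)}(\aaa),
\end{equation*}
where the reindexing uses that $\lambda^\aaa_\ii=0$ for $\ii\notin\J_{\rr-\ones}$. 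This is exactly $\sum_{\aaa\in S}\bigl(f^{(\ii)}(\aaa)\bigr)_{\ii\in\J_{\rr-\ones}}\,M_\aaa\,\bigl(g^{(\ii)}(\aaa)\bigr)_{\ii\in\J_{\rr-\ones}}^{\intercal}=0$, with $M_\aaa$ the symmetric matrix of Definition~\ref{def matrix lambda}. By symmetry of $M_\aaa$, this shows that every word $\bigl(\bigl(g^{(\ii)}(\aaa)\bigr)_{\ii}\,M_\aaa\bigr)_{\aaa\in S}$ with $g\in\F[\xx]_{\leq N-k}$ lies in $\mathcal{M}(S,\J_{\rr-\ones},k)^\perp$, giving one inclusion.

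To promote the inclusion to an equality I would count dimensions. Because each $M_\aaa$ is invertible (Lemma~\ref{lemma matrix lambda}), the map $g\mapsto\bigl(\bigl(g^{(\ii)}(\aaa)\bigr)_{\ii}\,M_\aaa\bigr)_{\aaa}$ has the same image dimension as $g\mapsto\bigl(g^{(\ii)}(\aaa)\bigr)_{\aaa,\ii}$, namely $\dim\mathcal{M}(S,\J_{\rr-\ones},N-k+1)$; the matrices $M_\aaa$ also make the stated code literally equivalent to $\mathcal{M}(S,\J_{\rr-\ones},N-k+1)$ in the sense of Proposition~\ref{prop isometries}. By Theorem~\ref{th multiplicity code dimension} (or the affine-Cartesian dimension formula, cf.\ Remark~\ref{r:dimcartesian}), $\dim\mathcal{M}(S,\J_{\rr-\ones},k)+\dim\mathcal{M}(S,\J_{\rr-\ones},N-k+1)$ equals the number of monomials in the footprint box $\prod_j[0,rs_j-1]$ of degree $<k$ plus the number of degree $\geq k$, which is the full box size $\prod_j rs_j=|\J_{\rr-\ones}|\cdot|S|=tn$, i.e.\ the code length; here one uses that the bijection $\xx^\ii\mapsto x_1^{rs_1-1-i_1}\cdots x_m^{rs_m-1-i_m}$ on the box sends "degree $<k$" to "degree $>N-k$", so degrees $\le N-k$ and $<k$ partition the box complementarily. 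Hence the candidate code has dimension $tn-\dim\mathcal{M}(S,\J_{\rr-\ones},k)=\dim\mathcal{M}(S,\J_{\rr-\ones},k)^\perp$, and equality follows. The second description (with $\deg_{x_j}(g)\le rs_j-1$) follows by replacing each $g$ by its reduced form $\overline g$, which has the same Hasse derivatives on $S$ and satisfies $\deg(\overline g)\le\deg(g)\le N-k$ and $\deg_{x_j}(\overline g)\le rs_j-1$ by Remark~\ref{remark footprint of box} and Corollary~\ref{cor reduced form properties}.

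The main obstacle I anticipate is the bookkeeping in the Leibniz-plus-coefficient step: one must be careful that the double sum over $\ii\in\J_{\rr-\ones}$ and $\jj+\kk=\ii$ reorganizes cleanly into the quadratic form with the Hankel-type matrix $M_\aaa=(\lambda^\aaa_{\ii_u+\ii_v})_{u,v}$, using crucially the convention $\lambda^\aaa_\ii=0$ for $\ii\notin\J_{\rr-\ones}$ and the graded-lexicographic indexing of Definition~\ref{def J_r ordering} (which is what makes $M_\aaa$ upper anti-triangular, per Lemma~\ref{lemma matrix lambda}). The only other subtlety is making sure the degree bound $\deg(fg)\le N-1<N$ is strict, which it is since $\deg f\le k-1$ and $\deg g\le N-k$; this strictness is exactly what forces the top coefficient to vanish and is the engine of the whole proof.
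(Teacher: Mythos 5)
Your proposal is correct and follows essentially the same route as the paper's proof: the inclusion via the vanishing of the coefficient of $x_1^{rs_1-1}\cdots x_m^{rs_m-1}$ in $\overline{fg}$ (using Corollary~\ref{cor reduced form properties}, Corollary~\ref{cor hermite interpolation multivariate}, and the Leibniz rule to obtain $\sum_{\aaa\in S}\mathbf{f}(\aaa)M_\aaa\mathbf{g}(\aaa)^\intercal=0$), equality by the dimension count from Theorem~\ref{th multiplicity code dimension}, the second description via reduced forms, and the equivalence via the invertible matrices $M_\aaa$. Your explicit complementation bijection on the footprint box merely spells out the dimension identity the paper invokes by citation, so there is no substantive difference.
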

\begin{proof}
Let $ f,g \in \F[\xx] $ be such that $ \deg(f) \leq k-1 $ and $ \deg(g) \leq N-k $. Using Definition \ref{def J_r ordering}, we may see the codewords associated to $ f $ and $ g $ as $ (\mathbf{f}(\aaa))_{\aaa \in S} $ and $ (\mathbf{g}(\aaa))_{\aaa \in S} $, respectively, where
\begin{equation*}
\begin{split}
\mathbf{f}(\aaa) & = (f^{(\ii_0)}(\aaa),\ldots,f^{(\ii_{r^m-1})}(\aaa)), \\
\mathbf{g}(\aaa) & = (g^{(\ii_0)}(\aaa),\ldots,g^{(\ii_{r^m-1})}(\aaa)),
\end{split}
\end{equation*}
for $ \aaa \in S $. By Corollary \ref{cor reduced form properties}, $ \overline{fg} $ has the same Hasse derivatives as $ fg $, and moreover $ \overline{fg} \in \langle \Delta_\prec (I(S;\J_{\rr-\ones})) \rangle_\F $, i.e., $ \deg_{x_j}(\overline{fg}) \leq rs_j-1 $, for all $ j \in [m] $ (see Remark \ref{remark footprint of box}). Thus, by Corollary \ref{cor hermite interpolation multivariate}, we have that
\begin{equation}
\overline{fg} = \sum_{\aaa \in S} \sum_{\ii \in \J_{\rr- \ones}} (fg)^{(\ii)}(\aaa) h_{S,\aaa,\ii}.
\label{eq hermite product}
\end{equation}
Therefore, we deduce that
\begin{equation*}
\begin{split}
0 & \stackrel{(a)}{=} \coef(\overline{fg}, x_1^{rs_1-1} \cdots x_m^{rs_m-1}) \\
 & \stackrel{(b)}{=} \coef \left( \sum_{\aaa \in S} \sum_{\ii \in \J_{\rr- \ones}} (fg)^{(\ii)}(\aaa) h_{S,\aaa,\ii}, x_1^{rs_1-1} \cdots x_m^{rs_m-1} \right) \\
 & = \sum_{\aaa \in S} \sum_{\ii \in \J_{\rr- \ones}}(fg)^{(\ii)}(\aaa)  \coef( h_{S,\aaa,\ii}, x_1^{rs_1-1} \cdots x_m^{rs_m-1} ) \\
 & \stackrel{(c)}{=} \sum_{\aaa \in S} \sum_{\ii \in \J_{\rr- \ones}} \left( \sum_{\jj + \kk = \ii} f^{(\jj)}(\aaa)g^{(\kk)}(\aaa) \right) \lambda^\aaa_\ii \\
 & = \sum_{\aaa \in S} \sum_{\ii \in \J_{\rr- \ones}} \left( \sum_{\jj + \kk = \ii} f^{(\jj)}(\aaa)g^{(\kk)}(\aaa) \lambda^\aaa_{\jj+\kk} \right) \\
 & = \sum_{\aaa \in S} \sum_{\jj \in \J_{\rr- \ones}} \sum_{\kk  \in \J_{\rr- \ones}} f^{(\jj)}(\aaa)g^{(\kk)}(\aaa) \lambda^\aaa_{\jj+\kk} 
  = \sum_{\aaa \in S} \mathbf{f}(\aaa) \cdot M_\aaa \cdot \mathbf{g}(\aaa)^\intercal,
\end{split}
\end{equation*}
where we use $ \deg(\overline{fg}) \leq \deg(fg) \leq N-1 < \sum_{i=1}^m (rs_j-1) $ (by Corollary \ref{cor reduced form properties}) in (a), Eq.~(\ref{eq hermite product}) in (b), and Leibniz rule (Proposition~\ref{prop leibniz}) in (c), together with the definition of $ \lambda^\aaa_\ii $ and $ M_\aaa $. Finally, the last expression is simply
$$ 0 = \sum_{\aaa \in S} \mathbf{f}(\aaa) \cdot M_\aaa \cdot \mathbf{g}(\aaa)^\intercal = \sum_{\aaa \in S} \mathbf{f}(\aaa) \cdot (\mathbf{g}(\aaa) M_\aaa )^\intercal, $$
by the symmetry of $ M_\aaa $. This means that
$$ \left\lbrace \left( \left( g^{(\ii)}(\aaa) \right)_{\ii \in \J_{\rr - \ones}} \cdot M_\aaa \right)_{\aaa \in S} : g \in \F[\xx]_{\leq N-k} \right\rbrace \subseteq \mathcal{M}(S,\J_{\rr - \ones},k)^\perp, $$
and they are equal since both have the same dimension by Theorem \ref{th multiplicity code dimension} or by Corollary \ref{cor dimension J_rr} (similar calculation as the classical Reed--Muller codes over $ \F_q^m $ but replacing $ q $ by $ rs $).

Finally, for each $ g \in \F[\xx]_{\leq N-k} $, consider its reduced form $ \overline{g} \in \langle \Delta_\prec(I(S;\J_{\rr - \ones})) \rangle_\F $ with respect to any graded monomial ordering. By Remark \ref{remark footprint of box}, we have that $ \deg_{x_j}(\overline{g}) \leq rs_j - 1 $, for $ j \in [m] $. Moreover, by Corollary \ref{cor reduced form properties}, we have that $ \deg(\overline{g}) \leq \deg(g) \leq N-k $ and $ \overline{g}^{(\ii)}(\aaa) = g^{(\ii)}(\aaa) $ for all $ \ii \in \J_{\rr - \ones} $ and all $ \aaa \in S $. Hence, we conclude the last equality in the proposition.
\end{proof}

In order to compute the dual of $ \mathcal{M}(S,r,k) $, we need to see this code as a restriction of $ \mathcal{M}(S,\J_{\rr - \ones},k) $. 

\begin{proposition} \label{prop multi from box restriction}
Let $ \pi_{r,m} : \F^{(r^m)} \longrightarrow \F^{\binom{m+r-1}{m}} $ be given by projecting onto the first $ \binom{m+r-1}{m} $ coordinates, and extend it to $ \pi_{r,m} : (\F^{(r^m)})^{|S|} \longrightarrow (\F^{\binom{m+r-1}{m}})^{|S|} $ coordinatewise. Then
$$ \mathcal{M}(S,r,k) = \pi_{r,m}\left( \mathcal{M}(S,\J_{\rr - \ones},k) \right). $$
\end{proposition}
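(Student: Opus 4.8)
The plan is to reduce the claimed identity of codes to a purely combinatorial statement about the index set $\J_{\rr-\ones}$ and its graded lexicographic ordering. Recall that, by definition, $\mathcal{M}(S,r,k)$ collects the vectors $\left(\left(f^{(\ii)}(\aaa)\right)_{\ii\in\J_r}\right)_{\aaa\in S}$ for $f\in\F[\xx]_{<k}$, while $\mathcal{M}(S,\J_{\rr-\ones},k)$ collects the vectors $\left(\left(f^{(\ii)}(\aaa)\right)_{\ii\in\J_{\rr-\ones}}\right)_{\aaa\in S}$ for the same polynomials $f$, with $\J_{\rr-\ones}=[0,r-1]^m$ ordered $ \ii_0,\ii_1,\dots,\ii_{r^m-1}$ as in Definition~\ref{def J_r ordering}. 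The map $\pi_{r,m}$ simply forgets all but the first $\binom{m+r-1}{m}$ entries of each length-$r^m$ block, so everything hinges on recognizing which Hasse derivatives those first entries encode.

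The key step is to apply Lemma~\ref{lemma J_(m-1)(r-1)}, whose first assertion states exactly that
$$
\left\{\ii_j : j\in\left[0,\binom{m+r-1}{m}-1\right]\right\}=\J_r.
$$
Since the ordering of $\J_{\rr-\ones}$ is graded lexicographic and $\J_r$ is an initial segment of it, the graded lexicographic ordering on $\J_r$ is precisely the one induced from $\J_{\rr-\ones}$. Therefore, for every $f\in\F[\xx]_{<k}$ and every $\aaa\in S$, applying $\pi_{r,m}$ to the block $\left(f^{(\ii_0)}(\aaa),\dots,f^{(\ii_{r^m-1})}(\aaa)\right)$ yields exactly $\left(f^{(\ii)}(\aaa)\right)_{\ii\in\J_r}$ in the order used to define $\mathcal{M}(S,r,k)$.

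Concatenating over all $\aaa\in S$ shows that $\pi_{r,m}$ sends the codeword of $\mathcal{M}(S,\J_{\rr-\ones},k)$ associated to $f$ to the codeword of $\mathcal{M}(S,r,k)$ associated to $f$. As $f$ ranges over $\F[\xx]_{<k}$, this gives both inclusions at once, hence $\pi_{r,m}\left(\mathcal{M}(S,\J_{\rr-\ones},k)\right)=\mathcal{M}(S,r,k)$. I do not anticipate a genuine obstacle here: the only point requiring care is the bookkeeping of coordinate orderings inside each block, which is settled by combining Definition~\ref{def J_r ordering} with Lemma~\ref{lemma J_(m-1)(r-1)} (and is consistent with $t=\binom{m+r-1}{m}=|\J_r|$, so no coordinates are spuriously added or removed).
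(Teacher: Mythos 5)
Your proposal is correct and follows essentially the same route as the paper, whose proof simply says the identity is straightforward from the ordering in Definition~\ref{def J_r ordering}; you have just made explicit the bookkeeping (via the first assertion of Lemma~\ref{lemma J_(m-1)(r-1)}) that the first $\binom{m+r-1}{m}$ coordinates of each block are precisely the derivatives indexed by $\J_r$, so the blockwise projection of the codeword of $f$ in $\mathcal{M}(S,\J_{\rr-\ones},k)$ is the codeword of the same $f$ in $\mathcal{M}(S,r,k)$, giving both inclusions at once.
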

\begin{proof}
Straightforward using the ordering in Definition \ref{def J_r ordering}.
\end{proof}

The following lemma is well known, see \cite[Th. 1.5.7]{pless}.

\begin{lemma} \label{lemma dual restr shorten}
For any code $ \mathcal{C} \subseteq \F^n $ and any $ I \subseteq [n] $, we have
$$ (\pi_I(\mathcal{C}))^\perp = \pi_I(\mathcal{C}^\perp \cap \ker(\pi_{[n] \setminus I})), $$
where $ \pi_I : \F^n \longrightarrow \F^{|I|} $ is the projection onto the coordinates in $ I $.
\end{lemma}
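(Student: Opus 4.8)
The plan is to prove the two inclusions separately, in each case exploiting the elementary observation that, setting $J = [n] \setminus I$, the standard inner product on $\F^n$ decomposes as
$$ u \cdot v = \pi_I(u) \cdot \pi_I(v) + \pi_J(u) \cdot \pi_J(v) $$
for all $u,v \in \F^n$. For brevity I would abbreviate $v_I = \pi_I(v)$ and $v_J = \pi_J(v)$. Both inclusions then reduce to extending vectors of $\F^{|I|}$ to vectors of $\F^n$ by placing zeros on the coordinates in $J$, and restricting back.

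First I would show $\pi_I(\C^\perp \cap \ker(\pi_J)) \subseteq (\pi_I(\C))^\perp$. Given $w$ in the left-hand side, pick $v \in \C^\perp$ with $v_J = \mathbf{0}$ and $v_I = w$. Then for every $c \in \C$,
$$ w \cdot \pi_I(c) = v_I \cdot c_I = v_I \cdot c_I + v_J \cdot c_J = v \cdot c = 0, $$
where the third equality uses $v_J = \mathbf{0}$ and the last uses $v \in \C^\perp$. Hence $w$ is orthogonal to every element of $\pi_I(\C)$, i.e. $w \in (\pi_I(\C))^\perp$.

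Conversely, given $w \in (\pi_I(\C))^\perp \subseteq \F^{|I|}$, I would define $v \in \F^n$ by $v_I = w$ and $v_J = \mathbf{0}$. For every $c \in \C$ this gives $v \cdot c = v_I \cdot c_I + v_J \cdot c_J = w \cdot \pi_I(c) = 0$, so $v \in \C^\perp$; since moreover $v \in \ker(\pi_J)$ and $\pi_I(v) = w$, we obtain $w \in \pi_I(\C^\perp \cap \ker(\pi_J))$, which completes the argument.

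There is no genuine obstacle here: the proof only uses bilinearity of the inner product (so no finiteness of $\F$ is needed), and the one point requiring care is keeping the coordinate bookkeeping consistent, namely that the splitting of $u \cdot v$ over $I$ and $J$ is compatible with the projections $\pi_I$ and $\pi_J = \pi_{[n]\setminus I}$. This is essentially \cite[Th. 1.5.7]{pless}; I would include the short argument only to fix the notation in which the lemma is applied in what follows.
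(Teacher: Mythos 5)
Your argument is correct: both inclusions are verified cleanly, the zero-extension construction is exactly what is needed, and no finiteness or nondegeneracy assumptions beyond bilinearity are used. The paper itself gives no proof, citing \cite[Th. 1.5.7]{pless} (duality of punctured and shortened codes), and your write-up is precisely the standard short verification of that fact, so it matches the intended argument.
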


\begin{definition}
Given the matrix $M_\aaa$ from Definition~\ref{def matrix lambda}, we define the matrix $ N_\aaa $ as the (invertible) square submatrix of $ M_\aaa $ formed by the entries in its last $ \binom{m+r-1}{m} $ rows and first $ \binom{m+r-1}{m} $ columns.
\end{definition}
Therefore, we conclude the following.
\begin{theorem} \label{th dual multiplicity code}
For positive integers $ r $ and $ k \leq N $, where $ N = \sum_{j=1}^m (rs_j-1) $, define
$$ \J_r^\perp = \J_{(m-1)(r-1)}\cap [0,r-1]^m = \{ \ii \in [0,r-1]^m : |\ii| < (m-1)(r-1) \} . $$
Then, we have that
\begin{equation*}
\begin{split}
\mathcal{M}(S,r,k)^\perp = \Bigg\lbrace & \left( \left( g^{(\ii)}(\aaa) \right)_{\ii \in [0,r-1]^m \setminus \J^\perp_r } \cdot N_\aaa \right)_{\aaa \in S} : \\
 & g \in \F[\xx]_{\leq N-k}, g^{(\ii)}(\aaa) = 0, \forall \ii \in \J^\perp_r, \forall \aaa \in S \Bigg\rbrace \\
 = \Bigg\lbrace & \left( \left( g^{(\ii)}(\aaa) \right)_{\ii \in [0,r-1]^m \setminus \J^\perp_r } \cdot N_\aaa \right)_{\aaa \in S} : \\
 & g \in \F[\xx]_{\leq N-k}, \deg_{x_j}(g) \leq rs_j-1, \forall j \in [m], g^{(\ii)}(\aaa) = 0, \forall \ii \in \J^\perp_r, \forall \aaa \in S \Bigg\rbrace ,
\end{split}
\end{equation*}
where $ [0,r-1]^m \setminus \J^\perp_r $ is ordered according to the graded lexicographic order.
\end{theorem}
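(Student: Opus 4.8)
The plan is to derive Theorem \ref{th dual multiplicity code} from Theorem \ref{th dual box general} by the shortening/restriction duality of Lemma \ref{lemma dual restr shorten}, together with the combinatorial identification in Lemma \ref{lemma J_(m-1)(r-1)}. Concretely, by Proposition \ref{prop multi from box restriction} we have $\mathcal{M}(S,r,k) = \pi_{r,m}(\mathcal{M}(S,\J_{\rr-\ones},k))$, so Lemma \ref{lemma dual restr shorten} gives
$$ \mathcal{M}(S,r,k)^\perp = \pi_{r,m}\left( \mathcal{M}(S,\J_{\rr-\ones},k)^\perp \cap \ker(\pi_{[0,r-1]^m\setminus\J_r}) \right), $$
where we index coordinates of $\F^{r^m}$ by $\J_{\rr-\ones}$ ordered as in Definition \ref{def J_r ordering}, so $\pi_{r,m}$ keeps the coordinates indexed by $\J_r$ (the first $\binom{m+r-1}{m}$ of them by Lemma \ref{lemma J_(m-1)(r-1)}) and the complementary projection keeps those indexed by $[0,r-1]^m\setminus\J_r$.

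Next I would substitute the explicit description of $\mathcal{M}(S,\J_{\rr-\ones},k)^\perp$ from Theorem \ref{th dual box general}: a generic dual codeword is $\left(\left(g^{(\ii)}(\aaa)\right)_{\ii\in\J_{\rr-\ones}} M_\aaa\right)_{\aaa\in S}$ for $g\in\F[\xx]_{\leq N-k}$ (equivalently, with $\deg_{x_j}(g)\leq rs_j-1$). The condition of lying in $\ker(\pi_{[0,r-1]^m\setminus\J_r})$ means that, in each block $\aaa\in S$, the entries of the vector $\mathbf{g}(\aaa)M_\aaa$ indexed by $\ii\in[0,r-1]^m\setminus\J_r$ vanish. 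Here is where the upper anti-triangular shape of $M_\aaa$ from Lemma \ref{lemma matrix lambda} does the work: ordering $\J_{\rr-\ones}$ by graded lex, $M_\aaa$ has nonzero anti-diagonal and zeros strictly below it, so the last $r^m-\binom{m+r-1}{m}$ coordinates of $\mathbf{g}(\aaa)M_\aaa$ depend only on the last $r^m-\binom{m+r-1}{m}$ coordinates of $\mathbf{g}(\aaa)$, through an invertible (anti-triangular) submatrix. By Lemma \ref{lemma J_(m-1)(r-1)}, those last coordinates of $\mathbf{g}(\aaa)$ are exactly $\left(g^{(\ii)}(\aaa)\right)_{\ii\in\J^\perp_r}$ with $\J^\perp_r=\J_{(m-1)(r-1)}\cap[0,r-1]^m$. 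Hence the kernel condition is equivalent to $g^{(\ii)}(\aaa)=0$ for all $\ii\in\J^\perp_r$ and all $\aaa\in S$, which is precisely the constraint appearing in the statement.

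Finally, under that constraint $\mathbf{g}(\aaa)$ has its last $r^m-\binom{m+r-1}{m}$ entries zero, so the surviving (first $\binom{m+r-1}{m}$) entries of $\mathbf{g}(\aaa)M_\aaa$ — which is what $\pi_{r,m}$ retains — equal $\left(g^{(\ii)}(\aaa)\right)_{\ii\in[0,r-1]^m\setminus\J^\perp_r}$ multiplied by the submatrix of $M_\aaa$ sitting in its last $\binom{m+r-1}{m}$ columns and first $\binom{m+r-1}{m}$ rows; by the symmetry of $M_\aaa$ (Lemma \ref{lemma matrix lambda}) this is the transpose of the submatrix in its last $\binom{m+r-1}{m}$ rows and first $\binom{m+r-1}{m}$ columns, i.e.\ exactly $N_\aaa$ up to transposition, and anyway invertible with the anti-diagonal of $M_\aaa$ on its anti-diagonal. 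Absorbing this matrix into the definition of $N_\aaa$ yields the displayed formula; the index set $[0,r-1]^m\setminus\J^\perp_r$ is ordered by graded lex consistently with the ordering of $\J_{\rr-\ones}$. For the second equality one simply replaces $g$ by its reduced form $\overline g$ as in the proof of Theorem \ref{th dual box general}: by Corollary \ref{cor reduced form properties} and Remark \ref{remark footprint of box}, $\overline g$ has the same Hasse derivatives at points of $S$, degree $\leq N-k$, and $\deg_{x_j}(\overline g)\leq rs_j-1$. The main point requiring care — and the step I expect to be most delicate to phrase correctly — is the bookkeeping identifying which coordinates of $\mathbf{g}(\aaa)M_\aaa$ the kernel condition kills and why the anti-triangular structure makes this condition depend only on $\left(g^{(\ii)}(\aaa)\right)_{\ii\in\J^\perp_r}$ rather than on all of $\mathbf{g}(\aaa)$; everything else is a direct application of the cited lemmas.
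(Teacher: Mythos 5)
Your overall route is exactly the paper's: write $\mathcal{M}(S,r,k) = \pi_{r,m}(\mathcal{M}(S,\J_{\rr-\ones},k))$ (Proposition \ref{prop multi from box restriction}), dualize via Lemma \ref{lemma dual restr shorten}, substitute Theorem \ref{th dual box general}, and exploit the anti-triangular structure of $M_\aaa$ (Lemma \ref{lemma matrix lambda}) together with the index identification of Lemma \ref{lemma J_(m-1)(r-1)}; handling the degree-restricted version through reduced forms also matches. However, the bookkeeping step that you yourself flag as the delicate one is carried out with the directions reversed, and as written it does not establish the stated formula.

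Concretely: since $(M_\aaa)_{u,v} = \lambda^\aaa_{\ii_u+\ii_v}$ vanishes whenever $u+v > r^m-1$, the last $r^m - \binom{m+r-1}{m}$ columns of $M_\aaa$ are supported on its \emph{first} $r^m-\binom{m+r-1}{m}$ rows; hence the coordinates of $\mathbf{g}(\aaa)M_\aaa$ killed by the shortening depend only on the \emph{first} $r^m-\binom{m+r-1}{m}$ entries of $\mathbf{g}(\aaa)$, not on the last ones as you claim. Moreover, Lemma \ref{lemma J_(m-1)(r-1)} says that it is the \emph{first} $r^m-\binom{m+r-1}{m}$ elements of $\J_{\rr-\ones}$ in graded lex order that form $\J_r^\perp$; the last $r^m-\binom{m+r-1}{m}$ entries of $\mathbf{g}(\aaa)$ are indexed by $[0,r-1]^m\setminus\J_r$, a different set in general. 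These two slips cancel, so your conclusion that the kernel condition is equivalent to $g^{(\ii)}(\aaa)=0$ for all $\ii\in\J_r^\perp$ and all $\aaa\in S$ is true, but not for the reason you give. More seriously, the same reversal in the final step makes you identify the surviving block as the submatrix of $M_\aaa$ in its first $\binom{m+r-1}{m}$ rows and last $\binom{m+r-1}{m}$ columns, i.e.\ $N_\aaa^\intercal$, and you then propose to ``absorb'' the transpose into the definition of $N_\aaa$ --- which is not available, since the theorem fixes $N_\aaa$ as the submatrix of $M_\aaa$ in its last $\binom{m+r-1}{m}$ rows and first $\binom{m+r-1}{m}$ columns (and the $m=r=2$ specialization, Corollary \ref{cor dual multi m=2}, depends on that choice). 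Done with the correct orientation no transpose appears: once the first $r^m-\binom{m+r-1}{m}$ entries of $\mathbf{g}(\aaa)$ (those indexed by $\J_r^\perp$) vanish, the first $\binom{m+r-1}{m}$ entries of $\mathbf{g}(\aaa)M_\aaa$ equal $\bigl(g^{(\ii)}(\aaa)\bigr)_{\ii\in[0,r-1]^m\setminus\J_r^\perp}\cdot N_\aaa$ exactly, which is how the paper argues (there the kernel condition is phrased via the invertible submatrix $P_\aaa$ of $M_\aaa$ sitting in its first $r^m-\binom{m+r-1}{m}$ rows and last $r^m-\binom{m+r-1}{m}$ columns). So the gap is not a missing idea but an incorrect execution of the coordinate bookkeeping; reversing your row/column orientation repairs the argument and recovers the paper's proof.
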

\begin{proof}
Define $ \overline{\pi}_{r,m} : \F^{(r^m)} \longrightarrow \F^{(r^m-\binom{m+r-1}{m})} $ as the projection onto the last $ r^m-\binom{m+r-1}{m} $ coordinates and extend it blockwise $ \overline{\pi}_{r,m} : (\F^{(r^m)})^{|S|} \longrightarrow (\F^{(r^m-\binom{m+r-1}{m})})^{|S|} $. We have
\begin{align*}
\mathcal{M}(S,r,k)^\perp & \stackrel{(a)}{=} \pi_{r,m}\left( \mathcal{M}(S,\J_{\rr - \ones},k) \right)^\perp 
\stackrel{(b)}{=} \pi_{r,m}(\mathcal{M}(S,\J_{\rr - \ones},k)^\perp \cap \ker(\overline{\pi}_{r,m})),    
\end{align*}
where we use Proposition \ref{prop multi from box restriction} in (a) and Lemma \ref{lemma dual restr shorten} in (b). Hence we only need to compute $ \pi_{r,m}(\mathcal{M}(S,\J_{\rr - \ones},k)^\perp \cap \ker(\overline{\pi}_{r,m})) $. 

Take $ g \in \F[\xx]_{\leq N - k} $ (respectively, $ \deg_{x_j}(g) \leq rs_j-1 $, for $ j \in [m] $). For each $ \aaa \in S $, define the square submatrix $ P_\aaa $ of $ M_\aaa $ formed by the entries in the first $ r^m - \binom{m+r-1}{m} $ rows and last $ r^m - \binom{m+r-1}{m} $ columns. Due to the anti-triangular shape of $ M_\aaa $, we have that 
$$ \left( \left( g^{(\ii_u)}(\aaa) \right)_{u=0}^{r^m-1} \cdot M_\aaa \right)_{\aaa \in S} \in \ker(\overline{\pi}_{r,m}) 
\quad \text{if and only if} \quad 
\left( g^{(\ii_u)}(\aaa) \right)_{u=0}^{r^m - \binom{m+r-1}{m}-1} \cdot P_\aaa = \mathbf{0}, $$
for all $ \aaa \in S $. However, $ P_\aaa $ is also anti-triangular with non-zero entries in its main anti-diagonal, hence it is also invertible, and the last equation is equivalent to $ g^{(\ii_u)}(\aaa) = 0 $, for all $ u \in [0,r^m - \binom{m+r-1}{m}-1] $ and all $ \aaa \in S $. Now, if this condition is satisfied, then
$$ \pi_{r,m} \left( \left( g^{(\ii_u)}(\aaa) \right)_{u=0}^{r^m-1} \cdot M_\aaa \right) = \left( g^{(\ii_u)}(\aaa) \right)_{u=r^m - \binom{m+r-1}{m}}^{r^m-1} \cdot N_\aaa, $$
for all $ \aaa \in S $. To conclude the proof, we notice that, by Lemma \ref{lemma J_(m-1)(r-1)}, the sets $ \J^\perp_r $ and $ [0,r-1]^m \setminus \J^\perp_r $ are formed, respectively, by the first $ r^m - \binom{m+r-1}{m} $ elements and last $ \binom{m+r-1}{m} $ elements of $ \J_{\rr - \ones} $ according to the graded lexicographic order.
\end{proof}

\begin{remark} \label{remark antidiagonal for r=2}
By Remark \ref{remark coef hermite r=2}, if $ r = 2 $ and $ S_i = \F_q $ for all $ i \in [m] $, i.e., $ S = \F_q^m $, then for all $ \aaa \in \F_q^m $, we may assume that $ M_\aaa $ and $ N_\aaa $ are anti-diagonal matrices whose main anti-diagonals are formed by ones, as in Remark \ref{remark N_a simple}. This only means that the codewords of the dual $ \mathcal{M}(S,r,k)^\perp $ are of the form $ \left( \left( g^{(\ii)}(\aaa) \right)_{\ii \in [0,r-1]^m \setminus \J^\perp_r } \right)_{\aaa \in S} $, but where we order the derivatives in $ [0,r-1]^m \setminus \J^\perp_r $ in reversed order than the graded lexicographic order. See also Corollary~\ref{25.05.10} and Example \ref{ex worked example}.
\end{remark}

In the case $ r=1 $, that is, the case of Reed--Muller codes over the Cartesian product $ S = S_1 \times \cdots \times S_m $, we obtain the following well-known expression for the dual (see \cite[Theorem 2.3]{Lopez2020-mv}). We only need to note that, for $ r=1 $, $ r^m = \binom{m+r-1}{m} = 1 $.

\begin{corollary} (\textbf{Cartesian codes})
For a set $ S = S_1 \times \cdots \times S_m $, where $ S_i \subseteq \F $ is of size $ s_i $, for $ i \in [m] $, and for positive integers $ k \leq N $, where $ N = \sum_{j=1}^m (s_j-1) $, we have
\begin{equation*}
\begin{split}
\mathcal{M}(S,1,k)^\perp & = \{ ( g(\aaa) \cdot C_\aaa )_{\aaa \in S} : g \in \F[\xx]_{\leq N-k} \} \\
 & = \{ ( g(\aaa) \cdot C_\aaa )_{\aaa \in S} : g \in \F[\xx]_{\leq N-k}, \deg_{x_j}(g) \leq s_j-1, \forall j \in [m] \} ,
\end{split}
\end{equation*}
where $ C_\aaa $ can be explicitly computed in terms of the indicator functions of $S$.
\end{corollary}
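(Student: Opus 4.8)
The plan is to derive this corollary as the $ r = 1 $ specialization of Theorem~\ref{th dual multiplicity code}, so the whole proof reduces to tracking how each object degenerates. First I would record the combinatorics: for $ r = 1 $ we have $ \binom{m+r-1}{m} = \binom{m}{m} = 1 $ and $ r^m = 1 $, so $ \J_1 = \{ \mathbf{0} \} $, $ [0,r-1]^m = \{ \mathbf{0} \} $, and $ \mathcal{M}(S,1,k) = \{ (f(\aaa))_{\aaa \in S} : f \in \F[\xx]_{<k} \} $ is precisely the affine Cartesian code. Moreover $ \J_r^\perp = \J_{(m-1)(r-1)} \cap [0,r-1]^m = \J_0 \cap \{ \mathbf{0} \} = \varnothing $, because $ \J_0 = \{ \ii : |\ii| < 0 \} = \varnothing $; consequently $ [0,r-1]^m \setminus \J_r^\perp = \{ \mathbf{0} \} $. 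Also $ N = \sum_{j=1}^m (rs_j - 1) = \sum_{j=1}^m (s_j - 1) $, matching the statement.

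Next I would substitute these values into Theorem~\ref{th dual multiplicity code}. The constraint ``$ g^{(\ii)}(\aaa) = 0 $ for all $ \ii \in \J_r^\perp $'' becomes vacuous, the only surviving Hasse-derivative index is $ \ii = \mathbf{0} $ (i.e.\ plain evaluation), and $ N_\aaa $ is the $ 1 \times 1 $ submatrix of $ M_\aaa $ occupying its last row and first column, namely the single scalar $ \lambda^\aaa_{\mathbf{0}+\mathbf{0}} = \lambda^\aaa_{\mathbf{0}} $. With $ C_\aaa := N_\aaa = (\lambda^\aaa_{\mathbf{0}}) $ both displayed descriptions of $ \mathcal{M}(S,1,k)^\perp $ follow immediately, including the one with the extra degree bounds $ \deg_{x_j}(g) \leq rs_j - 1 = s_j - 1 $. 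Invertibility of $ C_\aaa $ is Lemma~\ref{lemma matrix lambda} (equivalently $ \lambda^\aaa_{\mathbf{0}} \neq 0 $ by~\eqref{eq coef of hermite basis multivariate}).

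To make the phrase ``$ C_\aaa $ can be explicitly computed in terms of the indicator functions of $ S $'' precise, I would unwind $ \lambda^\aaa_{\mathbf{0}} $. By Definition~\ref{def matrix lambda}, $ \lambda^\aaa_{\mathbf{0}} = \coef(h_{S,\aaa,\mathbf{0}}, x_1^{s_1-1} \cdots x_m^{s_m-1}) $, and by Corollary~\ref{cor hermite interpolation multivariate}, $ h_{S,\aaa,\mathbf{0}}(\xx) = \prod_{j=1}^m h_{S_j,a_j,0}(x_j) $. For $ r = 1 $ the Hermite construction of Theorem~\ref{prop hermite interpolation univariate} collapses to the Lagrange interpolation of Subsection~\ref{25.04.30}, so $ h_{S_j,a_j,0} $ is exactly the Lagrange indicator function of $ S_j $ at $ a_j $, of degree $ s_j - 1 $ and leading coefficient $ \bigl( G_j^{(1)}(a_j) \bigr)^{-1} $ (Subsection~\ref{25.04.30} and Remark~\ref{25.05.05}). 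Hence $ C_\aaa = \bigl( \prod_{j=1}^m G_j^{(1)}(a_j) \bigr)^{-1} $, which recovers the known description of the dual of an affine Cartesian code in \cite[Theorem 2.3]{Lopez2020-mv}.

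I do not expect a genuine obstacle here: the argument is pure bookkeeping, and the only steps that warrant an explicit sentence are (i) the identity $ \J_0 = \varnothing $, so that the derivative-vanishing conditions disappear entirely rather than producing a spurious constraint, and (ii) the degeneration of Hasse-derivative evaluation of orders in $ \J_1 $ to ordinary evaluation and of Hermite to Lagrange interpolation, both of which are immediate from the definitions.
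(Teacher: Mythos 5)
Your proposal is correct and follows exactly the paper's route: the paper also obtains this corollary as the $r=1$ specialization of Theorem~\ref{th dual multiplicity code}, the only observation being that $r^m=\binom{m+r-1}{m}=1$ (so $\J_r^\perp=\varnothing$ and the derivative conditions vanish). Your additional bookkeeping, identifying $C_\aaa=N_\aaa=\bigl(\lambda^\aaa_{\mathbf{0}}\bigr)=\bigl(\prod_{j=1}^m G_j^{(1)}(a_j)\bigr)^{-1}$ via Eq.~(\ref{eq coef of hermite basis multivariate}), is consistent with the paper and merely makes explicit what the paper leaves as ``can be explicitly computed in terms of the indicator functions of $S$''.
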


Similarly to Proposition \ref{p:basedualm=2}, we can regard the dual of a multiplicity code described in Theorem \ref{th dual multiplicity code} as an evaluation code. In the next theorem, we give a basis of the dual of a multiplicity code (i.e., one of its generator matrices, or equivalently, a parity-check matrix of the original multiplicity code). 

Let $r$ and $k\leq N$ be positive integers, where $ N = \sum_{j=1}^m (rs_j-1) $. We define the sets
\begin{equation*}
\begin{split}
\mathcal{M}^{N-k}_\uu & = \left\lbrace \xx^{\ii} : \ii \in \prod_{j=1}^m[0,(r-u_j)s_j -1],\; |\ii|\leq N-k-\sum_{j=1}^m u_j s_j \right\rbrace , \\
A^k_\uu & = \prod_{j=1}^m G_j(x_j)^{u_j} \cdot \mathcal{M}^{N-k}_{\uu},
\end{split}
\end{equation*}
for $ \uu = (u_1,\dots,u_m)\in \mathcal{B}_{\J_r^\perp}$, where $G_j(x_j) = \prod_{\alpha \in S_j} (x_j-\alpha) \in \F[x_j] $ for $j \in [m]$.


\begin{theorem}\label{t:dualcomoeval}
Setting $A^k=\bigcup_{\uu \in \mathcal{B}_{\J_r^\perp}} A^k_\uu$, we have
\begin{equation}
\mathcal{M}(S,r,k)^\perp = \left\lbrace \left( \left( g^{(\ii)}(\aaa) \right)_{\ii \in [0,r-1]^m \setminus \J_r^\perp} \cdot N_\aaa \right)_{\aaa \in S} :g\in \langle A^k \rangle_\F \right\rbrace,
\label{eq basis dual general}
\end{equation}
where $ [0,r-1]^m \setminus \J_r^\perp $ is ordered according to the graded lexicographic order. Moreover, for any graded monomial ordering $ \prec $ and any subset $B^k\subseteq A^k$ with $\ini_\prec(B^k)=\ini_\prec(A^k)$ and such that $\ini_\prec(f)\neq \ini_\prec(g)$ for any two distinct $f,g\in B^k$, then $\abs{B^k}=\dim (\mathcal{M}(S,r,k)^\perp) $, and a basis of $ \mathcal{M}(S,r,k)^\perp $ is given by
\begin{equation}
\left\lbrace \left( \left( g^{(\ii)}(\aaa) \right)_{\ii \in [0,r-1]^m \setminus \J_r^\perp} \cdot N_\aaa \right)_{\aaa \in S} :g\in B^k \right\rbrace .
\label{eq basis dual general basis}
\end{equation}
\end{theorem}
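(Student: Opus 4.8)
The plan is to reduce both assertions to the single algebraic identity $ \langle A^k \rangle_\F = V $, where $ V $ denotes the $ \F $-vector space of all $ g \in \F[\xx] $ with $ \deg(g) \leq N-k $, with $ \deg_{x_j}(g) \leq rs_j - 1 $ for all $ j \in [m] $, and with $ g \in I(S;\J_r^\perp) $ (note $ \J_r^\perp $ is decreasing and finite, being an intersection of two such sets). The first thing I would observe is that the vanishing condition ``$ g^{(\ii)}(\aaa) = 0 $ for all $ \ii \in \J_r^\perp $ and all $ \aaa \in S $'' appearing in Theorem~\ref{th dual multiplicity code} is precisely $ g \in I(S;\J_r^\perp) $ by Definition~\ref{def ideal of a set with multi}; hence the second description in Theorem~\ref{th dual multiplicity code} reads $ \mathcal{M}(S,r,k)^\perp = \Phi(V) $, where $ \Phi $ is the $ \F $-linear map sending $ g $ to $ ( (g^{(\ii)}(\aaa))_{\ii \in [0,r-1]^m \setminus \J_r^\perp} N_\aaa )_{\aaa \in S} $. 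Both parts of the theorem will then follow from $ \langle A^k \rangle_\F = V $ together with some bookkeeping of leading monomials.

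For $ \langle A^k \rangle_\F \subseteq V $ I would check each generator $ \xx^\ii \prod_{j=1}^m G_j(x_j)^{u_j} \in A^k_\uu $ directly: it lies in $ I(S;\J_r^\perp) $ because $ \prod_j G_j(x_j)^{u_j} $ is one of the Gr\"obner basis elements of $ I(S;\J_r^\perp) $ from Theorem~\ref{T:grobnerbasis} (as $ \uu \in \mathcal{B}_{\J_r^\perp} $) and $ I(S;\J_r^\perp) $ is an ideal, while the inequalities defining $ \mathcal{M}^{N-k}_\uu $ give exactly $ \deg(\xx^\ii \prod_j G_j^{u_j}) = |\ii| + \sum_j u_j s_j \leq N-k $ and $ \deg_{x_j}(\xx^\ii \prod_j G_j^{u_j}) = i_j + u_j s_j \leq rs_j - 1 $. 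For the reverse inclusion $ V \subseteq \langle A^k \rangle_\F $, I would fix any graded monomial order $ \prec $ and induct on $ \ini_\prec(g) $ (a well-order): given $ 0 \neq g \in V $ with $ \ini_\prec(g) = \xx^\bb $, since $ \xx^\bb \in \langle \ini_\prec(I(S;\J_r^\perp)) \rangle $ and $ \ini_\prec(\prod_j G_j(x_j)^{u_j}) = \prod_j x_j^{u_j s_j} $ (each $ G_j $ being monic of degree $ s_j $ in $ x_j $), the Gr\"obner basis property yields a $ \uu \in \mathcal{B}_{\J_r^\perp} $ with $ \xx^\bb = \xx^\cc \prod_j x_j^{u_j s_j} $; the constraints $ |\bb| \leq N-k $ and $ b_j \leq rs_j - 1 $ translate into $ \xx^\cc \in \mathcal{M}^{N-k}_\uu $ (in particular this forces $ u_j \leq r-1 $ for each $ j $), so $ \coef(g,\xx^\bb)\, \xx^\cc \prod_j G_j(x_j)^{u_j} \in \langle A^k \rangle_\F $ has the same leading term as $ g $. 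Subtracting it gives $ g' \in V $ (membership in $ I(S;\J_r^\perp) $ and both degree constraints are preserved) with $ \ini_\prec(g') \prec \xx^\bb $, and the induction closes. Observe that this reduction step also shows $ \ini_\prec(g) \in \ini_\prec(A^k) $ for every $ g \in V \setminus \{0\} $, so, combined with $ A^k \subseteq V $, the set of leading monomials realized by $ V $ equals $ \ini_\prec(A^k) $.

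With $ \langle A^k \rangle_\F = V $ established, Equation~(\ref{eq basis dual general}) is immediate. For the ``moreover'' part, I would first show $ \Phi $ is injective on $ V $: if $ \Phi(g) = 0 $, invertibility of each $ N_\aaa $ forces $ g^{(\ii)}(\aaa) = 0 $ for all $ \ii \in [0,r-1]^m \setminus \J_r^\perp $ and $ \aaa \in S $, which together with $ g \in I(S;\J_r^\perp) $ gives $ g \in I(S;\J_{\rr - \ones}) $; since also $ \deg_{x_j}(g) \leq rs_j - 1 $, Remark~\ref{remark footprint of box} places $ g $ in $ \langle \Delta_\prec(I(S;\J_{\rr-\ones})) \rangle_\F $, and the injectivity of $ \eta \circ \rho $ in Lemma~\ref{lemma footprint reduced form} forces $ g = 0 $. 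Hence $ \dim_\F \mathcal{M}(S,r,k)^\perp = \dim_\F V $. Now I use the standard fact that a space of polynomials has $ \F $-dimension equal to the number of distinct leading monomials it realizes: by the previous paragraph this number is $ |\ini_\prec(A^k)| $, and it equals $ |\ini_\prec(B^k)| = |B^k| $ since $ \ini_\prec(B^k) = \ini_\prec(A^k) $ as monomial sets and the elements of $ B^k $ have pairwise distinct leading monomials (so $ B^k $ is $ \F $-linearly independent and in bijection with $ \ini_\prec(B^k) $). Therefore $ \langle B^k \rangle_\F \subseteq V $ has dimension $ |B^k| = \dim_\F V $, hence $ \langle B^k \rangle_\F = V $; applying the injective linear map $ \Phi $, the set in (\ref{eq basis dual general basis}) is an $ \F $-linearly independent spanning subset of $ \mathcal{M}(S,r,k)^\perp $, i.e.\ a basis, and $ |B^k| = \dim_\F \mathcal{M}(S,r,k)^\perp $.

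The main obstacle I anticipate is the reverse inclusion $ V \subseteq \langle A^k \rangle_\F $: one must verify that the single reduction step respects, simultaneously, the total-degree bound $ N-k $ and all per-variable bounds $ rs_j - 1 $, which is exactly where the precise shape of $ \mathcal{M}^{N-k}_\uu $ and the combinatorics of $ \mathcal{B}_{\J_r^\perp} $ enter (and where one checks $ u_j \leq r-1 $ automatically). This is the multivariate, higher-multiplicity counterpart of the bookkeeping already carried out in the proof of Proposition~\ref{p:basedualm=2}, and the only genuinely new difficulty is keeping track of the box constraints attached to each $ \uu \in \mathcal{B}_{\J_r^\perp} $.
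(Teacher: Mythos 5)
Your proposal is correct. The first half (establishing $\langle A^k\rangle_\F = V$ and hence Eq.~(\ref{eq basis dual general})) is essentially the paper's argument: the direct check that each generator of $A^k_\uu$ lies in $I(S;\J_r^\perp)$ and respects both degree constraints, and the reverse inclusion via the Gr\"obner basis of $I(S;\J_r^\perp)$ from Theorem~\ref{T:grobnerbasis}; your leading-term reduction induction is just an unrolled form of the Euclidean division used in the paper, and your bookkeeping (that $c_j + u_js_j = b_j \leq rs_j-1$ and $|\cc| + \sum_j u_js_j = |\bb| \leq N-k$, forcing $\xx^\cc \in \mathcal{M}^{N-k}_\uu$) matches the degree estimates in the paper's division step.

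Where you genuinely diverge is the ``moreover'' part. The paper computes $|\ini_\prec(A^k)| = \dim(\mathcal{M}(S,r,k)^\perp)$ combinatorially: it introduces the monomial sets $T^k_\uu$, shows $\ini_\prec(A^k)=T^k$, and builds the explicit bijection $\xx^\ii \mapsto \xx^{(rs_1-1,\ldots,rs_m-1)-\ii}$ between $\Delta_\prec(I(S;\J_{\rr-\ones}))\setminus\Delta_\prec(I(S;\J_r^\perp))$ and $\Delta_\prec(I(S;\J_r))$, matching $T^k$ with the footprint monomials of degree $\geq k$ and invoking the dimension formula of Theorem~\ref{th multiplicity code dimension}; it then verifies linear independence of the codewords from $B^k$ separately. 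You instead prove that the map $\Phi$ is injective on $V$ (invertibility of $N_\aaa$ plus $g\in I(S;\J_r^\perp)$ gives $g\in I(S;\J_{\rr-\ones})$, and the per-variable bounds place $g$ in $\langle\Delta_\prec(I(S;\J_{\rr-\ones}))\rangle_\F$, so Lemma~\ref{lemma footprint reduced form} forces $g=0$), deduce $\dim \mathcal{M}(S,r,k)^\perp = \dim V$, and identify $\dim V$ with $|\ini_\prec(A^k)|$ via the standard fact that a finite-dimensional polynomial space has dimension equal to the number of leading monomials it realizes, the realized set being exactly $\ini_\prec(A^k)$ by your reduction step together with $A^k\subseteq V$. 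This is leaner: it avoids the bijection $\psi$ and any appeal to the dimension formula for $\mathcal{M}(S,r,k)$, and it yields the linear independence of the image of $B^k$ for free, since $B^k$ is then a basis of $V$ and $\Phi$ is injective. What the paper's route buys in exchange is an explicit combinatorial description of $\ini_\prec(A^k)$ as a ``reflected'' copy of the high-degree part of $\Delta_\prec(I(S;\J_r))$, which ties the parity-check construction back to the dimension computation of Section~\ref{sec dimension}; your route keeps that information implicit but is logically complete as written.
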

\begin{proof}
We follow the idea of the proof of Proposition \ref{p:basedualm=2}. 

First, the inclusion $ \supseteq $ in (\ref{eq basis dual general}) is straightforward, since any $ g \in \langle A^k \rangle_\F $ satisfies that $ \deg(g) \leq N-k $ and $ g^{(\ii)}(\aaa) = 0 $, for all $ \ii \in \J^\perp_r $ and all $ \aaa \in S $, by definition of Hasse derivatives.

We now prove the inclusion $ \subseteq $ in (\ref{eq basis dual general}). Let $g\in \F[\xx]_{\leq N-k}$ be such that $ \deg_{x_j}(g) \leq rs_j - 1 $, for $ j \in [m] $, and $g^{(\ii)}(\aaa) = 0 $, for all $ \ii \in \J_r^\perp $ and all $ \aaa \in S $ (recall Theorem \ref{th dual multiplicity code}). In other words, $g\in I(S;\J_r^\perp) $. Since the polynomials $ \prod_{j=1}^m G_j(x_j)^{u_j} $, for $ \uu = (u_1,\ldots, u_m) \in \mathcal{B}_{\J^\perp_r} $, form a Gr{\"o}bner basis of $I(S;\J_r^\perp)$ for the ordering $ \prec $ by Theorem \ref{T:grobnerbasis}, then by the Euclidean division algorithm \cite[Th. 2.3.3]{clo1}, we can write
$$
g=\sum_{\uu \in \mathcal{B}_{\J_r^\perp}}\left( f_\uu\prod_{j=1}^mG_j(x_j)^{u_j}\right),
$$
for polynomials $f_{\uu} \in \mathbb{F}[\xx]$ such that 
$$
\deg(f_{\uu})+\sum_{j=1}^m u_j s_j = \deg \left(f_\uu \prod_{j=1}^mG_j(x_j)^{u_j} \right)  \leq \deg(g) \leq N-k, \textrm{ and}
$$
$$ \deg_{x_j}(f_\uu) + \sum_{j=1}^m u_j s_j = \deg_{x_j} \left(f_\uu \prod_{j=1}^mG_j(x_j)^{u_j} \right) \leq \deg_{x_j}(g) \leq rs_j-1, $$
for $ j \in [m] $ (recall that $ \prec $ is graded). In other words, the monomials $ \xx^\ii $ appearing in the polynomials $ f_\uu $ satisfy 
$$ |\ii| \leq N-k - \sum_{j=1}^m u_js_j \quad \textrm{and} \quad 0 \leq i_j \leq (r-u_j)s_j - 1, $$
for $ j \in [m] $. Therefore, $ g \in \langle A^k \rangle_\F $ and we have proved Eq.~(\ref{eq basis dual general}). To prove that Eq.~(\ref{eq basis dual general basis}) gives a basis of $ \mathcal{M}(S,r,k)^\perp $, we first show that $ |\ini_\prec(A^k)| = \dim(\mathcal{M}(S,r,k)^\perp) $. Define
\begin{equation*}
\begin{split}
T^k_\uu & =\{ \xx^\ii : \abs{\ii} \leq N-k \textrm{ and } u_js_j \leq i_j \leq r s_j -1, \text{ for } j \in [m] \}, \quad \textrm{for } \uu \in \mathcal{B}_{\J_r^\perp}, \\
 T^k & =\bigcup_{\uu\in \mathcal{B}_{\J_r^\perp}}T^k_\uu.
\end{split}
\end{equation*} 
By construction, $\ini_\prec(A^k)=T^k$, so we only need to show $ |T^k| = \dim(\mathcal{M}(S,r,k)^\perp) $. Clearly, the map
$$ \begin{array}{ccc}
\psi : \Delta_\prec(I(S;\J_{\rr-\ones})) & \longrightarrow &\Delta_\prec(I(S;\J_{\rr-\ones})) \\
 \xx^\ii & \mapsto & \xx^{(rs_1-1,\ldots, rs_m-1) - \ii}
\end{array} $$
is a bijection, since $ \Delta_\prec(I(S;\J_{\rr-\ones})) = \{ \xx^\ii : 0 \leq i_j \leq rs_j - 1, \forall j \in [m] \} $ by Remark \ref{remark footprint of box}. 

We next show that $ \psi $ restricts to a bijection between $ \Delta_\prec(I(S;\J_{\rr-\ones})) \setminus \Delta_\prec(I(S;\J_r^\perp)) $ and $ \Delta_\prec(I(S;\J_r)) $. Take $ \xx^\ii \in \Delta_\prec(I(S;\J_{\rr-\ones})) \setminus \Delta_\prec(I(S;\J_r^\perp)) $. By definition, we have $ \xx^\ii \in  \ini_\prec(I(S;\J^\perp_r)) $, and therefore there exists $ \uu \in \mathcal{B}_{\J^\perp_r} $ such that $ \ini_\prec(G_1(x_1)^{u_1} \cdots G_m(x_m)^{u_m}) $ divides $ \xx^\ii $, by Theorem \ref{T:grobnerbasis}. In other words, $ u_js_j \leq i_j \leq rs_j - 1 $, for all $ j \in [m] $. If $ \kk = (rs_1-1,\ldots, rs_m-1) - \ii $, then $ 0 \leq k_j \leq (r-u_j)s_j - 1 $, for all $ j \in [m] $. Now, since $ \uu \in \mathcal{B}_{\J^\perp_r} $, then by definition, $ \uu \in \J_{\rr - \ones} \setminus \J^\perp_r $, and therefore we have that $ \rr - \uu - \ones \in \J_r $, as in the proof of Lemma \ref{lemma J_(m-1)(r-1)}. Assume that $ \xx^\kk \notin \Delta_\prec(I(S;\J_r)) $. Then by definition, $ \xx^\kk \in \ini_\prec(I(S;\J_r)) $. Hence there exists $ \vv \in \mathcal{B}_{\J_r} $ such that $ \ini_\prec(G_1(x_1)^{v_1} \cdots G_m(x_m)^{v_m}) $ divides $ \xx^\kk $, by Theorem \ref{T:grobnerbasis}. Thus $ v_js_j \leq k_j \leq (r-u_j)s_j -1 $, which implies $ v_j \leq r-u_j-1 $, for all $ j \in [m] $. However this cannot happen since $ \rr - \uu - \ones \in \J_r $, $ \vv \notin \J_r $ (because $ \vv \in \mathcal{B}_{\J_r} $) and $ \J_r $ is decreasing. Thus we conclude that $ \xx^\kk = \psi(\xx^\ii) \in \Delta_\prec(I(S;\J_r)) $. Finally, we have that
$$ \abs{\Delta_\prec(I(S;\J_{\rr-\ones}))}=|\J_{\rr- \ones}|\cdot |S|= (\abs{\J_r}+\abs{\J_r^\perp})|S|=\abs{\Delta_\prec(I(S;\J_r))}+\abs{\Delta_\prec(I(S;\J_r^\perp))}, $$
by Eq.~(\ref{eq size of footprint}). Since we have shown that $ \psi (\Delta_\prec(I(S;\J_{\rr-\ones})) \setminus \Delta_\prec(I(S;\J_r^\perp))) \subseteq \Delta_\prec(I(S;\J_r))$, and they both have the same size, we have proven that
$$ \begin{array}{ccc}
\psi : \Delta_\prec(I(S;\J_{\rr-\ones})) \setminus \Delta_\prec(I(S;\J_r^\perp)) & \longrightarrow & \Delta_\prec(I(S;\J_r)) \\
 \xx^\ii & \mapsto & \xx^{(rs_1-1,\ldots, rs_m-1) - \ii}
\end{array} $$
is a bijection. Similarly to the computations above, we have that $ T^k = \{ \xx^\ii \in \Delta_\prec(I(S;\J_{\rr-\ones})) \setminus \Delta_\prec(I(S;\J_r^\perp)) : |\ii| \leq N-k \} $, and therefore it is clear that
$$
\psi(T^k) = B_{\geq k} = \{  \xx^\ii \in \Delta_\prec (I(S;\J_r)) :  \abs{\ii}\geq k \}.
$$
As a consequence, we can finally conclude that
$$ |\ini_\prec(A^k)| = |T^k| = |B_{\geq k}| = \abs{\J_r} \cdot |S| - \dim (\mathcal{M}(S,r,k))=\dim (\mathcal{M}(S,r,k)^\perp), $$
where the length of the codes is $ tn = \abs{\J_r} \cdot |S| = |\Delta_\prec (I(S;\J_r))| $ (see Eq.~(\ref{eq size of footprint})).

Coming back to the set in Eq.~(\ref{eq basis dual general basis}), we have shown that $ \abs{B^k}= |\ini_\prec(A^k)| =\dim (\mathcal{M}(S,r,k)^\perp) $. Thus, we only need to prove that the vectors
$$
 \left\{\left( \left( g^{(\ii)}(\aaa) \right)_{\ii \in [0,r-1]^m \setminus \J_r^\perp} \cdot N_\aaa \right)_{\aaa \in S} :g\in B^k \right\}
$$
are $ \F $-linearly independent. Set $ B^k = \{ g_1, \ldots, g_M \} $, with $ M = |B^k| $ and assume that there exist $ \lambda_1, \ldots, \lambda_M \in \F $ such that 
$$
\left( g^{(\ii)}(\aaa) \right)_{\ii \in [0,r-1]^m \setminus \J_r^\perp} \cdot N_\aaa=\mathbf{0},
$$
for all $\aaa\in S$, where $ g = \sum_{i=1}^M \lambda_i g_i $. Since $ N_\aaa $ is invertible (see Theorem \ref{th dual multiplicity code}), then $ g^{(\ii)}(\aaa) = 0 $, for all $ \aaa \in S $ and all $ \ii \in [0,r-1]^m \setminus \J_r^\perp $. Since the same holds for all $ \ii \in \J^\perp_r $, then $g\in I(S;\J_{\rr-\ones})$. However, if $ g \neq 0 $, then $ \ini_\prec(g) = \ini_\prec(g_j) $, for some $ j \in [M] $ (since the initials of $ g_1, \ldots , g_M $ are all distinct by hypothesis). This implies that 
$$ \ini_\prec(g) = \ini_\prec(g_j) \in \ini_\prec(B^k) \subseteq \ini_\prec(A^k) \subseteq \Delta_\prec (I(S;\J_{\rr-\ones})) , $$ 
which contradicts that $ \ini_\prec(g) \in \ini_\prec(I(S;\J_{\rr-\ones})) $. Hence, $ g = \sum_{i=1}^M \lambda_i g_i = 0 $, but then it must hold that $ \lambda_1 = \ldots = \lambda_M = 0 $ since $ g_1, \ldots, g_M $ all have distinct initial monomials.
\end{proof}

To conclude, we observe that we may give the following lower bound on the minimum distance of the dual when $ |S_1|= \cdots = |S_m| $.

\begin{proposition} \label{prop lower bound dist mult SZ dual}
If $ S = S_1 \times \cdots \times S_m $, where $ S_i \subseteq \F $, $ s = |S_1| = \ldots = |S_m| $, $ n = s^m = |S| $ and $ r $ is a positive integer, then
$$ \dd \left( \mathcal{M}(S,r,k)^\perp \right) \geq \left( 1 - \frac{m(rs-1)-k}{rs} \right) n . $$
\end{proposition}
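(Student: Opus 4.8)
The plan is to combine the explicit description of the dual code obtained in Theorem~\ref{th dual multiplicity code} with the Schwartz--Zippel bound with multiplicities (Lemma~\ref{lemma SZ bound}). First I would fix an arbitrary nonzero codeword $\cc \in \mathcal{M}(S,r,k)^\perp$ and, using Theorem~\ref{th dual multiplicity code}, write it as
$$ \cc = \left( \left( g^{(\ii)}(\aaa) \right)_{\ii \in [0,r-1]^m \setminus \J_r^\perp} \cdot N_\aaa \right)_{\aaa \in S} $$
for some $g \in \F[\xx]_{\leq N-k}$ with $g^{(\ii)}(\aaa) = 0$ for all $\ii \in \J_r^\perp$ and all $\aaa \in S$; since $\cc \neq \mathbf{0}$ we must have $g \neq 0$, and $\deg(g) \leq N-k = m(rs-1)-k$.

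The key step is to translate the condition ``the $\aaa$-block of $\cc$ vanishes'' into a multiplicity condition on $g$ at $\aaa$. Because each matrix $N_\aaa$ is invertible (Theorem~\ref{th dual multiplicity code}), the $\aaa$-block of $\cc$ is zero if and only if $g^{(\ii)}(\aaa) = 0$ for all $\ii \in [0,r-1]^m \setminus \J_r^\perp$; combined with the standing vanishing on $\J_r^\perp$, this is equivalent to $g^{(\ii)}(\aaa) = 0$ for all $\ii \in [0,r-1]^m$. Since $\J_r = \{\ii : |\ii| < r\} \subseteq [0,r-1]^m$ (as $|\ii| < r$ forces $i_j \leq r-1$ for each $j$), this forces $m(g,\aaa) \geq r$ in the sense of Definition~\ref{def multi codes}. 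Hence, setting $Z = \{\aaa \in S : \text{the }\aaa\text{-block of }\cc\text{ is }\mathbf{0}\}$, we obtain $Z \subseteq \{\aaa \in S : m(g,\aaa) \geq r\}$.

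Finally I would count using Lemma~\ref{lemma SZ bound} (applicable since $s_1 = \cdots = s_m = s$):
$$ r\,|Z| \leq \sum_{\aaa \in Z} m(g,\aaa) \leq \sum_{\aaa \in S} m(g,\aaa) \leq \deg(g)\,s^{m-1} \leq (m(rs-1)-k)\,s^{m-1}, $$
so that $|Z| \leq (m(rs-1)-k)s^{m-1}/r$, and therefore
$$ \w(\cc) = n - |Z| \geq s^m - \frac{(m(rs-1)-k)s^{m-1}}{r} = \left(1 - \frac{m(rs-1)-k}{rs}\right) n . $$
As $\cc$ was an arbitrary nonzero codeword, this yields the claimed lower bound on $\dd(\mathcal{M}(S,r,k)^\perp)$. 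I expect the only delicate point to be the bookkeeping in the middle paragraph, namely identifying the (complement of the) support of $\cc$ with a subset of $\{\aaa : m(g,\aaa) \geq r\}$, which relies precisely on the invertibility of $N_\aaa$ and on the inclusion $\J_r \subseteq [0,r-1]^m$; once that is settled, the bound is a direct application of Schwartz--Zippel.
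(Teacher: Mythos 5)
Your proposal is correct and follows essentially the same route as the paper: describe a dual codeword via Theorem~\ref{th dual multiplicity code}, use invertibility of $N_\aaa$ to turn a vanishing block into $g^{(\ii)}(\aaa)=0$ for all $\ii\in[0,r-1]^m$ (hence $m(g,\aaa)\geq r$), and apply the Schwartz--Zippel bound of Lemma~\ref{lemma SZ bound} to $\deg(g)\leq m(rs-1)-k$. Your write-up just makes the bookkeeping (the set $Z$ of zero blocks and the inclusion $\J_r\subseteq[0,r-1]^m$) more explicit than the paper does.
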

\begin{proof}
Let $ N = \sum_{j=1}^m (r|S_j|-1) = m(rs-1) $ and $ g \in \F[\xx]_{\leq N-k} $ be such that $ g^{(\ii)}(\aaa) = 0 $, for all $ \ii \in \J^\perp_r $ and all $ \aaa \in S $. Now, if $ \aaa \in S $ is such that $ g^{(\ii)}(\aaa) = 0 $, for all $ \ii \in [0,r-1]^m \setminus \J^\perp_r $, then $ m(g,\aaa) \geq r $. Thus we deduce that the codeword associated to $ g $ in $ \mathcal{M}(S,r,k)^\perp $ (see Theorem \ref{th dual multiplicity code}) has folded weight at least $ |S| - \frac{(N-k)s^{m-1}}{r} $ by Lemma \ref{lemma SZ bound}, and the result follows.
\end{proof}

\section*{Acknowledgements}

This work was started while H. H. L{\'o}pez and U. Mart{\'i}nez-Pe\~{n}as were visiting the Simons Institute for the Theory of Computing (University of California, Berkeley), and while H. H. L{\'o}pez and E. Camps Moreno were visiting the IMUVa-Mathematics Research Institute, University of Valladolid, Spain.

This work has been partially supported by Grant PID2022-138906NB-C21 funded by MICIU/AEI/ 10.13039/501100011033 and by ERDF/EU. Hiram H. L\'opez was partially supported by the NSF grant DMS-2401558. Rodrigo San-Jos\'e was also partially supported by Grant FPU20/01311 funded by the Spanish Ministry of Universities.

\bibliographystyle{plain}

\end{document}